\documentclass[reqno,a4paper]{amsart}

\usepackage{amssymb}

\setlength{\marginparwidth}{25mm}

\newcounter{mnotecount}[section]

\usepackage{mathrsfs}
\usepackage{eucal}

\usepackage{ipa} 

\DeclareMathOperator{\tho}{\text{\rm \thorn}}
\DeclareMathOperator{\edt}{\text{\rm \eth}}

\usepackage{color}
\usepackage{graphicx}
\numberwithin{equation}{section}

\usepackage[linktocpage]{hyperref}

\allowdisplaybreaks[2]

\theoremstyle{plain}
\newtheorem{theorem}{Theorem}[section]
\newtheorem{proposition}[theorem]{Proposition}
\newtheorem{lemma}[theorem]{Lemma}
\newtheorem{corollary}[theorem]{Corollary}

\newtheorem{definition}[theorem]{Definition}
\newtheorem{remark}[theorem]{Remark}
\newtheorem{example}[theorem]{Example}

\setlength{\textwidth}{148mm}		% Width of text on page 148
\setlength{\textheight}{235mm}		% height of text on page 235
\setlength{\topmargin}{-5mm}		% Margin at top of page -5
\setlength{\oddsidemargin}{5mm}	% Odd page sidemargin 5
\setlength{\evensidemargin}{5mm}	% Even page sidemargin 5

\hyphenation{Alem-bertian}

\renewcommand{\Re}{{\mathbb R}}         %the real numbers

\newcommand{\Co}{{\mathbb C}}         %the complex numbers
               %left angle bracket
               %right angle bracket
\newcommand{\half}{\frac{1}{2}}         %half
        %third
        %quarter
\newcommand{\eps}{\epsilon}

\newcommand{\Mink}{\mathbb M}

\newcommand{\NatNum}{\mathbb N}
   
\newcommand{\Mcal}{\mathcal M}

\newcommand{\sDiv}{\mathscr{D}}
\newcommand{\sCurl}{\mathscr{C}}
\newcommand{\sCurlDagger}{\mathscr{C}^\dagger}
\newcommand{\sTwist}{\mathscr{T}}
\newcommand{\sExt}{\mathscr{E}}

\newcommand{\SL}{\mathrm{SL}}
\newcommand{\SO}{\mathrm{SO}}
\newcommand{\SU}{\mathrm{SU}}

\newcommand{\Spin}{\text{Spin}}

\newcommand{\KillSpin}{\mathcal{KS}}

\newcommand{\PetrovD}{D}
\newcommand{\PetrovN}{N}
\newcommand{\PetrovO}{O}

\newcommand{\geodenergy}{\mathbf{e}}
\newcommand{\geodmass}{\mathbf{\mu}}
\newcommand{\geodcarter}{\mathbf{k}}

\newcommand{\NPl}{l}
\newcommand{\NPn}{n}
\newcommand{\NPm}{m}
\newcommand{\NPmbar}{\bar m}

\newcommand{\KSigma}{\Sigma} 
\newcommand{\KDelta}{\Delta}

\newcommand{\Lie}{\mathcal L} 

\newcommand{\SpaceSlice}{\Sigma} 
\newcommand{\Spaceh}{h}
\newcommand{\SpaceK}{k}
\newcommand{\ordo}{o}
\newcommand{\met}{g} 

\newcommand{\ba}{\mathbf{a}}
\newcommand{\bb}{\mathbf{b}}
\newcommand{\bc}{\mathbf{c}}
\newcommand{\bA}{\mathbf{A}}
\newcommand{\bB}{\mathbf{B}}
\newcommand{\bC}{\mathbf{C}}

\newcommand{\SymSpin}{S}
\newcommand{\SymSpinSec}{\mathcal S}

\newcommand{\Dnormal}{\nabla_\tau} 
\newcommand{\DSen}{\nabla}

\newcommand{\GenVec}{\nu} % general vector field

\newcommand{\EMTensorT}{X}
\newcommand{\etavarphi}{\varsigma}

\newcommand{\DiffCurlyRTilde}{{\tilde{\mathcal{R}}'}{}}

            % Morawetz vector field
   
             % MMTT correction

\newcommand{\newcommandMG}[3]{\newcommand{#1}{#3}} % A notation
\newcommandMG{\fnMna} {{f_{1}}} {z}
\newcommandMG{\fnMnb} {{f_{2}}} {w}
\newcommandMG{\fnMca} {f_{1,1}}   {z_1}
\newcommandMG{\fnMcb} {f_{2,1}}   {w_1}
\newcommandMG{\fnMda} {f_{1,2}}   {z_2}
\newcommandMG{\fnMdb} {f_{2,2}}   {w_2}

\newcommand{\rp}{r_+}

\newcommand{\dr}{\partial_r}

\newcommand{\dt}{\partial_t}
\newcommand{\dtheta}{\partial_\theta}
\newcommand{\dphi}{\partial_\phi}

\newcommand{\di}{\mathrm{d}} %{\text{d}}

\newcommand{\Geodesic}{\gamma}

\newcommand{\GeodesicEnergy}{\boldsymbol{e}}
\newcommand{\GeodesicLz}{\boldsymbol{\ell_z}}
\newcommand{\GeodesicQ}{\boldsymbol{q}}
\newcommand{\TensorQ}{Q}

\newcommand{\GenEnergyGeodesic}[1]{e_{#1}}
\newcommand{\rorbit}{r_{o}} %aka \rRandom
\newcommand{\vecMGeodesic}{A}
\newcommand{\fnMrGeodesic}{\mathcal{F}}
\newcommand{\fnMrMorawetz}{\mathcal{F}}
\newcommand{\fnMpGeodesic}{q_{\text{reduced}}}

\newcommand{\gMetric}{g}

\newcommand{\vecMprimary}{A}
\newcommand{\scalMprimary}{q}
\newcommand{\vecMsecondary}{B}
\newcommand{\scalMsecondary}{s}

\newcommand{\HypersurfaceGeneral}{\Sigma}

\newcommand{\vecfont}[1]{#1}

\newcommand{\vecX}{\vecfont{X}}

\makeatletter
  \def\moverlay{\mathpalette\mov@rlay}
  \def\mov@rlay#1#2{\leavevmode\vtop{%
     \baselineskip\z@skip \lineskiplimit-\maxdimen
     \ialign{\hfil$#1##$\hfil\cr#2\crcr}}}
\makeatother

\newcommand{\squareTME}{\moverlay{\square\cr {\scriptscriptstyle \mathrm T}}}

\title{Spin geometry and conservation laws in the Kerr spacetime}

\author[L. Andersson]{Lars Andersson} \email{laan@aei.mpg.de}
\address{Albert Einstein Institute, Am M\"uhlenberg 1, D-14476 Potsdam,
  Germany 
\and
Department of Mathematics, Royal Institute of Technology, SE-100 44 Stockholm, Sweden
}

\author[T. B\"ackdahl]{Thomas B\"ackdahl} \email{t.backdahl@ed.ac.uk}
\address{The School of Mathematics, University of Edinburgh, James Clerk Maxwell Building, 
Peter Guthrie Tait Road, Edinburgh
EH9 3FD, UK}

\author[P. Blue]{Pieter Blue} \email{P.Blue@ed.ac.uk}
\address{The School of Mathematics and the Maxwell Institute, University of Edinburgh, James Clerk Maxwell Building, 
Peter Guthrie Tait Road, Edinburgh
EH9 3FD,UK}

\sloppy %This is required to avoid an overfull hox being 121pts

%\date{\today \quad {\tt file: \jobname{.tex}}} 
\date{April 8, 2015}

\setcounter{tocdepth}{1}

 %This should be default, but it
				%used (\theenumi)

\usepackage{comment}
\excludecomment{TOOLONG}

\begin{document} 

 \begin{abstract}

In this paper we will review some facts, both classical and recent, concerning the geometry and analysis of the Kerr and related black hole spacetimes. This includes the analysis of test fields on these spacetimes. Central to our analysis is the existence of a valence $(2,0)$ Killing spinor, which we use to construct symmetry operators and conserved currents as well as a new energy momentum tensor for the Maxwell test fields on a class of spacetimes containing the Kerr spacetime. We then outline how this new energy momentum tensor can be used to obtain decay estimated for Maxwell test fields. An important motivation for this work is the black hole stability problem, where fields with non-zero spin present interesting new challenges. The main tool in the analysis is  the 2-spinor calculus, and for completeness we introduce its main features.

\end{abstract}

\maketitle

\tableofcontents

\section{Introduction} \label{sec:intro}
\begin{TOOLONG}  
One of the goals of Einstein in introducing the general theory of
relativity was to find an explanation for the anomalous precession, and this
was also one of the first tests he put his new theory to. By calculating 
the orbits of test particles in the gravitational
field of a central body, using an 
approximate solution of the Einstein equations, he was able to give 
satisfactory explanation to the above mentioned anomaly. 
\end{TOOLONG} 
In the same month as Einstein's theory appeared, Karl Schwarzschild published
an exact 
and explicit solution of the Einstein vacuum equations describing
the gravitational field of a spherical body at rest. 
In analyzing Schwarzschild's solution, one finds that if the central body
is sufficiently concentrated, light emitted from its surface cannot reach an
observer at infinity. This phenomenon led John Archibald 
Wheeler to coin the term 
\emph{black hole} for this
type of object. 

It would take until the late 1950's before the global structure of
the Schwarzschild solution was completely understood 
and until the early 1970's before the idea that black holes exist in nature 
became widely accepted in the astrophysical
community. 
The reasons
for this can be traced to increasing observational evidence for compact
objects including neutron stars and quasars as well as an increasing
theoretical understanding of black holes. 

One of the most important
developments on the theoretical side was the 
discovery in 1963 by
Roy Kerr \cite{kerr:1963PhRvL..11..237K} of a new
explicit family of asymptotically flat solutions of the vacuum Einstein equations describing 
a stationary, rotating black hole. The Kerr family of solutions has
only two parameters, mass and azimuthal angular momentum, and includes the
Schwarzschild solution as a special case in the limit of vanishing angular
momentum. 

Assuming some technical conditions, any stationary asymptotically flat, stationary black hole spacetime is expected to belong to the Kerr family, a fact which is known to hold in the real-analytic case. Further, the Kerr black hole is expected to be stable in the sense that a small perturbation of the Kerr space time settles down asymptotically to a member of the Kerr family. In order to establish the astrophysical relevance of the Kerr solution, it is vital to find rigorous proofs of both of these conjectures, and a great deal of work has been devoted to these and related problems. 

In general, the orbits of test particles in the spacetime
surrounding a rotating object will be chaotic. However, in 1968
Brandon Carter \cite{carter:1968PhRv..174.1559C} discovered that the Kerr spacetime admits a conserved quantity not present in general
rotating spacetimes, 
known as the Carter constant,  and showed that the 
geodesic equation in the Kerr
spacetime can be integrated. 
This has allowed a detailed analysis of the behaviour of light and matter near a Kerr black hole, which has contributed substantially to the acceptance of the Kerr black hole as a fundamental object in astrophysics. 

The presence of the Carter constant is a manifestation of the separability and integrability properties of the Kerr spacetime. As discovered by Teukolsky \cite{teukolsky:1972PhRvL..29.1114T, teukolsky:1973}, the equations for test fields on the Kerr spacetime, including the scalar wave equation, the Dirac-Weyl, Maxwell and linearized gravity, are governed by a wave equation which admits separation of variables. These properties of the Kerr spacetime are analogues of the separability properties of the St\"ackel potentials which have been studied since the 19th century in Newtonian physics, and which have important applications in astrophysics. 

The Carter constant was shown by Walker and Penrose \cite{walker:penrose:1970CMaPh..18..265W} to originate in a Killing tensor, a notion originating in the work of Killing in the 1890s, cf. \cite{killing:1892}, and their and later work by Carter and others showed that the closely related Killing spinors are at the foundation of many of the remarkable properties of Kerr and other spacetimes admitting such objects. 

Although we shall here focus on symmetries and conservation laws related to the integrability properties of the Kerr and related spacetimes, the black hole stability problem is a fundamental motivation for this work. See section \ref{sec:BHstab} below for further discussion.  

\subsection{The Kerr solution} 
The Kerr metric describes a family of stationary, axisymmetric, asymptotically flat vacuum 
spacetimes, parametrized by ADM mass $M$ and angular momentum per unit mass $a$. 
In Boyer-Lindquist coordinates $(t,r,\theta,\phi)$, the Kerr metric takes the form\footnote{Here we have given the form of the metric has signature $+---$, which is most convenient when working with spinors, and which we shall use in this paper.} 
\begin{align}
\met_{ab}={}&\frac{(\Delta - a^2 \sin^2\theta) dt_{a} dt_{b}}{\Sigma}
 -  \frac{\Sigma dr_{a} dr_{b}}{\Delta}
 -  \Sigma d\theta_{a} d\theta_{b}
 - \frac{\sin^2\theta \bigl((a^2 + r^2)^2 - a^2 \sin^2\theta \Delta\bigr) d\phi_{a} d\phi_{b}}{\Sigma}\nonumber\\
& + \frac{2 a \sin^2\theta (a^2 + r^2 -  \Delta) dt_{(a}d\phi_{b)}}{\Sigma},\label{eq:met}
\end{align}
where
$\Delta = a^2 - 2 M r + r^2$ and $\Sigma = a^2 \cos^2\theta + r^2$.
The volume form is 
$\Sigma \sin\theta dt \wedge dr \wedge d\theta \wedge d\phi $.
For $|a|\leq M$, the Kerr spacetime contains a black hole, with event horizon at $r=r_+ \equiv M + \sqrt{M^2 - a^2}$.  In the subextreme case $|a| < M$, the surface gravity $\kappa = (r_+ - M)/(r_+^2 +a^2)$ is non-zero and the event horizon is non-degenerate. See \cite{ONeill,poisson:toolkit} for background on the geometry of the Kerr spacetime, see also \cite{FrolovNovikov}.

The Kerr metric admits two Killing vector fields $\xi^a = (\partial_t)^a$ (stationary) and 
$
(\partial_\phi)^a$ (axial). Although the stationary Killing field $\xi^a$ is timelike near infinity, since $\met(\partial_t,\partial_t)\rightarrow 1$ as $r\rightarrow
\infty$, $\xi^a$ becomes spacelike for $r$ sufficiently small,
when $1-2M/\KSigma<0$. In the Schwarzschild case $a=0$, this occurs at
the event horizon $r=2M$. However, for a rotating Kerr black hole with
$0<|a|\leq M$, there is an ergoregion outside the
event horizon where $\partial_t$ is spacelike. 
In the ergoregion, null and
timelike geodesics can have negative energy. Physically, it is
expected this means energy can be extracted from a rotating Kerr black
hole via the Penrose process, see \cite{finster:etal:energyextraction:MR2486663} and references therein.

The Kerr spacetime is expected to be the unique stationary, vacuum, asymptotically flat spacetime containing a non-degenerate black hole, see \cite{mars:2000CQGra..17.3353M,2013arXiv1304.0487A} and references therein, and is further expected to be dynamically stable. In fact, the scenario used by Penrose \cite{penrose:1973NYASA.224..125P} to motivate the important conjecture now known as the Penrose inequality\footnote{The Riemannian case of the Penrose inequality has been proved by Huisken and Ilmanen \cite{huisken:ilmanen:MR1916951} and Bray \cite{bray:MR1908823}.} involves, together with the weak cosmic censorship conjecture, the idea that the maximal vacuum Cauchy development of generic asymptotically flat vacuum data is asymptotic to a  Kerr spacetime.

Although a proof of uniqueness of Kerr is known for the real analytic case, and substantial progress on the uniqueness problem without this assumption has been made, the general case is still open. Similarly, the problem of dynamical stability of the Kerr spacetime has motivated a great deal of classical work exploiting the separability of the geometric field equations on Kerr, see eg. \cite{chandrasekhar:MR1210321,finster:etal:MR2525736} and references therein. This work however did not lead to pointwise decay estimates let alone with rates as one expects are needed to deal with the full nonlinear stablity problem. During the last decade, there has therefore been an intense focus on proving such estimates and progress has been made on proving such estimates for the wave, Dirac-Weyl, and Maxwell test fields on the Kerr spacetime, see \cite{andersson:blue:0908.2265,andersson:blue:2013arXiv1310.2664A} and references therein.  
At present, such estimates are not known for the equations of linearized gravity on Kerr. 

\subsection{Special geometry} 
A key fact concerning the Kerr spacetime, is that in addition to possessing the two Killing symmetries corresponding to stationarity and axial symmetry, the Kerr spacetime is algebraically special, with two repeated principal null directions for the Weyl tensor, i.e. it is of Petrov type $\PetrovD$.
This fact is closely related to the existence of the fourth constant of the motion for geodesics, discovered by Carter, as well as symmetry operators and separability properties for field equations in the Kerr spacetime. 
Algebraically special spaces have been the subject of intense study in the Lorentzian case, see for example \cite{stephani:etal:2009esef.book.....S}.  Although the Petrov classification has been extended to the Riemannian case \cite{Karlhede:1986}, see also \cite{hacyan:1979PhLA...75...23H,Batista:2012}, it has not played such an important role there.

In the Riemannian case, the special geometries which  have been most widely studied are the spaces with special holonomy. This class contains many of the most important examples, such as the Calabi-Yau and $G_2$ spaces. However, the Kerr black hole spacetime, arguably one of the most important Lorentz geometries and a central object in the present paper, does not have special holonomy, as can be seen from the fact that it has type $\PetrovD$\footnote{This is true also for the Riemannian signature version of the Kerr geometry.}. 
An important consequence of the algebraically special nature of the Kerr spacetime is that it admits a Killing spinor (or more properly, spin-tensor) of valence $(2,0)$, see section \ref{sec:prel}. As will be explained below, this fact implies the existence of symmetry operators and conserved currents. These symmetries may be called \emph{hidden} in the sense that they cannot be represented in terms of the Killing vector fields of the Kerr spacetime.

Riemannian spaces with special holonomy are characterized by the existence of parallel spinors \cite{M.Y.Wang:1989:MR1029845} or Killing spinors \cite{Baer:MR1224089}, a fact which extends also to Lorentzian spaces with special holonomy. The existence of a parallel spinor in the Riemannian case implies stability \cite{dai:wang:wei:MR2178660} in the sense of non-negativity of the spectrum of the Lichnerowicz Laplacian, a fact which applies to Calabi-Yau as well as $G_2$ spaces.  This fact is very closely related to the representation of linearized perturbations of spaces with parallel spinors discussed in  \cite{M.Y.Wang:1991:MR1129331}.  
Issues of stability are considerably more subtle in the Lorentzian case.

In a Lorentzian 4-manifold, the Hodge star operator acting on 2-forms has eigenvalues $\pm i$, while in a Riemannian 4-manifold, it has eigenvalues $\pm 1$. Hence, in the Lorentzian case, a real 2-form corresponds to a complex anti-self dual 2-form, while in the Riemannian case, a real 2-form may be split into self dual and anti-self dual parts. For this reason, there is no counterpart in the Lorentzian case to spaces with self-dual Weyl tensor, which form an important class of Riemannian 4-manifolds, containing e.g.  $K3$-surfaces and Gibbons-Hawking metrics. 
The just mentioned properties of the Hodge star are also closely related to the fact that the spin group in four dimensions with Lorentz signature is $\SL(2,\Co)$, with spin representations $\Co^2$ and $\bar \Co^2$, while in Riemannian signature, the spin group is $\SU(2) \times \SU(2)$ which acts on $\Co^2 \times \Co^2$ with independent action in each factor. 

The correspondence between spinors and tensors provides a particularly powerful tool in dimension four. In Lorentzian signature, the tensor product $\Co^2 \otimes \bar \Co^2$ of the two inequivalent spinor representations is naturally identified with the complexified Minkowski space. 
A similar situation obtains in the four dimensional Riemannian case with respect to the tensor product of the spin spaces $\Co^2 \otimes \Co^2$. Systematically decomposing expressions into their irreducible components  gives an effective tool for investigating the conditions for the existence of symmetry operators for field equations and conserved currents on Lorentzian 4-dimensional spacetimes. 

The SymManipulator package \cite{Bae11a}, which has been developed by one of the authors (T.B.) for the Mathematica based symbolic differential geometry suite xAct \cite{xAct}, exploits in a systematic way the above mentioned decompositions for the case of Lorentzian signature, and allows one to carry out investigations which are not feasible to do by hand. This has allowed the authors in recent work \cite{ABB:symop:2014CQGra..31m5015A} to complete and simplify the classification of second order symmetry operators and conserved currents for the spin-$s$ field equations for spins $0, 1/2, 1$ in general spacetimes. 

\subsection{Black hole stability} \label{sec:BHstab} 
The Kerr spacetime is expected to be dynamically stable, in the sense that the maximal development of Cauchy data close to Kerr data tend asymptotically in the future to a member of the Kerr family. 
The Black Hole Stability problem is to prove the just mentioned stability statement. This is one of the most important open problems in general relativity and has been the subject of intense work for the last decades.

Much of the work motivated by the Black Hole Stability problem, in particular during the 21st century has been directed towards understanding model problems, in particular to prove boundedness and decay in time for test fields on the Kerr spacetime, as well as on spacetimes which are asymptotic to Kerr in a suitable sense. For the case of scalar fields, i.e. solutions of the wave equation on the Kerr spacetime, these problems are now well understood, see \cite{andersson:blue:0908.2265,TataruTohaneanu,dafermos:rodnianski:etal:2014arXiv1402.7034D}. 

The full non-linear stability problem, however, has some features which are not present in the case of scalar fields. The Einstein equations have gauge symmetry in the form of diffeomorphism invariance (general covariance) and hence it is necessary to extract a hyperbolic system, either by performing a gauge reduction, or by extending the Einstein system. In addition to the gauge ambiguity, there is what one may term the moduli degrees of freedom of Kerr black hole spacetimes. 
Restricting our considerations to a black hole at rest with respect to an observer near infinity, the moduli space is parametrized by the Kerr parameters $a,M$. 
As mass and angular momentum is lost by radiation through null infinity, in the expected scenario of a maximal Cauchy development asymptotic to a Kerr black hole, the ``final'' parameters cannot be calculated from the given Cauchy data without actually solving the full Cauchy problem. 

We have a similar, but simpler situation if we consider the black hole stability problem for axi-symmetric spacetimes. 
In this case, the angular momentum is given by a Komar charge integral which is conserved, and hence the angular momentum is known a priori from the Cauchy data. In the case of zero angular momentum, the final state therefore must be a Schwarzschild black hole, and hence (disregarding boosts, translations etc.) the moduli space is reduced to having only one parameter, $M$. Also in this case, energy is lost through radiation, and the mass of the final black hole state cannot be determined a priori. The stability of the Schwarzschild black hole for the Einstein-scalar field system in spherical symmetry has been proved by Christodoulou \cite{christodoulou:MR885564}. 

Infinitesimal variations of the moduli parameters correspond to solutions of the linearized Einstein equations on the Kerr background
which do not disperse and can hence be described as non-radiating modes. 
The linearized Einstein equation is the equation for a field of spin $2$ which is relevant in this context.   
In order to prove dispersion for the ``radiating'' part of the linearized gravitational field, it is therefore necessary to eliminate these modes. 

The same phenomenon is present already in the case of the spin-1 or Maxwell field equation. 
For the case of Kerr, the domain of outer communication is diffeomorphic to the exterior of a cylinder in $\Re^4$. 
It follows that a source-free Maxwell field on the Kerr background can carry electric and magnetic charges.
The charge integrals are conserved, and hence a Maxwell field with non-zero charge cannot disperse. Similarly, for linearized gravity, the linearized mass and angular momentum correspond to conserved charge integrals, and hence solutions of linearized gravity with nonvanishing such charges cannot disperse, see \cite{aksteiner:andersson:2013CQGra..30o5016A}. 

This means that for the Maxwell field and for linearized gravity, it is not possible to prove dispersive (Morawetz)  estimates except by using a method which eliminates those solutions which ``carry'' the non-radiating modes. One approach to this is to make use of the linearity of the equations and explicitly subtract a suitable non-radiating solution so that the remainder has zero charges and will disperse. Another, and perhaps more direct approach is to use a projection which eliminates the ``non-radiating'' modes. Both for Maxwell and for linearized gravity on the Kerr background, such a projection can be found. 

In addition to the just mentioned difficulties, which are due to the non-trivial geometry of black hole spacetimes, the quadratic nature of the non-linearity in the Einstein equation makes it necessary to exploit cancellations in order to prove non-linear stability. This played a central role in the proofs of the nonlinear stability of Minkowski space by Christodoulou and Klainerman \cite{1993gnsm.book.....C} and by Lindblad and Rodnianski \cite{2005CMaPh.256...43L}, and related ideas must be included in any successful approach to the black hole stability problem. 

\subsection*{Overview of this paper} 
In section \ref{sec:prel} we introduce some background for the analysis in this paper, including some material on spin geometry and  
section \ref{sec:specialgeom} contains some material on  algebraically special spacetimes and spacetimes with Killing spinors.  
In section \ref{sec:kerrspacetime} we discuss some aspects of the Kerr geometry, the main example of the phenomena and problems discussed in this paper, in more detail. A new characterization of Kerr from the point of view of Killing spinors, fitting the perspective of this paper, is given section \ref{sec:characterization}. 
Section \ref{sec:hidsym} collects some results on symmetry operators and conserved currents due to the authors. The discussion of symmetry operators follows the paper \cite{ABB:symop:2014CQGra..31m5015A} while the results on conserved currents is part of ongoing work. A complete treatment will appear in \cite{ABB:currents}. The ideas developed in section \ref{sec:hidsym} is applied to the Teukolsky system in section \ref{sec:teuk}, where a new conserved stress-energy tensor for the Maxwell field is given, which can be argued to be the stress-energy tensor appropriate for the compbined spin-1 Teukolsky, and Teukolsky-Starobinsky system. Part of the results presented here can be found in \cite{2014arXiv1412.2960A}. 
Finally in section \ref{sec:morawetz}, we indicate how the new stress-energy  tensor  can be used to prove dispersive (Morawetz type) estimates for the Maxwell field on the Schwarzschild spacetime. This new result is part of ongoing work aimed at proving dispersive estimates for the Maxwell and linearized gravity field on the Kerr spacetime.

\section{Preliminaries} \label{sec:prel} 

In this paper we will make use of the 2-spinor  formalism, as well as the closely related GHP formalism. 
A detailed introduction to this material is given by Penrose and Rindler in \cite{Penrose:1986fk}.  Following the conventions there, we use the abstract index notation with lower case latin letters $a,b,c,\dots$ for tensor indices, and unprimed and primed upper-case latin letters $A,B,C, \dots, A', B', C', \dots$ for spinor indices. Tetrad and dyad indices are boldface latin letters following the same scheme,  $\ba,\bb,\bc,\dots, \bA,\bB,\bC,\dots,\bA',\bB',\bC',\dots$. For coordinate indices we use greek letters $\alpha, \beta, \gamma, \dots$.

\subsection{Spinors on Minkowski space} 
Consider Minkowski space $\Mink$, i.e. $\Re^4$ with coordinates $(x^\alpha) = (t, x, y, z)$ and metric 
$$
\met_{\alpha\beta} dx^\alpha dx^\beta = dt^2 - dx^2 - dy^2 - dz^2.
$$
Define a complex null tetrad (i.e. frame) $(\met_{\ba}{}^a)_{\ba = 0,\cdots, 3} = (\NPl^a, \NPn^a, \NPm^a, \NPmbar^a) $, normalized so that $\NPl^a \NPn_a = 1$, $\NPm^a \NPmbar_a = -1$,  
so that 
\begin{equation}\label{eq:metNP} 
\met_{ab} = 2 (\NPl_{(a} \NPn_{b)} - \NPm_{(a} \NPmbar_{b)}), 
\end{equation} 
by 
\begin{align*} 
\NPl^a = \met_{\mathbf 0}{}^a &= \frac{1}{\sqrt{2}}((\partial_t)^a + (\partial_z)^a),
&\NPn^a &=   \met_{\mathbf 1}{}^a = \frac{1}{\sqrt{2}}((\partial_t)^a - (\partial_z)^a), \\ 
\NPm^a = \met_{\mathbf 2}{}^a &= \frac{1}{\sqrt{2}} ( (\partial_x)^a - i (\partial_y)^a), 
&\NPmbar^a &=  \met_{\mathbf 3}{}^a = \frac{1}{\sqrt{2}} ( (\partial_x)^a + i (\partial_y)^a).
\end{align*} 
Similarly, let $\eps_\bA{}^A$ be a dyad (i.e. frame) in $\Co^2$, with dual frame $\eps_A{}^\bA$.
The complex conjugates will be denoted $\bar\eps_{\bA'}{}^{A'}, \bar\eps_{A'}{}^{\bA'}$ and again form a basis in another 2-dimensional complex space denoted $\bar \Co^2$, and its dual.  We can identify  the space of complex $2\times2$ matrices with $\Co^2 \otimes \bar \Co^2$. By construction, the tensor products $\eps_\bA{}^A \bar\eps_{\bA'}{}^{A'}$ and $\eps_A{}^\bA \bar\eps_{A'}{}^{\bA'}$forms a basis in $\Co^2 \otimes \bar \Co^2$ and its dual.  

Now, with $x^\ba = x^a \met_a{}^\ba $,  writing 
\begin{equation}\label{eq:IvW} 
x^\ba \met_\ba{}^{\bA \bA'} \equiv \begin{pmatrix} x^0 & x^2 \\ x^3 & x^1 \end{pmatrix} 
\end{equation} 
defines the soldering forms, also known as Infeld-van der Waerden symbols $\met_a{}^{AA'}$, (and analogously $\met_{AA'}{}^a$). 
By a slight abuse of notation we may write $x^{AA'} = x^a$ instead of $x^{\bA\bA'} = x^\ba \met_\ba{}^{\bA\bA'}$ or, dropping reference to the tetrad, $x^{AA'} = x^a \met_a{}^{AA'}$. 
In particular, we have that $x^a \in \Mink$ corresponds to a $2\times 2$ complex Hermitian 
matrix $x^{\bA\bA'} \in \Co^2 \otimes \bar \Co^2$.
Taking the complex conjugate of both sides of \eqref{eq:IvW} gives 
$$
\bar x^a = \bar x^{A'A} = (x^{AA'})^* .
$$
where $*$ denotes Hermitian conjugation. This extends to a correspondence $\Co^4 \leftrightarrow \Co^2 \otimes \bar \Co^2$ with complex conjugation corresponding to Hermitian conjugation. 

Note that 
\begin{equation}\label{eq:detmet}
\det(x^{\bA\bA'}) = x^0 x^1 - x^2 x^3 = x^a x_a /2.
\end{equation} 

We see from the above that the group 
$$
\SL(2,\Co) = \Bigl\{ A = \begin{pmatrix} a & b \\ c & d \end{pmatrix}, \quad a,b,c,d \in \Co, \quad ad-bc = 1 \Bigr\}
$$
acts on $X \in \Co^2 \otimes \bar \Co^2$ by 
$$
X \mapsto A X A^* .
$$
In view of \eqref{eq:detmet} this exhibits $\SL_2(\Co)$ as a double cover of the identity component of the Lorentz group $\SO_0(1,3)$, the group of linear isometries of $\Mink$. 
In particular, $\SL(2,\Co)$ is the spin group of $\Mink$. The canonical action
$$
(A, v) \in \SL(2,\Co) \times \Co^2 \mapsto A v \in \Co^2
$$
of $\SL(2,\Co)$ on $\Co^2$ is the spinor representation. Elements of $\Co^2$ are called (Weyl) spinors. The conjugate representation given by 
$$
(A, v) \in \SL(2,\Co) \times \Co^2 \mapsto \bar A v \in \Co^2
$$
is denoted $\bar\Co^2$.

Spinors\footnote{It is conventional to refer to spin-tensors eg. of the form $x^{AA'}$ or $\psi_{ABA'}$ simply as spinors.} 
of the form $x^{AA'} = \alpha^A \beta^{A'}$ correspond to matrices of rank one, and hence to complex null vectors. Denoting $o^A = \eps_0{}^A, \iota^A = \eps_1{}^A$, we have from the above that 
\begin{equation}\label{eq:tetrad-dyad} 
\NPl^a = o^A o^{A'}, \quad \NPn^a = \iota^A \iota^{A'}, \quad \NPm^a = o^A \iota^{A'}, \quad \NPmbar^a = \iota^A o^{A'}
\end{equation} 
This gives a correspondence between a null frame in $\Mink$ and a dyad in $\Co^2$. 

The action of $\SL(2,\Co)$ on $\Co^2$ leaves invariant a complex area element, a skew-symmetric bispinor. A unique such spinor $\eps_{AB}$ is determined by the normalization 
$$
\met_{ab} = \eps_{AB} \bar\eps_{A'B'}.
$$
The inverse $\eps^{AB}$ of $\eps_{AB}$ is defined by $\eps_{AB}\eps^{CB} = \delta_A{}^C$, $\eps^{AB} \eps_{AC} = \delta_C{}^B$.
As with $\met_{ab}$ and its inverse $\met^{ab}$, the spin-metric $\eps_{AB}$ and its inverse $\eps^{AB}$ 
is used to lower and raise spinor indices, 
$$
\lambda_{B} = \lambda^{A}\eps_{AB} , \quad \lambda^{A} = \eps^{AB} \lambda_{B}.
$$
We have
$$
\eps_{AB} = o_A \iota_B - \iota_A o_B.
$$
In particular,
\begin{equation} \label{eq:dyad-normalization} 
o_A \iota^A = 1.
\end{equation}

An element $\phi_{A\cdots D A' \cdots D'}$ of $\bigotimes^k \Co^2 \bigotimes^l \bar \Co^2$ is called a spinor of valence $(k,l)$. The space of totally symmetric\footnote{The ordering between primed and unprimed indices is irrelevent.} spinors $\phi_{A \cdots D A' \cdots D'} = \phi_{(A \cdots D) (A' \cdots D')}$ is denoted $\SymSpin_{k,l}$. The spaces $\SymSpin_{k,l}$ for $k,l$ non-negative integers yield all irreducible representations of $\SL(2,\Co)$. In fact, one can decompose any spinor into ``irreducible pieces'', i.e. as a linear combination of totally symmetric spinors in $\SymSpin_{k,l}$ with factors of $\eps_{AB}$. The above mentioned correspondence between vectors and spinors extends to tensors of any type, and hence the just mentioned decomposition of spinors into irreducible pieces carries over to tensors as well. 
Examples are given by $\mathcal F_{ab} = \phi_{AB} \eps_{A'B'}$, a complex anti-self-dual 2-form, and ${}^-C_{abcd} = \Psi_{ABCD} \eps_{A'B'} \eps_{C'D'}$, a complex anti-self-dual tensor with the symmetries of the Weyl tensor. Here, $ \phi_{AB}$ and $\Psi_{ABCD}$ are symmetric.

\subsection{Spinors on spacetime} 
Let now $(\Mcal, \met_{ab})$ be a Lorentizian 3+1 dimensional spin manifold with metric of signature $+---$. The spacetimes we are interested in here are spin, in particular any orientable, globally hyperbolic 3+1 dimensional spacetime is spin, cf. \cite[page 346]{Ger70spinstructII}.
If $\Mcal$ is spin, then the orthonormal frame bundle $\SO(\Mcal)$ admits a lift to $\Spin(\Mcal)$, a principal $\SL(2,\Co)$-bundle. The associated bundle construction now gives vector bundles over $\Mcal$ corresponding to the  representations of $\SL(2,\Co)$, in particular we have bundles of valence $(k,l)$ spinors with sections $\phi_{A\cdots D A' \cdots D'}$.
The Levi-Civita connection lifts to act on sections of the spinor bundles, 
\begin{equation}\label{eq:nablavarphi}
\nabla_{AA'} : \varphi_{B \cdots D B' \cdots D' } \to \nabla_{AA'} \varphi_{B \cdots D B' \cdots D'} 
\end{equation} 
where we have used the tensor-spinor correspondence to replace the index $a$ by $AA'$. We shall denote the totally symmetric spinor bundles by $\SymSpin_{k,l}$ and their spaces of sections by $\SymSpinSec_{k,l}$. 

The above mentioned correspondence between spinors and tensors, and the decomposition into irreducible pieces, can be applied to the Riemann curvature tensor. In this case, the irreducible pieces correspond to the scalar curvature, traceless Ricci tensor, and the Weyl tensor, denoted by  
$R$, $S_{ab}$, and $C_{abcd}$, respectively. The Riemann tensor then takes the form  
\begin{align}
R_{abcd}={}&- \tfrac{1}{12} g_{ad} g_{bc} R
 + \tfrac{1}{12} g_{ac} g_{bd} R
 + \tfrac{1}{2} g_{bd} S_{ac}
 -  \tfrac{1}{2} g_{bc} S_{ad}
 -  \tfrac{1}{2} g_{ad} S_{bc}
 + \tfrac{1}{2} g_{ac} S_{bd}
 + C_{abcd}.
\end{align}
The spinor equivalents of these tensors are
\begin{subequations}
\begin{align}
C_{abcd}={}&\Psi_{ABCD} \bar\epsilon_{A'B'} \bar\epsilon_{C'D'}+\bar\Psi_{A'B'C'D'} \epsilon_{AB} \epsilon_{CD},\\
S_{ab} ={}& -2 \Phi_{ABA'B'},\\
R={}&24 \Lambda.
\end{align}
\end{subequations}

Projecting \eqref{eq:nablavarphi} on its irreducible pieces gives the following four \emph{fundamental operators}.  
\begin{definition}
The differential operators
$$
\sDiv_{k,l}:\mathcal{S}_{k,l}\rightarrow \mathcal{S}_{k-1,l-1}, \quad 
\sCurl_{k,l}:\mathcal{S}_{k,l}\rightarrow \mathcal{S}_{k+1,l-1}, \quad 
\sCurlDagger_{k,l}:\mathcal{S}_{k,l}\rightarrow \mathcal{S}_{k-1,l+1}, \quad 
\sTwist_{k,l}:\mathcal{S}_{k,l}\rightarrow \mathcal{S}_{k+1,l+1}
$$
are defined as
\begin{subequations}
\begin{align}
(\sDiv_{k,l}\varphi)_{A_1\dots A_{k-1}}{}^{A_1'\dots A_{l-1}'}\equiv{}&
\nabla^{BB'}\varphi_{A_1\dots A_{k-1}B}{}^{A_1'\dots A_{l-1}'}{}_{B'},\\
(\sCurl_{k,l}\varphi)_{A_1\dots A_{k+1}}{}^{A_1'\dots A_{l-1}'}\equiv{}&
\nabla_{(A_1}{}^{B'}\varphi_{A_2\dots A_{k+1})}{}^{A_1'\dots A_{l-1}'}{}_{B'},\\
(\sCurlDagger_{k,l}\varphi)_{A_1\dots A_{k-1}}{}^{A_1'\dots A_{l+1}'}\equiv{}&
\nabla^{B(A_1'}\varphi_{A_1\dots A_{k-1}B}{}^{A_2'\dots A_{l+1}')},\\
(\sTwist_{k,l}\varphi)_{A_1\dots A_{k+1}}{}^{A_1'\dots A_{l+1}'}\equiv{}&
\nabla_{(A_1}{}^{(A_1'}\varphi_{A_2\dots A_{k+1})}{}^{A_2'\dots A_{l+1}')}.
\end{align}
\end{subequations}
The operators are called respectively the divergence, curl, curl-dagger, and twistor operators. 
\end{definition}
With respect to complex conjugation, the operators $\sDiv, \sTwist$ satisfy $\overline{\sDiv_{k,l}} = \sDiv_{l,k}$, $\overline{\sTwist_{k,l}} = \sTwist_{l,k}$, while $\overline{\sCurl_{k,l}} = \sCurlDagger_{l,k}$, $\overline{\sCurlDagger_{k,l}} = \sCurl_{l,k}$. 

Denoting the adjoint of an operator by $\mathcal A$ with respect to the bilinear pairing 
$$
(\phi_{A_1 \cdots A_k A'_1 \cdots A'_l}, \psi_{A_1 \cdots A_k A'_1 \cdots A'_l})=\int \phi_{A_1 \cdots A_k A'_1 \cdots A'_l} \psi^{A_1 \cdots A_k A'_1 \cdots A'_l}d\mu 
$$
by $\mathcal A^\dagger$, 
and the adjoint with respect to the sesquilinear pairing  
$$
\langle \phi_{A_1 \cdots A_k A'_1 \cdots A'_l}, \psi_{A_1 \cdots A_l A'_1 \cdots A'_k}\rangle =\int \phi_{A_1 \cdots A_k A'_1 \cdots A'_l} \bar\psi^{A_1 \cdots A_k A'_1 \cdots A'_l}d\mu 
$$
by $\mathcal A^\star$ , 
we have 
\begin{align*}
(\sDiv_{k,l})^\dagger &= - \sTwist_{k-1,l-1}, & (\sTwist_{k,l})^\dagger &= - \sDiv_{k+1,l+1},&
(\sCurl_{k,l})^\dagger &= \sCurlDagger_{k+1,l-1}, & (\sCurlDagger_{k,l})^\dagger &= \sCurl_{k-1,l+1},
\end{align*}
and
\begin{align*}
(\sDiv_{k,l})^\star &= - \sTwist_{l-1,k-1} , & (\sTwist_{k,l})^\star &= - \sDiv_{l+1,k+1},&
(\sCurl_{k,l})^\star &= \sCurl_{l-1,k+1}, & (\sCurlDagger_{k,l})^\star &= \sCurlDagger_{l+1,k-1}.
\end{align*}

As we will see in section~\ref{sec:masslessspins}, the kernels of $\sCurlDagger_{2s,0}$ and $\sCurl_{0,2s}$ are the massless spin-s fields. The kernels of $\sTwist_{k,l}$, are the valence $(k,l)$ Killing spinors, which we will discuss further in section~\ref{sec:KillingSpinors} and section~\ref{sec:KillSpinSpacetime}.
A multitude of commutator properties of these operators can be found in \cite{ABB:symop:2014CQGra..31m5015A}.

\subsection{GHP formalism} 
Given  
a null tetrad $\NPl^a, \NPn^a, \NPm^a, \NPmbar^a$ we have a spin dyad $o_A, \iota_A$ as discussed above. For a spinor $\varphi_{A\cdots D} \in \SymSpinSec_{k,0}$, it is convenient to introduce the Newman-Penrose  scalars 
\begin{equation}\label{eq:varphi_i-def}
\varphi_i = \varphi_{A_1 \cdots A_i A_{i+1} \cdots A_k} \iota^{A_1} \cdots \iota^{A_i} o^{A_{i+1}} \cdots o^{A_k}.
\end{equation} 
In particular, $\Psi_{ABCD}$ corresponds to the five complex Weyl scalars 
$\Psi_i, i = 0, \dots 4$.  The definition $\varphi_i$ extends in a natural way to the scalar components of spinors of valence $(k,l)$.

The normalization \eqref{eq:dyad-normalization} is left invariant under rescalings $o_A \to \lambda o_A$, $\iota_A \to \lambda^{-1} \iota_A$ where $\lambda$ is a non-vanishing complex scalar field on $\Mcal$. Under such rescalings, the scalars defined by projecting on the dyad, such as $\varphi_i$ given by \eqref{eq:varphi_i-def} transform as sections of complex line bundles. A scalar $\varphi$ is said to have type $\{p,q\}$ if $\varphi \to \lambda^p \bar\lambda^q \varphi$ under such a rescaling. Such fields are called properly weighted. The lift of the Levi-Civita connection $\nabla_{AA'}$ to these bundles gives a covariant derivative denoted $\Theta_a$. Projecting on the null tetrad $\NPl^a, \NPn^a, \NPm^a, \NPmbar^a$ gives the GHP operators 
$$
\tho = \NPl^a \Theta_a, \quad \tho' = \NPn^a\Theta_a, \quad \edt = \NPm^a \Theta_a , \quad 
\edt' = \NPmbar^a \Theta_a.
$$
The GHP operators are properly weighted, in the sense that they take properly weighted fields to properly weighted fields, for example if $\varphi$ has type $\{p,q\}$, then $\tho \varphi$ has type $\{p+1, q+1\}$. This can be seen from the fact that $\NPl^a = o^A \bar{o}^{A'}$ has type $\{1,1\}$. 
There are 12 connection coefficients in a null frame, up to complex conjugation. Of these, 8 are properly weighted, the GHP spin coefficients. The other connection coefficients enter in the connection 1-form for the connection $\Theta_a$. 

The following formal operations take weighted quantities to weighted
quantities,
\begin{equation}\label{eq:ghpsym}
\begin{aligned}
^-(\text{bar})&: \; \NPl^a \to \NPl^a, \;   \NPn^a \to \NPn^a, \;  \NPm^a
\to \NPmbar^a,	\; \NPmbar^a \to \NPm^a, &\{p,q\}\to\{q,p\} ,\\
'(\text{prime})&: \; \NPl^a \to \NPn^a, \;   \NPn^a \to \NPl^a, \;
\NPm^a \to \NPmbar^a,  \; \NPmbar^a \to \NPm^a, &\{p,q\}\to\{-p,-q\} ,\\
^*(\text{star})&: \; \NPl^a \to \NPm^a , \;  \NPn^a \to -\NPmbar^a, \;
\NPm^a \to -\NPl^a, \; \NPmbar^a \to \NPn^a, &\{p,q\}\to\{p,-q\} .
\end{aligned}
\end{equation}
The properly weighted spin coefficients can be represented as 
\begin{align}
\kappa &=  \NPm^b \NPl^a \nabla_a \NPl_b , \quad
\sigma =  \NPm^b \NPm^a \nabla_a \NPl_b, \quad
\rho   =  \NPm^b \NPmbar^a \nabla_a \NPl_b , \quad
\tau   = \NPm^b \NPn^a \nabla_a \NPl_b ,
\label{eq:spincoeff-def}
\end{align}
together with their primes $\kappa', \sigma', \rho', \tau'$.

A systematic application of the above formalism allows one to write the tetrad projection of the geometric field equations in a compact form. For example, the Maxwell equation corresponds to the four scalar equations given by 
\begin{align}
(\tho -2\rho)\phi_1 -({\edt}'-\tau')\phi_0 = -\kappa \phi_2 , \label{eq:ghpmaxwell}
\end{align}
with its primed and starred versions.

Working in a spacetime of Petrov type $\PetrovD$ gives drastic simplifications, in view of the fact that 
choosing the null tedrad so that $\NPl^a$, $\NPn^a$ are aligned with principal null directions of the Weyl tensor (or equivalently choosing the spin dyad so that $o_A, \iota_A$ are principal spinors of the Weyl spinor), as has already been mentioned, the Weyl scalars are zero with the exception of $\Psi_2$, and the only non-zero spin coefficients are $\rho, \tau$ and their primed versions.

\subsubsection{Massless spin-$s$ fields} \label{sec:masslessspins}
For $s \in \half \NatNum$,  $\varphi_{A\cdots D} \in \ker  \sCurlDagger_{2s,0}$ is a totally symmetric spinor $\varphi_{A\cdots D} = \varphi_{(A \cdots D)}$ of valence $(2s,0)$ which solves the massless spin-s equation 
$$
(\sCurlDagger_{2s,0} \varphi)_{A'B\cdots D} = 0.
$$
For $s=1/2$, this is the Dirac-Weyl equation $\nabla_{A'}{}^A \varphi_A = 0$, for $s=1$, we have the left and right Maxwell equation $\nabla_{A'}{}^B \phi_{AB} = 0$ and $\nabla_A{}^{B'} \varphi_{A'B'} = 0$, i.e. $(\sCurlDagger_{2,0} \phi)_{A'A} =0$, $(\sCurl_{0,2} \varphi)_{AA'} = 0$. 

An important example is the Coulomb Maxwell field on Kerr,   
\begin{equation}\label{eq:coulomb} 
\phi_{AB} = - \frac{2}{(r-ia\cos\theta)^2} o_{(A} \iota_{B)}
\end{equation} 
This is a non-trivial sourceless solution of the Maxwell equation on the Kerr background.
We note that $\phi_1 = (r - i a \cos\theta)^{-2}$ while $\phi_0 = \phi_2 = 0$. 

For $s > 1$, the existence of a non-trivial solution to the spin-s equation implies  curvature conditions, a fact known as the Buchdahl constraint. \cite{Buchdahl58}
\begin{equation}
0=\Psi_{(A}{}^{DEF}\phi_{B \dots C)DEF}.
\end{equation}
This is easily obtained by commuting the operators in
\begin{equation}
0 = (\sDiv_{2s-1,1} \sCurlDagger_{2s,0} \phi)_{A \dots C}. 
\end{equation}
For the case $s=2$, the equation 
$\nabla_{A'}{}^D \Psi_{ABCD} = 0$
is the Bianchi equation, which holds for the Weyl spinor in any vacuum spacetime. Due to the Buchdahl constraint, it holds that in any sufficiently general spacetime, a solution of the spin-2 equation is proportional to the Weyl spinor of the spacetime.  

\subsubsection{Killing spinors}\label{sec:KillingSpinors} 
Spinors $\varkappa_{A_1\cdots A_k}{}^{A_1' \cdots A_k'} \in \SymSpinSec_{k,l}$ satisfying 
$$
(\sTwist_{k,l} \varkappa)_{A_1 \cdots A_{k+1}}{}^{A_1' \cdots A_{k+1}'} = 0,
$$
are called Killing spinors of valence $(k,l)$. 
We denote the space of Killing spinors of valence $(k,l)$ by $\KillSpin_{k,l}$. 
The Killing spinor equation is an over-determined system. 
The space of Killing spinors is a finite dimensional space, and the existence of Killing spinors imposes strong restrictions on $\Mcal$, see section \ref{sec:KillSpinSpacetime} below.  Killing spinors $\GenVec_{AA'} \in \KillSpin_{1,1}$ are simply conformal Killing vector fields, while Killing spinors $\kappa_{AB} \in \KillSpin_{2,0}$ are also known as conformal Killing-Yano forms, or twistor forms.\footnote{In the mathematics literature, Killing spinors of valence $(1,0)$ are known as twistor spinors. The terms conformal Killing-Yano form or twistor form is used also for the real 2-forms corresponding to Killing spinors of valence $(2,0)$, as well as for forms of higher degree and in higher dimension, in the kernel of an analogous Stein-Weiss operator.} 
Further, we mention that  Killing spinors $L_{ABA'B'} \in \KillSpin_{2,2}$ are traceless symmetric conformal Killing tensors $L_{ab}$, satisfying the equation 
$$
0 = \nabla_{(a}L_{bc)} -  \tfrac{1}{3} g_{(ab}\nabla^{d}L_{c)d}
$$
For any $\kappa_{AB} \in \KillSpin_{2,0}$ we have that $L_{ABA'B'} = \kappa_{AB} \bar \kappa_{A'B'} \in \KillSpin_{2,2}$.  
See section \ref{sec:KillSpinSpacetime} below for further details.

\subsection{Space spinors} 
Let $\tau^a$ be a timelike vector, normalized so that $\tau^a \tau_a = 1$.  
Define the projector
\[
\Spaceh_{ab}= g_{ab} -\tau_a\tau_b.  
\]
The space-spinor version of the soldering form is 
$$
\Spaceh_a{}^{AB} = \sqrt{2}\met_a{}^{(A}{}_{A'} \tau^{B)A'}
$$
This gives a correspondence which represents spatial vectors $x^a$ with respect to $\tau^a$, i.e. satisfying $x^a \tau_a = 0$, in terms of symmetric spinors. The Hermitian conjugate of a spinor $\lambda_A$ is defined as 
$$
\hat \lambda_A = \sqrt{2}\tau_A{}^{B'} \bar \lambda_{B'}.
$$
A spinor with even valence is called real if $\hat \lambda_{A_1\dots A_{2s}} = (-1)^s \lambda_{A_1\dots A_{2s}}$. Real spinors of even valence, eg. $\omega_{AB}, \xi_{ABCD}$ correspond to real tensors $\omega_a, \xi_{ab}$.

A general spinor can be decomposed into space spinor terms and terms containing $\tau^{AA'}$. For example, 
\begin{align}
\GenVec_{AA'}={}&\tau_{AA'} \GenVec -  \sqrt{2} \tau^{B}{}_ {A'} \GenVec_{AB},
\end{align}
where $\GenVec=\tau^{AA'} \GenVec_{AA'}$, $\GenVec_{AB}=\sqrt{2} \tau_{(A}{}^{A'}\GenVec_{B)A'}$ are a scalar and a space spinor, respectively.

We also define the second fundamental form as
\begin{align}
\SpaceK_{ab} = \Spaceh_{a}{}^{c} \Spaceh_{b}{}^{d} \nabla_{c}\tau_{d}.
\end{align}

Applying the space spinor split to the spinor covariant derivative $\nabla_{AA'}$ gives 
$$
\nabla_{AA'} = \tau_{AA'} \Dnormal - \sqrt{2} \tau^B{}_{A'}\DSen_{AB} 
$$
where now $\Dnormal = \tau^{AA'} \nabla_{AA'}$ is the normal derivative 
and $\nabla_{AB} = \sqrt{2}\tau_{(A}{}^{A'} \DSen_{B)A'}$  is the Sen connection.

Let $\SpaceK_{ABCD}$ denote the space spinor counterpart of the tensor $\SpaceK_{ab}$.  One
has that 
\[
\SpaceK_{ABCD} = \sqrt{2} \tau_{C}{}^{A'} \nabla_{AB}\tau_{DA'}, \quad \SpaceK_{ABCD}=\SpaceK_{(AB)(CD)}.
\]

For the rest of this section we will assume that $\tau^a$ is the timelike normal of a Cauchy surface $\SpaceSlice$. With a slight abuse of notation we will identify such tensors and spinors on the spacetime with their pullbacks to the surface $\SpaceSlice$. Let $D_a$ denote the intrisic Levi-Civita connection on $\SpaceSlice$, and $D_{AB}=D_{(AB)}=\sigma^a{}_{AB}D_a$ its spinorial counterpart. Then we see that the  Sen connection, $\DSen_{AB}$, and the Levi-Civita connection, $D_{AB}$, are related to each other through the spinor $\SpaceK_{ABCD}$. For example, for a valence 1 spinor $\pi_C$ one has that
\[
\DSen_{AB} \pi_C = D_{AB}\pi_C + \frac{1}{\sqrt{2}} \SpaceK_{ABC}{}^{D} \pi_D.
\]

On the surface $\SpaceSlice$ the Weyl spinor can be split into its electric and magnetic parts via
\[
E_{ABCD} \equiv \frac{1}{2}\left( \Psi_{ABCD} +
  \hat{\Psi}_{ABCD}\right), \quad B_{ABCD}\equiv
\frac{\mbox{i}}{2}\left(\hat{\Psi}_{ABCD} - \Psi_{ABCD} \right),
\]
so that
\[
\Psi_{ABCD} = E_{ABCD} + \mbox{i}B_{ABCD}. 
\]

Crucial for our applications is that the spinors $E_{ABCD}$ and
$B_{ABCD}$ can be expressed in terms of quantities intrinsic to the
hypersurface $\SpaceSlice$. In detail, we have 
\begin{subequations}
\begin{align}
E_{ABCD} ={}& - \tilde{\Phi}_{ABCD} -  \SpaceK^{FH}{}_{FH} \SpaceK_{(ABCD)} -  r_{(ABCD)} + \SpaceK_{(AB}{}^{FH}\SpaceK_{CD)FH}, \label{Weyl:Electric}\\
B_{ABCD} ={}& -i \sqrt{2} D_{(A}{}^{F}\SpaceK_{BCD)F}, \label{Weyl:Magnetic}
\end{align}
\end{subequations}
where $r_{ABCD}$ is the space spinor counterpart of the  Ricci tensor of the intrinsic metric of the hypersurface $\SpaceSlice$, and $\tilde{\Phi}_{ABCD} = 2 \Phi_{(AC|B'D'|}\tau_{B}{}^{B'}\tau_{D)}{}^{D'}$ is given by the matter content.

We can formulate the Cauchy problem for the spin-$s$ testfield equation in terms of space spinors as follows.\footnote{Observe that we do not assume vacuum when we study propagation of spin-$s$ test fields. We only assume that the evolution of the metric is known.}
The space spinor split of the spin-$s$ equation $(\sCurlDagger_{2s,0}\varphi)_{A\dots E A'}=0$
takes the form 
$$
0 = \tfrac{1}{\sqrt{2}} \Dnormal \varphi_{A\dots F} - \DSen_{A}{}^{G}\varphi_{B\dots FG}=0.
$$
If we split this equation into irreducible parts we get the first order, symmetric hyperbolic, evolution equation
$$
\Dnormal\varphi_{A\dots F} = \sqrt{2} \DSen_{(A}{}^{G}\varphi_{B\dots F)G},
$$
and for the cases $s\geq 1 $, the constraint equation 
\begin{equation}
\DSen^{AB} \varphi_{AB\dots F} = 0. \label{eq:constraintspins}
\end{equation}
on $\SpaceSlice$. One can verify that this constraint automatically propagates for $s=1$. For higher spin the Buchdahl constraint gives an obstruction for propagation of the constraint \eqref{eq:constraintspins}.

If we make a space spinor splitting of the valence $(2,0)$ Killing spinor equation $(\sTwist_{2,0} \varkappa)_{ABCA'}=0$, we get
\begin{subequations}
\begin{align}
\Dnormal\varkappa_{AB}={}&- \tfrac{1}{\sqrt{2}}\DSen_{(A}{}^{C}\varkappa_{B)C},\\
\DSen_{(AB}\varkappa_{CD)}={}&0.
\end{align}
\end{subequations}
Hence, also in this case we have an evolution equation and a constraint equation. However, the integrability condition for the Killing spinor gives an obstruction to the propagation of the constraint. The propagation of the integrability condition is a bit more complicated, but we still have the following result for vacuum spacetimes. 
\begin{theorem}[\protect{\cite[Theorem 9]{backdahl:valiente-kroon:2010:MR2753388}, \cite[Theorem 4]{backdahl:valiente-kroon:2012JMP....53d2503B}}]

\label{Theorem:KSData}
Consider an initial data set for the Einstein vacuum field equations on a Cauchy hypersurface $\SpaceSlice$. Let $\mathcal{U}\subset\SpaceSlice$ be an open set. 
The development of the initial data set will then have a Killing spinor in the domain of dependence of $\mathcal{U}$ if and only if 
\begin{subequations}
\begin{align}
\DSen_{(AB}\varkappa_{CD)}={}&0,\\
\Psi_{(ABC}{}^F\varkappa_{D)F}={}&0,
\end{align}
\end{subequations}
are satisfied on $\mathcal{U}$. 
\end{theorem}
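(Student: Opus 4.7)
\smallskip

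\noindent\textbf{Proof plan.} The necessity direction is immediate from the space-spinor split already recorded in the excerpt: if the full equation $(\sTwist_{2,0}\varkappa)_{ABCA'}=0$ holds on the development, the constraint $\DSen_{(AB}\varkappa_{CD)}=0$ is just the trace-free part of that equation restricted to $\Sigma$. For the algebraic condition, apply $\sCurl_{3,1}$ (or, equivalently, compute $\nabla^{A'}{}_{(A}\nabla_{|A'|B}\varkappa_{CD)}$ and use the spinor commutator) to the Killing spinor equation; in vacuum the Ricci spinor drops out and one obtains the pointwise identity $\Psi_{(ABC}{}^{F}\varkappa_{D)F}=0$ throughout the spacetime, which restricts to $\mathcal{U}$.

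\smallskip

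For sufficiency I would first extend the initial datum $\varkappa_{AB}|_{\mathcal{U}}$ into the domain of dependence $D(\mathcal{U})$ by solving the symmetric hyperbolic transport equation
\[
\Dnormal\varkappa_{AB}=-\tfrac{1}{\sqrt{2}}\DSen_{(A}{}^{C}\varkappa_{B)C}.
\]
This produces a unique $\varkappa_{AB}$ on $D(\mathcal{U})$, and the task reduces to showing that the two obstruction fields
\[
H_{ABCD}:=\DSen_{(AB}\varkappa_{CD)},\qquad I_{ABCD}:=\Psi_{(ABC}{}^{F}\varkappa_{D)F}
\]
vanish identically on $D(\mathcal{U})$, given that they vanish on $\mathcal{U}$.

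\smallskip

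The heart of the argument is to derive a closed, homogeneous, linear, first-order symmetric hyperbolic system of the schematic form
\[
\Dnormal H_{ABCD} = \mathcal{A}(H,I),\qquad \Dnormal I_{ABCD} = \mathcal{B}(H,I),
\]
where $\mathcal{A},\mathcal{B}$ are algebraic expressions in $(H,I)$ with coefficients built from the background geometry and from $\varkappa_{AB}$ itself. To obtain the first equation, apply $\Dnormal$ to $H_{ABCD}$ and commute it past $\DSen_{AB}$; the commutator generates Weyl-curvature terms which, after full symmetrization over $ABCD$, combine into precisely $I_{ABCD}$, while all remaining derivatives of $\varkappa$ are eliminated using the transport equation and reorganized into $H_{ABCD}$. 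For the second equation, apply $\Dnormal$ to $I_{ABCD}$; the factor $\Dnormal\Psi_{ABCD}$ is traded for a Sen derivative of $\Psi_{ABCD}$ through the vacuum Bianchi identity $\nabla^{A}{}_{A'}\Psi_{ABCD}=0$, after which irreducible decomposition again produces only $H$ and $I$ terms. Since the system is symmetric hyperbolic with zero Cauchy data on $\mathcal{U}$, standard uniqueness yields $H\equiv I\equiv 0$ on $D(\mathcal{U})$, whence $\varkappa_{AB}$ satisfies the full Killing spinor equation there.

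\smallskip

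The step I expect to be the main obstacle is the algebraic bookkeeping in the derivation of the $\Dnormal I$ equation. One must repeatedly apply the spinor curvature commutators to push derivatives across $\Psi_{ABCD}$, use the vacuum Bianchi identity to re-express $\Dnormal\Psi_{ABCD}$, and then verify that every irreducible component not proportional to $H$ or $I$ cancels. This is where the vacuum hypothesis is used essentially, and where the presence of the algebraic obstruction $I$ (rather than a derivative obstruction) in the statement of the theorem becomes visible. Once this cancellation is checked, the remainder of the argument is a routine application of symmetric hyperbolic uniqueness.
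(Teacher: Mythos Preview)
The present paper does not prove this theorem; it is quoted from the cited references of B\"ackdahl and Valiente Kroon. Your overall architecture---propagate a candidate $\varkappa_{AB}$ off $\mathcal{U}$ by a hyperbolic equation and then show that suitable obstruction fields satisfy a homogeneous hyperbolic system with vanishing Cauchy data---is exactly the strategy used there. The implementation, however, differs in a way that matters. In the cited proofs one does \emph{not} evolve $\varkappa_{AB}$ by the first-order transport equation. One propagates $\varkappa_{AB}$ by the wave equation obtained by prolonging the twistor equation, and then shows that the full four-dimensional obstruction $\zeta_{A'ABC}=\nabla_{A'(A}\varkappa_{BC)}$ itself satisfies a linear homogeneous wave equation in vacuum. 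The two hypotheses in the theorem are precisely the conditions that make both $\zeta|_{\Sigma}$ and $\Dnormal\zeta|_{\Sigma}$ vanish; uniqueness for the wave equation then forces $\zeta\equiv 0$ on $D(\mathcal{U})$.

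Your specific claim that $(H,I)$ satisfy a closed first-order system with \emph{algebraic} right-hand sides is where the proposal is exposed. Compute $\Dnormal I_{ABCD}$: using the Bianchi identity for $\Dnormal\Psi_{ABCD}$ and the transport equation for $\Dnormal\varkappa_{AB}$ produces terms of the form $\Psi_{(ABC}{}^{F}\DSen_{|F|}{}^{G}\varkappa_{D)G}$ and $(\DSen\Psi)\cdot\varkappa$. Rewriting these via $\DSen I$ still leaves residual pieces built from the trace component $\DSen_{(A}{}^{C}\varkappa_{B)C}$ of $\DSen\varkappa$, which is not an obstruction and need not vanish; this is exactly the ``bit more complicated'' step the paper flags just before stating the theorem. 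At best one obtains a genuine first-order PDE system involving $\DSen H$ and $\DSen I$, whose symmetric hyperbolicity would then have to be checked separately; the purely algebraic closure you assert does not hold. The wave-equation route avoids this by packaging the second-order information into the Cauchy data $(\zeta,\Dnormal\zeta)|_{\Sigma}$, which is why the references proceed that way.
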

Observe that these two conditions can be formulated entirely in terms of the data for $\varkappa_{AB}$, $\SpaceK_{ABCD}$ and $r_{ABCD}$, i.e. quantities intrinsic to the surface $\SpaceSlice$.

\section{Spacetimes with special geometry} \label{sec:specialgeom} 

\subsection{Algebraically special spacetimes} 
Let $\varphi_{A\cdots D} \in \SymSpinSec_{k,0}$. A spinor $\alpha_A$ is a \emph{principal spinor} of $\varphi_{A\cdots D}$ if  
$$
\varphi_{A \cdots D} \alpha^A \cdots \alpha^D= 0.
$$
An application of the fundamental theorem of algebra shows that any $\varphi_{A \cdots D} \in \SymSpinSec_{k,0}$ has exactly $k$ principal spinors $\alpha_A, \dots, \delta_A$, and hence is of the form 
$$
\varphi_{A \cdots D} = \alpha_{(A} \cdots \delta_{D)}.
$$
If $\varphi_{A \cdots D} \in \SymSpinSec_{k,0}$ has $n$ distinct principal spinors $\alpha^{(i)}_A$, repeated $m_i$ times, then $\varphi_{A \cdots D} $ is said to have algebraic type $\{m_1, \dots, m_n\}$. 
Applying this to the Weyl tensor leads to the Petrov classification, see table 1. We have the following list of algebraic, or Petrov, types\footnote{The Petrov classification is exclusive, so a spacetime belongs at each point to exactly one Petrov class.}. 
\begin{table}[!htb]
\begin{center}
\begin{tabular}{l|l|l} 
I & $\{1,1,1,1\}$ & $\Psi_{ABCD} = \alpha_{(A} \beta_B \gamma_C \delta_{D)}$ \\ 
II & $\{2,1,1\}$ & $\Psi_{ABCD} = \alpha_{(A} \alpha_B \gamma_C \delta_{D)}$ \\ 
D & $\{2,2\}$ & $\Psi_{ABCD} = \alpha_{(A} \alpha_B \beta_C \beta_{D)}$ \\ 
III & $\{3,1\}$ & $\Psi_{ABCD} = \alpha_{(A} \alpha_B \alpha_C \beta_{D)}$ \\ 
N & $\{4\}$ & $\Psi_{ABCD} = \alpha_A \alpha_B \alpha_C \alpha_D$ \\ 
O & $\{ - \}$ & $\Psi_{ABCD} = 0$
\end{tabular} 
\caption{The Petrov classification}
\end{center} 
\end{table} 
A principal spinor $o_A$ determines a principal null direction $l_a = o_A \bar o_{A'}$. 
The Goldberg-Sachs theorem states that in a vacuum spacetime, the congruence generated by a null field $l_a$ is geodetic and shear free (i.e. $\sigma = \kappa =0$) if and only if $l_a$ is a repeated principal null direction of the Weyl tensor $C_{abcd}$ (or equivalently $o_A$ is a repeated principal spinor of the Weyl spinor $\Psi_{ABCD}$). 

\subsubsection{Petrov type $\PetrovD$}
The vacuum type $\PetrovD$ spacetimes have been classified by Kinnersley \cite{kinnersley:1969JMP....10.1195K},  see also Edgar et al \cite{edgar:etal:2009CQGra..26j5022E}. 
A Petrov type $\PetrovD$ spacetime has two repeated principal spinors $o_A, \iota_A$. In this case, the Weyl spinor takes the form  
$$
\Psi_{ABCD} = \frac{1}{6} \Psi_2 o_{(A} o_B \iota_C \iota_{D)}.
$$
In this case $\kappa_{AB} \in \KillSpin_{2,0}$ is of the form $\kappa_{AB} = -2 \kappa_1 o_{(A} \iota_{B)}$. In particular, the principal spinors of $\kappa_{AB}$ coincide with the principal spinors of $\Psi_{ABCD}$. One finds, using this fact and the Bianchi identity, that in a vacuum Petrov type $\PetrovD$ spacetime, $\kappa_{AB} \in \KillSpin_{2,0}$ if and only if $\kappa_1 \propto \Psi_2^{-1/3}$; hence the space of Killing spinors is $1$-dimensional. 
Since the Petrov classes are exclusive, we have that $\Psi_2 \ne 0$ for a Petrov type D space.  
It follows from the above that in a vacuum Petrov type $\PetrovD$ spacetime, there is a Killing spinor $\kappa_{AB}$, and the principal spinors of $\kappa_{AB}$ coincide with those of the Weyl spinor $\Psi_{ABCD}$.

\subsection{Killing spinor spacetimes} \label{sec:KillSpinSpacetime} 
Differentiating the Killing spinor equation $(\sTwist_{k,l} \phi)_{A \cdots D A' \cdots D'} = 0$,  and commuting derivatives yields
an algebraic relation between the curvature, Killing spinor, and their covariant derivatives which restrict the curvature spinor, see \cite[\S 2.3]{ABB:symop:2014CQGra..31m5015A}. 
Explicitely, 
for a valence $(1,0)$ Killing spinor $\kappa_A$, we have the condition
\begin{subequations}\label{eq:curv-restrict}
\begin{align}\label{eq:curv-restrict-valence-1}
\Psi_{ABCD} \kappa^D ={}& 0 \\  
\intertext{while for a valence $(2,0)$ Killing spinor $\kappa_{AB}$, the condition takes the form} \label{eq:curv-restrict-valence-2}
\Psi_{(ABC}{}^E \kappa_{D)E} ={}& 0
\end{align}
\end{subequations} 
If $\Mcal$ admits a Killing spinor of valence $(1,0)$, then by \eqref{eq:curv-restrict-valence-1} it is of Petrov type $\PetrovN$ or $\PetrovO$. The vacuum spacetimes of type $\PetrovN$ admitting a Killing spinor of valence $(1,0)$ have been classified by Lewandowski \cite{lewandowski:1991CQGra...8L..11L}. 
Similarly, by \eqref{eq:curv-restrict-valence-2}we have that a spacetime admitting a valence $(2,0)$ Killing spinor is of type $\PetrovD, \PetrovN$, or $\PetrovO$.

\begin{definition} \label{def:alignedmatter} 
A spacetime is said to satisfy the aligned matter condition with respec to $\Psi_{ABCD}$ if
\begin{align}
0={}&\Psi_{(ABC}{}^{F}\Phi_{D)FA'B'}.\label{eq:alignedmatterpsi}
\end{align}
If a spacetime has a valence $(2,0)$ Killing spinor $\kappa_{AB}$, then we say that it satisfies the aligned matter condition with respect to $\kappa_{AB}$, if 
\begin{align}
0={}&\Phi_{(A}{}^{C}{}_{|A'B'|}\kappa_{B)C}.\label{eq:alignedmatter}
\end{align}
\end{definition} 
\begin{remark}
In a spacetime of Petrov type D or N with a valence $(2,0)$ Killing spinor, these two conditions agree, so we can simply say the aligned matter condition.
\end{remark}
\begin{remark} 
The aligned matter condition is interesting since a number of properties of vacuum spacetimes generalize to spacetimes with aligned matter. An example of a spacetime with aligned matter is the Kerr-Newman charged, rotating, black hole solution. 
This metric can be obtained from the Kerr metric \eqref{eq:met} by setting $\Delta = |Q|^2 + a^2 - 2 M r + r^2$, where $Q$ is the electromagnetic charge. Replacing $\Delta$ in  \eqref{eq:KerrTetrad} below by the just given expression, yields a null tetrad for the Kerr-Newman metric.
In geometric units, we have 
\begin{subequations}
\begin{align}
\kappa_{AB} ={}& \tfrac{2}{3} (r - i a \cos\theta) o_{(A}\iota_{B)},\label{eq:kappaKerrNewman}\\
\phi_{AB} ={}& \frac{Q o_{(A}\iota_{B)}}{
(r - i a \cos\theta)^2}
= \frac{\sqrt{2} Q \kappa_{AB}}{9 
\bigl(- (\kappa_{CD} \kappa^{CD})\bigr)^{3/2}} \\
\Phi_{ABA'B'}={}&\frac{2 |Q|^2 o_{(A}\iota_{B)} \bar{o}_{(A'}\bar{\iota}_{B')}}{\Sigma^2}=2 
\phi_{AB} \bar{\phi}_{A'B'}.
 \end{align}
 \end{subequations}
 \end{remark} 

If $(\Mcal, \met_{ab})$ has a valence $(2,0)$ Killing spinor $\kappa_{AB}$ for which the aligned matter condition holds, the 1-form  
\begin{equation}\label{eq:xidef}
\xi_{AA'} = (\sCurlDagger_{2,0} \kappa)_{AA'},
\end{equation} 
is a Killing field, $\nabla_{(a} \xi_{b)} = 0$. To see this, apply a $\sTwist_{1,1}$ to both sides of \eqref{eq:xidef} and commute derivatives. This gives 
\begin{align}
(\sTwist_{1,1} \xi)_{ABA'B'}={}&-3 \Phi_{(A}{}^{C}{}_{|A'B'|}\kappa_{B)C}.
\end{align}
and hence $(\sTwist_{1,1} \xi)_{AA'} = 0$ in case the aligned matter condition holds. Hence $\xi_{AA'}$ is a conformal Killing field. To see that $\xi_{AA'}$ is a Killing field, we note that also $\sDiv_{1,1} \xi = 0$ due to the fact that $\sDiv_{1,1} \sCurlDagger_{2,0} = 0$, which follows from the commutation formulas given in 
\cite[Lemma 18]{ABB:symop:2014CQGra..31m5015A}.

Clearly the real and imaginary parts of $\xi_a$ are also Killing fields. 
If $\xi_a$ is proportional to a real Killing field, we can without loss of generality assume that $\xi_a$ is real.
In this case, the 2-form  
$$
Y_{ab} = \tfrac{3}{2}i (\kappa_{AB}\bar\epsilon_{A'B'}  -  \bar{\kappa}_{A'B'}\epsilon_{AB})
$$
is a Killing-Yano tensor, $\nabla_{(a} Y_{b)c} = 0$, and the symmetric 2-tensor $K_{ab} = Y_a{}^c Y_{cb}$ is a Killing tensor $\nabla_{(a} K_{bc)} = 0$. Further, in this case,
$$
\zeta_a = \xi^b K_{ab} 
$$
is a Killing field. 

\begin{remark} In the case of Kerr, with $\kappa_{AB}$ given by \eqref{eq:kappaKerrNewman}, we get 
\begin{align*}
 \xi^a={}&(\partial_t)^a,\\
\zeta^a={}&a^2 (\partial_t)^a
 + a (\partial_\phi)^a.
 \end{align*}
\end{remark}

\begin{remark} 
\begin{enumerate}
\item 
In the class of vacuum spacetimes of Petrov type $\PetrovD$, the existence of a Killing tensor excludes the 
Kinnersley type III metrics \cite{kinnersley:1969JMP....10.1195K}, see \cite{1976IJTP...15..311C}. 
The complement includes the Kerr-NUT family of spacetimes which thus do admit a Killing tensor. 
Vacuum spacetimes with $\kappa_{AB} \in \KillSpin_{2,0}$ such that $\xi_{AA'}$ is proportional to a real Killing field are said to be in the generalized Kerr-NUT class, see \cite{backdahl:valiente-kroon:2010:MR2753388,backdahl:valiente-kroon:2010PhRvL.104w1102B,ferrando:saez:2007JMP....48j2504F}.

\item 
If $\Mcal$ is a vacuum spacetime of Petrov type $\PetrovN$, then a valence $(2,0)$ Killing spinor factorizes as $\kappa_{AB} = \lambda_A \lambda_B$ where $(\sTwist_{1,0} \lambda)_{ABA'} = 0$. 
This can be shown by comparing the equations for valence $(2,0)$ and valence $(1,0)$ Killing spinors on a vacuum type $\PetrovN$ spacetime in Newman-Penrose formalism, and making use of 
\eqref{eq:curv-restrict}.   
\item 
A Killing spinor of valence $(4,0)$, in a vacuum spacetime factorizes into factors of valence $(2,0)$, see 
\cite[Theorem 8]{ABB:symop:2014CQGra..31m5015A}, see also Remark~\ref{rem:remark5:2}.
For Killing spinors of valence $(k,l)$, the situation is more complicated. 
\item An example of a metric with a Killing spinor of valence $(2,2)$ which does not factorize is given in \cite[\S 6.3]{ABB:symop:2014CQGra..31m5015A}. This metric also satisfies the aligned matter condition.
\end{enumerate}
\end{remark} 

\section{The Kerr spacetime} \label{sec:kerrspacetime} 
The Kerr metric is algebraically special, of Petrov type $\PetrovD$, i.e. there are two repeated principal null directions $\NPl^a, \NPn^a$, for the Weyl tensor, see section \ref{sec:specialgeom}.  We can without loss of generality assume that $\NPl^a \NPn_a = 1$, and define a null tetrad by adding  complex null vectors $\NPm^a, \NPmbar^a$ normalized such that $\NPm^a \NPmbar_a = -1$.
By the Goldberg-Sachs theorem both $\NPl^a, \NPn^a$ are geodetic and shear free, and only one of the 5 independent complex Weyl scalars is non-zero, namely 
\begin{align}
\Psi_2={}&- l^{a} m^{b} \bar{m}^{d} n^{c} C_{abcd}= - \frac{M}{(r -i a \cos\theta)^3}.
\end{align}
An explicit choice for $\NPl^a, \NPn^a, \NPm^a$ is given by the Carter tetrad \cite{znajek:1977MNRAS.179..457Z}  
\begin{subequations}\label{eq:KerrTetrad}
\begin{align}
\NPl^{a}={}&\frac{a (\partial_\phi)^{a}}{\sqrt{2} \KDelta^{1/2} \KSigma^{1/2}}
 + \frac{(a^2 + r^2)(\partial_t)^{a} }{\sqrt{2} \KDelta^{1/2} \KSigma^{1/2}}
 + \frac{\KDelta^{1/2}(\partial_r)^{a} }{\sqrt{2} \KSigma^{1/2}},\\
\NPn^{a}={}&\frac{a (\partial_\phi)^{a}}{\sqrt{2} \KDelta^{1/2} \KSigma^{1/2}}
 + \frac{(a^2 + r^2)(\partial_t)^{a} }{\sqrt{2} \KDelta^{1/2} \KSigma^{1/2}}
 -  \frac{\KDelta^{1/2}(\partial_r)^{a} }{\sqrt{2} \KSigma^{1/2}},\\
\NPm^{a}={}&\frac{(\partial_\theta)^{a}}{\sqrt{2} \KSigma^{1/2}}
 + \frac{i \csc\theta(\partial_\phi)^{a} }{\sqrt{2} \KSigma^{1/2}}
 + \frac{i a \sin\theta (\partial_t)^{a} }{\sqrt{2} \KSigma^{1/2}}.
\end{align}
\end{subequations}
In view of the normalization of the tetrad, the metric takes the form $\met_{ab} = 2 (\NPl_{(a} \NPn_{b)} - \NPm_{(a} \NPmbar_{b)})$. We remark that the choice of $\NPl^a$, $\NPn^a$ to be aligned with the principal null directions of the Weyl tensor, together with the normalization of the tetrad fixes the tetrad up to rescalings. Taking the point of view that the tetrad components of tensors are sections of complex line bundles with action of the non-vanishing complex scalars corresponding to the rescalings of the tetrad, leads to the GHP formalism \cite{GHP}. 

The tensor  
\begin{equation}
K_{ab} = 2\KSigma l_{(a} n_{b)} - r^2 g_{ab} \label{eq:KDefinition}
\end{equation}
is a Killing tensor, satisfying 
$\nabla_{(a} K_{bc)} = 0$. 
For a geodesic $\gamma$, the quantity $\geodcarter = K_{ab} \dot \gamma^a \dot \gamma^b$, known as Carter's constant, is conserved.    
For $a \ne 0$, the tensor $K_{ab}$ cannot be expressed as a tensor product of Killing fields \cite{walker:penrose:1970CMaPh..18..265W}, and similarly Carter's constant $\geodcarter$ cannot be expressed in terms of the constants of the motion associated to Killing fields.  
In this sense $K_{ab}$ and $\geodcarter$ manifest a \emph{hidden symmetry} of the Kerr spacetime. As we shall see in section \ref{sec:hidsym} below, these structures are also related to symmetry operators and separability properties, as well as conservation laws, for field equations on Kerr, and more generally in spacetimes admitting Killing spinors satisfying certain auxiliary conditions.

\newcommand{\geodesic}{\gamma}
\newcommand{\geodesictangent}{\dot\gamma}
\newcommand{\CurlyR}{\mathcal{R}}
\newcommand{\geodq}{\mathbf{q}}

\newcommand{\Sphere}{\mathbb{S}}

\subsection{Geodesics in Kerr} \label{sec:geodesics} 
The dispersive properties of fields, i.e. the tendency of the energy density contained within any stationary region to decrease asymptotically to the future is a crucial property for solutions of field equations on spacetimes, and any proof of stability must exploit this phenomenon. In view of the geometric optics approximation, the dispersive property of fields can be seen in an analogous dispersive property of null geodesics, i.e. the fact that null geodesics in the Kerr spacetime which do not orbit the black hole at a fixed radius must leave any stationary region in at least one of the past or future directions. We will here give an explanation for this fact using tools which can readily be adapted to the case of field equations.

Conserved quantities play a crucial role in understanding the
behaviour of geodesics as well as fields. Along any geodesic $\geodesic^a$ with velocity $\dot \geodesic^a$ in the Kerr spacetime, there are the following conserved quantities
\begin{align*}
\geodmass&=\met_{ab}\geodesictangent^a\geodesictangent^b, &
\geodenergy&=(\partial_t)^a \geodesictangent_a, &
\GeodesicLz&=(\partial_\phi)^a \geodesictangent_a, &
\geodcarter&=K_{ab} \geodesictangent^a\geodesictangent^b,
\end{align*}
which are the mass squared, the
energy, the azimuthal angular momentum, and Carter's fourth constant
respectively. The presence of the extra conserved
quantity allows one to 
integrate
the equations of geodesic motion\footnote{In general, the geodesic equation in a 4-dimensional stationary and axi-symmetric spacetime cannot be integrated, and the dynamics of particles may in fact be chaotic, see \cite{2008PhRvD..77b4035G, 2010PhRvD..81l4005L} and references therein.}.

We shall consider only null geodesics, i.e.  $\geodmass=0$. In this case, it is convenient to
introduce 
\begin{align*}
\geodq
&=\geodcarter -2a\geodenergy\GeodesicLz-\GeodesicLz^2 
=\left(\partial_\theta^a\partial_\theta^b+\frac{\cos^2\theta}{\sin^2\theta}\partial_\phi^a\partial_\phi^b+a^2\sin^2\theta\partial_t^a\partial_t^b
\right)\geodesictangent_a\geodesictangent_b .
\end{align*}
Observe that $\geodq$ is both a conserved quantity, since it is a sum
of conserved quantities, and non-negative, since it is a sum of
non-negative terms.
Of most interest to us is the equation for the $r$-coordinate \cite{FrolovNovikov},
\begin{align}
\KSigma^2 \left ( \frac{\di r}{\di \lambda} \right )^2 =&
-\CurlyR(r;M,a;\GeodesicEnergy,\GeodesicLz,\GeodesicQ) , 
\label{eq:IntroNullGeodesicsr}\\
\intertext{where $\lambda$ is the affine parameter of the null geodesic and }
\CurlyR(r;M,a;\GeodesicEnergy,\GeodesicLz,\GeodesicQ)
=& -(r^2+a^2)^2\GeodesicEnergy^2 -4aMr\GeodesicEnergy\GeodesicLz
+(\KDelta-a^2)\GeodesicLz^2 +\KDelta\GeodesicQ . 
\label{eq:RR:consquant} 
\end{align} 
For a null geodesic $\Geodesic^a$, we define the energy associated with  a vector field $\vecX$ and evaluated on a Cauchy hypersurface $\HypersurfaceGeneral$ to be
\begin{align*}
\GenEnergyGeodesic{\vecX}[\Geodesic](\HypersurfaceGeneral)
&= \gMetric_{\alpha\beta}\vecX^\alpha \dot\Geodesic^\beta|_{\HypersurfaceGeneral}. 
\end{align*}
Since $\nabla_{\dot\Geodesic}\dot\Geodesic=0$ for a geodesic, integrating the derivative of the energy gives  
\begin{align}
\GenEnergyGeodesic{\vecX}[\Geodesic](\HypersurfaceGeneral_2)-\GenEnergyGeodesic{\vecX}[\Geodesic](\HypersurfaceGeneral_1)
=& \int_{\lambda_1}^{\lambda_2} (\dot\Geodesic_\alpha\dot\Geodesic_\beta) \nabla^{(\alpha}\vecX^{\beta)} \di\lambda ,
\label{eq:deformForGeodesics}
\end{align}
where $\lambda_i$ is the unique value of $\lambda$ such that $\Geodesic(\lambda)$ is the intersection of $\Geodesic$ with $\HypersurfaceGeneral_i$. 
Formula \eqref{eq:deformForGeodesics} is particularly easy to work with, if one recalls that 
\begin{align*}
\nabla^{(\alpha}\vecX^{\beta)}&= -\frac12 \Lie_{\vecX}\gMetric^{\alpha\beta} .
\end{align*}
The tensor $\nabla^{(\alpha}\vecX^{\beta)}$ is commonly called the ``deformation tensor''. In the following, unless there is room for confusion, we will drop reference to $\Geodesic$ and $\HypersurfaceGeneral$ in referring to $\GenEnergyGeodesic{\vecX}$.

If one makes the (implicitly defined) change of variables $\di\tau/\di\lambda=\KSigma^{-1}$, then equation \eqref{eq:IntroNullGeodesicsr} for the radial component becomes $(\di r/\di\tau)^2=-\CurlyR(r;M,a;\GeodesicEnergy,\GeodesicLz,\GeodesicQ)$. For fixed $(M,a)$ and $(\GeodesicEnergy,\GeodesicLz,\GeodesicQ)$, this takes the form of the equations of motion of particle in $1$-dimension with a potential. The roots and double roots of the potential $\CurlyR$ determine the turning points and stationary points, respectively, for the motion in the $r$ direction. The potential $-\CurlyR=((r^2+a^2)\GeodesicEnergy+a\GeodesicLz)^2 -\KDelta(\GeodesicQ+\GeodesicLz^2+2a\GeodesicEnergy\GeodesicLz)$ is always non-negative at $r=\rp$ and, unless $\GeodesicEnergy=0$, is positive as $r\rightarrow\infty$, and has at most two roots counting multiplicity. 

By simply considering the turning points, one can use $r$ and $\dot\gamma_r$ to construct a quantity that is increasing overall from the asymptotic past to the asymptotic future. In fact, for a null geodesic with given parameters $(\GeodesicEnergy,\GeodesicLz,\GeodesicQ)$, one may use a simple turning point analysis to show that there is a number $\rorbit \in (\rp, \infty)$ so that the quantity $(r-\rorbit)\dot\Geodesic^r$ increases overall. This quantity corresponds to the energy $\GenEnergyGeodesic{\vecMGeodesic}$ for the vector field $\vecMGeodesic=-(r-\rorbit)\dr$.  
Following this idea, we may now look for a function $\fnMrGeodesic$ which will play the role of $-(r - \rorbit)$, so that for $\vecMGeodesic=\fnMrGeodesic\dr$, the energy $\GenEnergyGeodesic{\vecMGeodesic}$ is non-decreasing for all $\tau$ and not merely non-decreasing overall. For $a\not=0$, both $\rorbit$ and $\fnMrGeodesic$ will necessarily depend on both the Kerr parameters $(M,a)$ and the constants of motion $(\GeodesicEnergy,\GeodesicLz,\GeodesicQ)$; the function $\fnMrGeodesic$ will also depend on $r$, but no other variables. 

We define $\vecMGeodesic=\fnMrGeodesic\dr$ and emphasise to the reader that this is a map from the tangent bundle to the tangent bundle, which is not the same as a standard vector field, which is a map from the manifold to the tangent bundle. To derive a monotonicity formula, we wish to choose $\fnMrGeodesic$ so that $\GenEnergyGeodesic{\vecMGeodesic}$ has a non-negative derivative. We define the covariant derivative of $\vecMGeodesic$ by holding the values of $(\GeodesicEnergy,\GeodesicLz,\GeodesicQ)$ fixed and computing the covariant derivative as if $\vecMGeodesic$ were a regular vector field. Similarly, we define $\Lie_{\vecMGeodesic}\gMetric^{\alpha\beta}$ by fixing the values of the constants of geodesic motion. Since the constants of motion have zero derivative along null geodesics, equation \eqref{eq:deformForGeodesics} remains valid. 

The Kerr metric can be written as 
\begin{align*}
\KSigma\gMetric^{\alpha\beta}&= - \KDelta\dr^\alpha\dr^\beta - \frac{1}{\KDelta}\CurlyR^{\alpha\beta} ,
\end{align*}
where
\begin{align*}
\CurlyR^{\alpha\beta}&= -(r^2+a^2)^2\dt^\alpha\dt^\beta -4aMr \dt^{(\alpha}\dphi^{\beta)} +(\KDelta-a^2)\dphi^\alpha\dphi^\beta +\KDelta\TensorQ^{\alpha\beta}\\
\TensorQ^{\alpha\beta}&=\dtheta^\alpha\dtheta^\beta+\cot^2\theta\dphi^\alpha\dphi^\beta+a^2\sin^2\theta\dt^\alpha\dt^\beta .
\end{align*}
The double contraction of the tensor $\CurlyR^{\alpha\beta}$ with the tangent to a null geodesic gives the potential $\CurlyR(r;M,a;\GeodesicEnergy,\GeodesicLz,\GeodesicQ)=\CurlyR^{\alpha\beta}\dot\gamma_\alpha\dot\gamma_\beta$. If one ignores distracting factors of $\KSigma$, $\KDelta$, their derivatives, and constant factors, one finds that the most important terms in $- \Lie_{\vecMGeodesic}\gMetric^{\alpha\beta}\dot\gamma_\alpha\dot\gamma_\beta$ are 
\begin{align*}
-2(\dr\fnMrGeodesic)\dot\gamma_r\dot\gamma_r +\fnMrGeodesic(\dr\CurlyR^{\alpha\beta})\dot\gamma_\alpha\dot\gamma_\beta =-2(\dr\fnMrGeodesic)\dot\gamma_r\dot\gamma_r +\fnMrGeodesic(\dr\CurlyR).
\end{align*} 
The second term in this sum will be non-negative if $\fnMrGeodesic=\dr\CurlyR(r;M,a;\GeodesicEnergy,\GeodesicLz,\GeodesicQ)$. Recall that the vanishing of $\dr\CurlyR(r;M,a;\GeodesicEnergy,\GeodesicLz,\GeodesicQ)$ is one of the two conditions for orbiting null geodesics. With this choice of $\fnMrGeodesic$, the instability of the null geodesic orbits ensures that, for these null geodesics, the coefficient in the first term, $-2(\dr\fnMrGeodesic)$, will be positive. We can now perform the calculations more carefully to show that this non-negativity holds for all null geodesics. 

Since, up to reparameterization, null geodesics are conformally invariant, it is sufficient to work with the conformally rescaled metric $\KSigma\gMetric^{\alpha\beta}$. Furthermore, since $\gamma$ is a null geodesic, for any function $\fnMpGeodesic$, we may add $\fnMpGeodesic\KSigma\gMetric^{\alpha\beta}\dot\gamma_\alpha\dot\gamma_\beta$ wherever it is convenient. Thus, the change in $\GenEnergyGeodesic{\vecMGeodesic}$ is given as the integral of
\begin{align*}
\KSigma\dot\gamma_\alpha\dot\gamma_\beta \nabla^{(\alpha}\vecMGeodesic^{\beta)}
&=\left(-\frac12\Lie_{\vecMGeodesic}(\KSigma\gMetric^{\alpha\beta}) +\fnMpGeodesic\KSigma\gMetric^{\alpha\beta}\right)\dot\gamma_\alpha\dot\gamma_\beta 
\end{align*}

To progress further, choices of $\fnMrGeodesic$ and $\fnMpGeodesic$ must be made. For the choices we make here, the calculations are straight forward but lengthy.  Let $\fnMna$ and $\fnMnb$ be smooth functions of $r$ and the Kerr parameters $(M,a)$. Let $\DiffCurlyRTilde$ denote $\dr(\frac{\fnMna}{\KDelta}\CurlyR(r;M,a;\GeodesicEnergy,\GeodesicLz,\GeodesicQ))$ and choose $\fnMrGeodesic=\fnMna\fnMnb\DiffCurlyRTilde$ and $\fnMpGeodesic=(1/2)(\dr\fnMna)\fnMnb\DiffCurlyRTilde$. In terms of these functions,
\begin{align}
\KSigma\dot\gamma_\alpha\dot\gamma_\beta \nabla^{(\alpha}\vecMGeodesic^{\beta)}
&=\fnMnb (\DiffCurlyRTilde)^2 -\fnMna^{1/2}\KDelta^{3/2} \left(\dr\left(\fnMnb\frac{\fnMna^{1/2}}{\KDelta^{1/2}}\DiffCurlyRTilde\right)\right) \dot\gamma_r^2. 
\label{eq:GeodesicBulk}
\end{align}
If we now take $\fnMna=\fnMca=\KDelta(r^2+a^2)^{-2}$ and 
$\fnMnb=\fnMcb=(r^2+a^2)^4/(3r^2-a^2)$, then 
\begin{subequations}
\begin{align}
\DiffCurlyRTilde
&= 4Ma\frac{3r^2-a^2}{(r^2+a^2)^3}\GeodesicEnergy\GeodesicLz\\
&\quad-2\frac{r^3-3Mr-a^2r+Ma^2}{(r^2+a^2)^3}\GeodesicLz^2
-2\frac{r^3-3Mr+a^2r+Ma^2}{(r^2+a^2)^3}\GeodesicQ ,\nonumber\\
\dr\left(\fnMnb\frac{\fnMna^{1/2}}{\KDelta^{1/2}}\DiffCurlyRTilde\right)
&=-2\frac{3r^4+a^4}{(3r^2-a^2)^2}\GeodesicLz^2
  -2\frac{3r^4-6a^2r^2-a^4}{(3r^2-a^2)^2}\GeodesicQ.\label{eq:GeodesicsDDiffCurlyRTTilde}
\end{align}
\end{subequations}
Since $\GeodesicQ$ is non-negative  it follows that the right-hand side of \eqref{eq:GeodesicsDDiffCurlyRTTilde} is non-positive and that the right-hand side of equation \eqref{eq:GeodesicBulk} is non-negative. 
Since equation \eqref{eq:GeodesicBulk} gives the rate of change, the energy $\GenEnergyGeodesic{\vecMGeodesic}$ is monotone. 

These calculations reveal useful information about the geodesic motion. The positivity of the term on the right-hand side of \eqref{eq:GeodesicsDDiffCurlyRTTilde} shows that $\DiffCurlyRTilde$ can have at most one root, which must be simple. In turn, this shows that $\CurlyR$ can have at most two roots, as previously asserted. 

The role of orbiting geodesics can be seen in equation \eqref{eq:GeodesicBulk}. Along null geodesics for which $\CurlyR$ has a double root, the double root occurs at the root of $\DiffCurlyRTilde$, so it is convenient to think of the corresponding value of $r$ as being $\rorbit$. In particular, this root is where null geodesics orbit the black hole with a constant value of $r$. The first term in \eqref{eq:GeodesicBulk} vanishes at the root of $\DiffCurlyRTilde$, as it must so that $\GenEnergyGeodesic{\vecMGeodesic}$ can be constantly zero on the orbiting null geodesics. When $a=0$, the quantity $\DiffCurlyRTilde$ reduces to $-2(r-3M)r^{-4}(\GeodesicLz^2+\GeodesicQ)$, so that the orbits occur at $r=3M$. The continuity in $a$ of $\DiffCurlyRTilde$ guarantees that its root converges to $3M$ as $a\rightarrow 0$ for fixed $(\GeodesicEnergy,\GeodesicLz,\GeodesicQ)$. 

From the geometrics optics approximation, it is natural to imagine that the monotone quantity constructed in this section for null geodesics might imply the existence of monotone quantities for fields, which would imply some form of dispersion. For the wave equation, this is true. In fact, the above discussion, when carried over to the case of the wave equation, closely parallels the proof of the Morawetz estimate for the wave equation given in \cite{andersson:blue:0908.2265}. The quantity 
$(\dot\Geodesic_\alpha\dot\Geodesic_\beta)(\nabla^{(\alpha}\vecX^{\beta)}) $ corresponds to the Morawetz density, i.e. the divergence of the momentum corresponding to the Morawetz vector field. The role of the conserved quantities $(\GeodesicEnergy,\GeodesicLz,\GeodesicQ)$ for geodesics is played, in the case of fields, by the energy fluxes defined via second order symmetry operators corresponding to these conserved quantities. 
The fact that the quantity $\CurlyR$ vanishes quadratically on the trapped orbits is reflected in the Morawetz estimate for fields, by a quadratic degeneracy of the Morawetz density at the trapped orbits.

\subsection{Characterizations of Kerr} \label{sec:characterization} 

Consider a vacuum Cauchy data set $(\SpaceSlice, \Spaceh_{ij}, \SpaceK_{ij})$. We say that $(\SpaceSlice, \Spaceh_{ij}, \SpaceK_{ij})$ is asymptotically flat if $\SpaceSlice$ has an end $\Re^3 \setminus B(0, R)$ with a coordinate system $(x^i)$ such that 
\begin{equation}\label{eq:asymptflat} 
\Spaceh_{ij} = \delta_{ij} + O_\infty(r^{\alpha}) , \quad \SpaceK_{ij} = O_\infty(r^{\alpha -1})
\end{equation} 
for some $\alpha < - 1/2$.  The Cauchy data set $(\SpaceSlice, \Spaceh_{ij}, \SpaceK_{ij})$ is asymptotically Schwarzschildean if 
\begin{subequations}\label{eq:boosteddecay}
\begin{eqnarray}
&& h_{ij} = -\left(1+\frac{2A}{r}\right)\delta_{ij} - \frac{\alpha}{r}\left( \frac{2x_ix_j}{r^2}-\delta_{ij}\right)+o_\infty(r^{-3/2}), \label{BoostedDecay1} \\
&& k_{ij} = \frac{\beta}{r^2}\left( \frac{2x_ix_j}{r^2}-\delta_{ij} \right) + o_\infty(r^{-5/2}),  \label{BoostedDecay2}
\end{eqnarray}
\end{subequations}
where $A$ is a constant, and $\alpha,\beta$ are functions on $S^2$, see \cite[\S 6.5]{backdahl:valiente-kroon:2010:MR2753388} for details. 
Here, the symbols 
$\ordo_\infty(r^{\alpha})$ are defined in terms of weighted Sobolev spaces, see \cite[\S 6.2]{backdahl:valiente-kroon:2010:MR2753388} for details.

If $(\Mcal, \met_{ab})$ is vacuum and contains a Cauchy surface $(\SpaceSlice, \Spaceh_{ij}, \SpaceK_{ij})$ satisfying \eqref{eq:asymptflat} or \eqref{eq:boosteddecay}, then $(\Mcal, \met_{ab})$ is asymptotically flat, respectively asymptotically Schwarzschildean, at spatial infinity.
In this case there is a spacetime coordinate system $(x^\alpha)$ such that $\met_{\alpha\beta}$ is asymptotic to the Minkowski line element with asymptotic conditions compatible with \eqref{eq:boosteddecay}. %\eqref{eq:asymptSchw}.  
For such spacetimes, the ADM 4-momentum $P^\mu$ is well defined. The positive mass theorem states that $P^\mu$ is future directed causal $P^\mu P_\mu \geq 0$ (where the contraction is in the asymptotic Minkowski line element), $P^0 \geq 0$, and gives conditions under which $P^\mu$ is strictly timelike. This holds in particular if $\SpaceSlice$ contains an apparent horizon. 

Mars \cite{mars:2000CQGra..17.3353M}  
has given a characterization of the Kerr spacetime as an asymptotically flat vacuum spacetime with a Killing field $\xi^a$ asymptotic to a time translation, positive mass, and an additional condition on the Killing form $F_{AB} = (\sCurl_{1,1} \xi)_{AB}$, 
$$
\Psi_{ABCD} F^{CD} \propto F_{AB}
$$
A characterization in terms of algebraic invariants of the Weyl tensor has been given by Ferrando and Saez \cite{ferrando:saez:2009CQGra..26g5013F}. 
The just mentioned characterizations are in terms of spacetime quantities. The fact that Killing spinor initial data propagates, (see Theorem~\ref{Theorem:KSData})  
can be used to formulate a characterization of Kerr in terms of Cauchy data, see \cite{backdahl:valiente-kroon:2010PhRvL.104w1102B, backdahl:valiente-kroon:2010:MR2753388, backdahl:valente-kroon:2011RSPSA.467.1701B, backdahl:valiente-kroon:2012JMP....53d2503B}

We here give a characterization in terms spacetimes admitting a Killing spinor of valence $(2,0)$. 
\begin{theorem} \label{thm:kerrchar} 
Assume that $(\Mcal, \met_{ab})$ is vacuum, asymptotically Schwarzschildean at spacelike infinity, and contains a Cauchy slice bounded by an apparent horizon. Assume further $(\Mcal, \met_{ab})$ admits a non-vanishing Killing spinor $\kappa_{AB}$ of valence $(2,0)$. Then $(\Mcal, \met_{ab})$ is locally isometric to the Kerr spacetime. 
\end{theorem}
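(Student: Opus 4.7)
The plan is to reduce the statement to the characterization of Kerr due to Mars \cite{mars:2000CQGra..17.3353M}, using the non-vanishing Killing spinor $\kappa_{AB}$ to produce an asymptotically translational Killing vector and to verify the alignment condition on the associated Killing form. First, since $(\Mcal,\met_{ab})$ is vacuum, $\Phi_{ABA'B'}=0$, so the aligned matter condition holds trivially and the argument recalled around equation \eqref{eq:xidef} shows that $\xi_{AA'}=(\sCurlDagger_{2,0}\kappa)_{AA'}$ is a (complex) Killing field on $\Mcal$. The curvature restriction \eqref{eq:curv-restrict-valence-2} forces the spacetime to be of Petrov type $\PetrovD$, $\PetrovN$ or $\PetrovO$ at every point where $\kappa_{AB}\ne 0$, i.e.\ everywhere by hypothesis.

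Next I would rule out the non-$\PetrovD$ possibilities. Type $\PetrovO$ combined with the vacuum equations gives $\Psi_{ABCD}=0$ and $R_{abcd}=0$, hence a flat spacetime, which is incompatible with the existence of a Cauchy slice bounded by an apparent horizon (as the latter forces strictly positive ADM mass via the positive mass theorem). Type $\PetrovN$ vacuum spacetimes admitting a valence-$(2,0)$ Killing spinor fall into Lewandowski's classification \cite{lewandowski:1991CQGra...8L..11L} and are pp-wave-type geometries that are not asymptotically Schwarzschildean in the sense of \eqref{eq:boosteddecay}. Hence $(\Mcal,\met_{ab})$ is everywhere of Petrov type $\PetrovD$, and by the discussion in section~\ref{sec:specialgeom} we have $\kappa_{AB}=-2\kappa_{1}o_{(A}\iota_{B)}$ with $\kappa_{1}\propto\Psi_{2}^{-1/3}$, where $o_{A},\iota_{A}$ are the repeated principal spinors of $\Psi_{ABCD}$.

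Now I would verify the hypotheses of Mars' theorem. The Killing form $F_{AB}=(\sCurl_{1,1}\xi)_{AB}$ is computed directly from $\kappa_{AB}$: applying $\sCurl_{1,1}\sCurlDagger_{2,0}$ and using the Petrov-$\PetrovD$ commutator identities together with the Bianchi identity in vacuum gives $F_{AB}\propto \Psi_2^{1/3} o_{(A}\iota_{B)}$, so $F_{AB}$ has the same principal spinors as $\Psi_{ABCD}$. Contracting $\Psi_{ABCD}=\tfrac{1}{6}\Psi_{2}o_{(A}o_{B}\iota_{C}\iota_{D)}$ with $F^{CD}$ then yields the Mars alignment condition
\begin{equation*}
\Psi_{ABCD}F^{CD}=\lambda\,F_{AB}
\end{equation*}
for a scalar $\lambda$. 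Reality of (a suitable real multiple of) $\xi^{a}$, needed to place us in the generalized Kerr-NUT class, follows because the asymptotically Schwarzschildean fall-off \eqref{BoostedDecay1}--\eqref{BoostedDecay2} forces $\Psi_2$ to have the Schwarzschild leading behavior $-M/r^{3}+o(r^{-3/2})$, and hence $\kappa_{1}\propto\Psi_{2}^{-1/3}$ and $\xi^{a}$ acquire the asymptotic form $\partial_{t}+o(1)$ of a unit time translation; in particular $\mathrm{Im}\,\xi^{a}=o(1)$, and being a Killing field it must vanish identically, fixing the NUT parameter to zero. The apparent horizon assumption, via the positive mass theorem, supplies the strict positivity of ADM mass required by Mars.

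The main obstacle is the asymptotic analysis in the third step: one must extract, from the decay conditions \eqref{eq:boosteddecay} on $(\Spaceh_{ij},\SpaceK_{ij})$ and the propagation of the Killing spinor data (Theorem~\ref{Theorem:KSData}), sufficient control on $\kappa_{AB}$ near spatial infinity to conclude that $\xi^{a}$ is asymptotic to a unit time translation rather than to an arbitrary complex linear combination of translation and NUT-like generator. This is precisely the analysis carried out in \cite{backdahl:valiente-kroon:2010:MR2753388,backdahl:valiente-kroon:2012JMP....53d2503B} in the weighted Sobolev setting, and I would appeal to it to close the argument before invoking Mars' theorem to conclude local isometry to a member of the Kerr family.
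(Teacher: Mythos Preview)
Your approach is broadly workable but takes a different route from the paper's, and the asymptotic step where you claim $\xi^a\to\partial_t+o(1)$ is under-justified.

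The paper does \emph{not} attempt to verify the Mars alignment condition or the reality of $\xi^a$ directly. Instead it argues: (i) the apparent-horizon hypothesis plus the positive mass theorem (via \cite{chrusciel:bartnik:2003math......7278C}) gives a strictly timelike ADM momentum $P^\mu P_\mu>0$; (ii) asymptotic Schwarzschildean decay forces Petrov type $\PetrovD$ \emph{only in a neighbourhood of spatial infinity}, which suffices to write $\kappa_1\propto\Psi_2^{-1/3}=O(r)$ there and conclude $\xi^a=O(1)$, hence asymptotic to some constant translation $A^\mu$; (iii) $A^\mu\ne 0$ by \cite{aksteiner:andersson:2013CQGra..30o5016A}; (iv) Beig--Chru\'sciel \cite{beig:chrusciel:1996JMP....37.1939B} then forces $A^\mu\propto P^\mu$; (v) this places one exactly in the hypotheses of \cite[Theorem~B.3]{backdahl:valente-kroon:2011RSPSA.467.1701B}, which finishes. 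No global Petrov-type classification, no direct computation of $F_{AB}$, and no separate reality argument are needed.

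By contrast you try to establish global type $\PetrovD$ and then feed Mars' theorem. Two points deserve care. First, your exclusion of type $\PetrovN$ via Lewandowski only applies to spacetimes that are type $\PetrovN$ \emph{everywhere}; the curvature constraint \eqref{eq:curv-restrict-valence-2} allows the Petrov type to degenerate on lower-dimensional sets, so ``everywhere type $\PetrovD$'' does not follow as stated. (The paper sidesteps this by working only near infinity.) Second, and more seriously, your claim that $\xi^a=\partial_t+o(1)$ with real limit needs the chain of results the paper invokes: a priori the proportionality constant in $\kappa_1\propto\Psi_2^{-1/3}$ is complex, and showing the limit $A^\mu$ is nonzero and proportional to the (real, timelike) $P^\mu$ is exactly the content of \cite{aksteiner:andersson:2013CQGra..30o5016A} and \cite{beig:chrusciel:1996JMP....37.1939B}. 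Once you have those, your ``$\mathrm{Im}\,\xi^a$ is Killing and $o(1)$, hence zero'' step is fine, and your Mars route closes; but at that point you have essentially reproduced the paper's asymptotic argument and might as well quote \cite[Theorem~B.3]{backdahl:valente-kroon:2011RSPSA.467.1701B} directly.
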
 
\begin{proof} 
Let $P^\mu$ be the ADM 4-momentum vector for $\Mcal$. 
By the positive mass theorem, $P^\mu P_\mu \geq 0$. In the case where $\Mcal$ contains a Cauchy surface bounded by an apparent horizon, then $P^\mu P_\mu > 0$ by  
\cite[Remark 11.5]{chrusciel:bartnik:2003math......7278C}\footnote{Section 11 appears only in the ArXiv version of \cite{chrusciel:bartnik:2003math......7278C}.}. 

Recall that a spacetime with a Killing spinor of valence $(2,0)$ is of Petrov type $\PetrovD, \PetrovN$, or $\PetrovO$. 
From asymptotic flatness and the positive mass theorem, we have $C_{abcd} C^{abcd} = O(1/r^6)$, and hence there is a neighbourhood of spatial infinity where $\Mcal$ is Petrov type $\PetrovD$. It follows that near spatial infinity,  
$\kappa_{AB} = -2 \kappa_1 o_{(A} \iota_{B)}$, with $\kappa_1 \propto \Psi_2^{-1/3} = O(r)$. 
It follows from our asymptotic conditions that the Killing field 
$\xi_{AA'} = (\sCurlDagger_{2,0}\kappa)_{AB}$ is $O(1)$ and hence asymptotic to a translation, $\xi^\mu \to A^\mu$ as $r \to \infty$, for some constant vector $A^\mu$. It follows from the discussion in \cite[\S 4]{aksteiner:andersson:2013CQGra..30o5016A} that $A^\mu$ is non-vanishing. 
Now, by \cite[\S III]{beig:chrusciel:1996JMP....37.1939B}, it follows that in the case $P^\mu P_\mu > 0$, then $A^\mu$ is proportional to $P^\mu$, see also \cite{beig:omurchadha:1987AnPhy.174..463B}.
We are now in the situation considered in the work by 
B\"ackdahl and Valiente-Kroon, see \cite[Theorem B.3]{backdahl:valente-kroon:2011RSPSA.467.1701B}, and hence we can conclude that $(\Mcal, \met_{ab})$ is locally isometric to the Kerr spacetime. 
\end{proof}  

\begin{remark}
\begin{enumerate} 
\item 
This result can be turned into a characterization in terms of Cauchy data along the lines in \cite{backdahl:valiente-kroon:2010:MR2753388} using Theorem~\ref{Theorem:KSData}.
\item Theorem \ref{thm:kerrchar} can be viewed as a variation on the Kerr characterization given in 
\cite[Theorem B.3]{backdahl:valente-kroon:2011RSPSA.467.1701B}. In the version given here, the asymptotic conditions on the Killing spinor have been removed.  
\end{enumerate} 
\end{remark}

\section{Hidden symmetries} \label{sec:hidsym}

\subsection{Symmetry operators}  \label{sec:symop} 
A symmetry operator for a field equation is an operator which takes solutions to solutions.  In the paper \cite{andersson:blue:0908.2265}, two of the authors have given a proof of a Morawetz estimate for the scalar wave equation in the Kerr spacetime, which makes use of higher-order conserved currents constructed from the scalar field, using second order symmetry operators related to the Carter constant. In order to generalize this approach to higher spin fields, it is important to gain an understanding of the symmetry operators for this case.  
In the paper \cite{ABB:symop:2014CQGra..31m5015A} we have given a complete characterization of those spacetimes admitting symmetry  operators of second order for the field equations of spins $0, 1/2, 1$, i.e. the conformal wave equation, the Dirac-Weyl equation and the Maxwell equation, respectively, and given the general form of the symmetry operators, up to equivalence. In order to simplify the presentation here, we shall discuss only the spin-$1$ case, and restrict to spacetimes admitting a valence $(2,0)$ Killing spinor $\kappa_{AB}$. 

There are two spin-$1$ equations (left and right) depending on the helicity  of the spinor. These are 
$$ 
(\sCurlDagger_{2,0} \phi)_{AA'} = 0 \quad \text{(left), \quad and } \quad (\sCurl_{0,2} \varphi)_{AA'} = 0 \quad \text{ (right)}
$$
 The real Maxwell equation $\nabla^a F_{ab} = 0$, $\nabla_{[a} F_{bc]} = 0$ for a real two form $F_{ab} = F_{[ab]}$ is equivalent to either the right or the left Maxwell equations. 
Henceforth we will always assume that $\phi_{AB}$ solves the left Maxwell equation.

Given a conformal Killing vector $\GenVec^{AA'}$, we follow \cite[Equations (2) and (15)]{anco:pohjanpelto:2003:CRM:MR2056970}, see also \cite{anco:pohjanpelto:2003:ProcRSoc:MR1997098},
and define a conformally weighted Lie derivative acting on a symmetric valence $(2s,0)$ spinor field as follows 
\begin{definition}
For $\GenVec^{AA'} \in \ker \sTwist_{1,1}$, and $\varphi_{A_1\dots A_{2s}}\in \mathcal{S}_{2s,0}$, we define
\begin{align}
\hat{\mathcal{L}}_{\GenVec}\varphi_{A_1\dots A_{2s}}\equiv{}&\GenVec^{BB'} \nabla_{BB'}\varphi_{A_1\dots A_{2s}}+s \varphi_{B(A_2\dots A_{2s}} \nabla_{A_1)B'}\GenVec^{BB'}
 + \tfrac{1-s}{4} \varphi_{A_1\dots A_{2s}} \nabla^{CC'}\GenVec_{CC'}.
\end{align}
\end{definition}
If $\GenVec^a$ is a conformal Killing field, then 
$(\sCurlDagger_{2,0} \hat{\mathcal{L}}_\GenVec \varphi)_{AA'} = \mathcal{L}_\GenVec (\sCurlDagger_{2,0} \varphi)_{AA'}$.
It follows that the first order operator $\varphi \to \hat{\mathcal{L}}_\GenVec \varphi$ 
defines a symmetry operator of first order, which is also of the first kind.  For the equations of spins $0$ and $1$, the only first order symmetry operators are given by conformal Killing fields. For the spin-$1$ equation, we may have symmetry operators of the first kind, taking left fields to left, i.e. $\ker \sCurlDagger \mapsto \ker \sCurlDagger$ and of the second kind, taking left fields to right, $\ker \sCurlDagger \mapsto \ker \sCurl$.  
Observe that symmetry operators of the first kind are linear symmetry operators in the usual sense, while symmetry operators of the second kind followed by complex conjugation gives anti-linear symmetry operators in the usual sense.

Recall that the Kerr spacetime admits a constant of motion for geodesics $\geodcarter$ which is not reducible to the conserved quantities defined in terms of Killing fields, but rather is defined in terms of a Killing tensor. Similarly, in a spacetime with Killing spinors, the geometric field equations may admit symmetry operators of order greater than one, not expressible in terms of the symmetry operators defined in terms of (conformal) Killing fields. We refer to such symmetry operators as ``hidden symmetries''.  

In general, the existence of symmetry operators of the second order implies the existence of Killing spinors (of valence $(2,2)$ for the conformal wave equation and for Maxwell symmetry operators of the first kind for Maxwell, or $(4,0)$ for Maxwell symmetry operators for of the second kind) satisfying certain auxiliary conditions. The conditions given in \cite{ABB:symop:2014CQGra..31m5015A} are are valid in arbitrary 4-dimensional spacetimes, with no additional conditions on the curvature. As shown in \cite{ABB:symop:2014CQGra..31m5015A}, the existence of a valence $(2,0)$ Killing spinor is a sufficient condition for the existence of second order symmetry operators for the spin-$s$ equations, for $s=0,1/2,1$. 

\begin{remark} \label{rem:remark5:2}
\begin{enumerate} 
\item If $\kappa_{AB}$ is a Killing spinor of valence $(2,0)$, then $L_{ABA'B'} =  \kappa_{AB} \bar \kappa_{A'B'}$ and $L_{ABCD} = \kappa_{AB} \kappa_{CD}$ are Killing spinors of valence $(2,2)$ and $(4,0)$ satisfying the auxiliary conditions given in \cite{ABB:symop:2014CQGra..31m5015A}. 

\item In the case of aligned matter with respect to $\Psi_{ABCD}$, any valence $(4,0)$ Killing spinor $L_{ABCD}$ factorizes, i.e. $L_{ABCD} = \kappa_{AB} \kappa_{CD}$ for some Killing spinor $\kappa_{AB}$ of valence $(2,0)$ \cite[Theorem 8]{ABB:symop:2014CQGra..31m5015A}. An example of a spacetime with aligned matter which admits a valence $(2,2)$ Killing spinor that does not factorize is given in \cite[\S 6.3]{ABB:symop:2014CQGra..31m5015A}, see also \cite{michel:radoux:silhan:2013arXiv1308.1046M}. 
\end{enumerate} 
\end{remark}

\begin{proposition}[\protect{\cite{ABB:currents}}]
\label{prop:SymOpPot}
\begin{enumerate} 
\item The general symmetry operator of the first kind for the Maxwell field, of order at most two, is of the form 
\begin{align}
\chi_{AB}={}&Q \phi_{AB}+(\sCurl_{1,1} A)_{AB}, \label{eq:SymFirstPot} 
\end{align} 
where $\phi_{AB}$ is a Maxwell field, and $A_{AA'}$ is a  linear concomitant\footnote{A concomitant is a covariant, local partial differential operator.} of first order, such that $A_{AA'} \in \ker \sCurlDagger_{1,1}$ and $Q \in \ker \sTwist_{0,0}$, i.e. locally constant.
\item
The general symmetry operator of the second kind for the Maxwell field is of the form 
\begin{align}
\omega_{A'B'}={}&(\sCurlDagger_{1,1} B)_{A'B'}, \label{eq:SymSecondPot} 
\end{align} 
where $B_{AA'}$ is a first order linear concomitant of $\phi_{AB}$ such that $B_{AA'} \in \ker\sCurl_{1,1}$.
\end{enumerate} 
\end{proposition}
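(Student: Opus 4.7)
My plan is to prove each direction of each statement separately, using a single spinor identity from $d^2 = 0$ for sufficiency and the classification of \cite{ABB:symop:2014CQGra..31m5015A} for necessity.

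For sufficiency, the essential ingredient is the following spinor identity, which comes from decomposing the closed 2-form $dA$ into its irreducible parts
\[
(dA)_{ab} = (\sCurl_{1,1}A)_{AB}\bar\epsilon_{A'B'} + \epsilon_{AB}(\sCurlDagger_{1,1}A)_{A'B'},
\]
and reading off the consequence of $d(dA)=0$, namely
\[
(\sCurlDagger_{2,0}\sCurl_{1,1}A)_{AA'} = (\sCurl_{0,2}\sCurlDagger_{1,1}A)_{AA'}.
\]
This identity can also be derived directly from the commutator formulas between the fundamental operators collected in \cite{ABB:symop:2014CQGra..31m5015A}. If $A\in\ker\sCurlDagger_{1,1}$, the right-hand side vanishes and $(\sCurl_{1,1}A)_{AB}$ solves the left Maxwell equation; adding $Q\phi_{AB}$ with $Q\in\ker\sTwist_{0,0}$ locally constant preserves this, since $\sCurlDagger_{2,0}(Q\phi)=Q\sCurlDagger_{2,0}\phi=0$. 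For (2) the same identity with primed and unprimed roles exchanged shows that if $B\in\ker\sCurl_{1,1}$, then $(\sCurlDagger_{1,1}B)_{A'B'}$ is a right Maxwell field.

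For necessity in (1), I would invoke the complete classification of second-order symmetry operators of the first kind for the Maxwell equation given in \cite{ABB:symop:2014CQGra..31m5015A}. Their result expresses every such operator, modulo addition of $Q\phi_{AB}$ with $Q\in\ker\sTwist_{0,0}$ and modulo equivalence, as an explicit second-order expression in $\phi_{AB}$ with coefficients built from a valence $(2,2)$ Killing spinor $L_{ABA'B'}$; under the hypothesis that $\Mcal$ carries $\kappa_{AB}$, one may take $L_{ABA'B'}=\kappa_{AB}\bar\kappa_{A'B'}$. The task then is to construct a first-order concomitant $A_{AA'}$ of $\phi_{AB}$, built algebraically from $\kappa_{AB}$ and its derivatives, such that $\sCurl_{1,1}A$ reproduces the classified operator and $\sCurlDagger_{1,1}A=0$. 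Following the Hertz-potential construction familiar from the type $\PetrovD$ literature, the natural ansatz is of the form $A_{AA'}=\kappa_A{}^B\nabla_{BA'}\sigma + (\text{terms algebraic in }\phi\text{ and }\kappa)$, where $\sigma$ is a scalar linear in $\phi$ with $\kappa$-dependent coefficients, typically of the form $\sigma=\kappa^{CD}\phi_{CD}$. The analogous construction with $L_{ABCD}=\kappa_{AB}\kappa_{CD}$ produces $B_{AA'}$ in part (2).

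The main obstacle is verifying the gauge-type condition $\sCurlDagger_{1,1}A=0$ on the ansatz. When $\sCurlDagger_{1,1}$ is applied, a second-order expression in $\phi$ with coefficients depending on $\kappa$ and its first and second derivatives arises; showing this vanishes requires systematically invoking the Killing spinor equation $(\sTwist_{2,0}\kappa)_{ABCA'}=0$, its algebraic curvature consequence $\Psi_{(ABC}{}^E\kappa_{D)E}=0$ from \eqref{eq:curv-restrict-valence-2}, and the Maxwell equation $(\sCurlDagger_{2,0}\phi)_{AA'}=0$ to force all terms to cancel. This calculation is intricate but algorithmic, and is most effectively performed using the SymManipulator package \cite{Bae11a}; it constitutes the bulk of the argument in \cite{ABB:currents}. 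The parallel calculation for part (2) differs only in the placement of primed indices, and the absence of a $Q\phi_{AB}$ term there reflects that no order-zero operator sends $\SymSpinSec_{2,0}$ to $\SymSpinSec_{0,2}$.
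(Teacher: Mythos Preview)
The paper does not actually prove this proposition; it is quoted from \cite{ABB:currents}. What the paper does supply is the explicit first-order concomitants $A_{AA'}$ and $B_{AA'}$ in equations \eqref{eq:ABdef}, together with Lemma \ref{lem:ABpot} (from \cite{ABB:symop:2014CQGra..31m5015A}) verifying that they lie in the appropriate kernels. So there is no detailed argument here to compare your proposal against, but your proposal can still be assessed on its own terms.

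Your sufficiency argument is correct and is the natural one: the identity $(\sCurlDagger_{2,0}\sCurl_{1,1}A)_{AA'}=(\sCurl_{0,2}\sCurlDagger_{1,1}A)_{AA'}$, read off from $d^2=0$ or from the commutator formulas in \cite{ABB:symop:2014CQGra..31m5015A}, immediately gives that $\sCurl_{1,1}A$ is a left Maxwell field when $A\in\ker\sCurlDagger_{1,1}$, and the dual statement for part (2).

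Your necessity argument, however, has a genuine gap. Proposition \ref{prop:SymOpPot} is a statement about \emph{all} second-order symmetry operators in an arbitrary spacetime. The classification of \cite{ABB:symop:2014CQGra..31m5015A} parametrizes these by valence $(2,2)$ Killing spinors $L_{ABA'B'}$ (first kind) and valence $(4,0)$ Killing spinors $L_{ABCD}$ (second kind), each subject to auxiliary conditions. You assume $L_{ABA'B'}=\kappa_{AB}\bar\kappa_{A'B'}$ for some valence $(2,0)$ Killing spinor $\kappa_{AB}$, but this factorization need not hold: Remark \ref{rem:remark5:2}(2) and \cite[\S 6.3]{ABB:symop:2014CQGra..31m5015A} give an explicit example of a non-factorizing $(2,2)$ Killing spinor. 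For such spacetimes your construction of $A_{AA'}$ does not even get off the ground. The correct necessity argument must build $A_{AA'}$ directly from the general $L_{ABA'B'}$ (and $B_{AA'}$ from the general $L_{ABCD}$), without assuming factorization. Separately, your ansatz $A_{AA'}=\kappa_A{}^B\nabla_{BA'}\sigma+\cdots$ with $\sigma=\kappa^{CD}\phi_{CD}$ does not match the actual formula \eqref{eq:Adef} used in the paper even in the factorizable case, so the ``algorithmic'' verification you describe would not close as written.
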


\begin{remark} 
The operators $\sCurlDagger_{1,1}$ and $\sCurl_{1,1}$ are the adjoints of the left and right Maxwell operators $\sCurlDagger_{2,0}$ and $\sCurl_{0,2}$.  
As we shall see in section \ref{sec:conscurr} below, conserved currents for the Maxwell field can be characterized in terms of solutions of the adjoint Maxwell equations
\begin{subequations}\label{eq:AdjointMaxwell}
\begin{align} 
(\sCurlDagger_{1,1} A)_{A'B'} &= 0 \label{eq::LeftAdjointMaxwell} \\
(\sCurl_{1,1}B)_{AB} &= 0 \label{eq::RightAdjointMaxwell} 
 \end{align} 
 \end{subequations} 
\end{remark}

\begin{definition}
Given a spinor 
$\kappa_{AB} \in \SymSpinSec_{2,0}$ we define the operators $\sExt_{2,0}: \SymSpinSec_{2,0}\rightarrow \SymSpinSec_{2,0}$ and $\bar{\sExt}_{0,2}: \SymSpinSec_{0,2}\rightarrow \SymSpinSec_{0,2}$ by
\begin{subequations}
\begin{align}
(\sExt_{2,0}\varphi)_{AB}={}&-2 \kappa_{(A}{}^{C}\varphi_{B)C},\\
(\bar{\sExt}_{0,2}\phi)_{A'B'}={}&-2 \bar{\kappa}_{(A'}{}^{C'}\phi_{B')C'}.
\end{align}
\end{subequations}
\end{definition}
Let $\kappa_i$ be the Newman-Penrose scalars for $\kappa_{AB}$. If $\kappa_{AB}$ is of algebraic type $\{1,1\}$ then $\kappa_0 = \kappa_2 = 0$, in which case $\kappa_{AB} = - 2 \kappa_1 o_{(A} \iota_{B)}$. A direct calculations gives the following result. 
\begin{lemma} \label{lem:Ephi-extreme} 
Let $\kappa_{AB} \in \SymSpinSec_{2,0}$ and assume that $\kappa_{AB}$ is of algebraic type $\{1,1\}$. Then the operators $\sExt_{2,0}, \bar{\sExt}_{2,0}$ remove the middle component and rescale the extreme components as 
\begin{subequations}
\begin{align}
(\sExt_{2,0}\varphi)_{0}={}&-2 \kappa_1 \varphi_0,&
(\sExt_{2,0}\varphi)_{1}={}&0,&
(\sExt_{2,0}\varphi)_{2}={}&2 \kappa_1 \varphi_2,\\
(\bar{\sExt}_{0,2}\phi)_{0'}={}&-2 \bar{\kappa}_{1'} \phi_{0'},&
(\bar{\sExt}_{0,2}\phi)_{1'}={}&0,&
(\bar{\sExt}_{0,2}\phi)_{2'}={}&2 \bar{\kappa}_{1'} \phi_{2'}.
\end{align}
\end{subequations}
\end{lemma}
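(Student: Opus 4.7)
The plan is a direct component-by-component verification in the spin dyad. Since $\kappa_{AB}$ is of algebraic type $\{1,1\}$, its two principal spinors coincide pairwise, so the general expansion $\kappa_{AB} = \alpha_{(A}\beta_{B)}$ collapses to $\kappa_{AB} = -2\kappa_1 o_{(A}\iota_{B)} = -\kappa_1(o_A\iota_B + \iota_A o_B)$, with $\kappa_0 = \kappa_2 = 0$. From here I would write $\kappa_A{}^C = -\kappa_1(o_A\iota^C + \iota_A o^C)$ and substitute into
\[
(\sExt_{2,0}\varphi)_{AB} = -\kappa_A{}^C\varphi_{BC} - \kappa_B{}^C\varphi_{AC},
\]
which is the expansion of the symmetrized definition.

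The three components are obtained by contracting with $o^Ao^B$, $\iota^A o^B$, and $\iota^A\iota^B$ respectively. The only identities needed are the nullity relations $o_Ao^A = \iota_A\iota^A = 0$ together with the normalization $o_A\iota^A = 1$ and the resulting $\iota_A o^A = -1$, forced by the skew-symmetry of $\epsilon_{AB}$. For the extreme components the two symmetrized terms in $(\sExt_{2,0}\varphi)_{AB}$ contribute equally after contraction. In the $0$-component only the piece proportional to $\iota_A o^A = -1$ survives, the rest reduces to $o^Bo^C\varphi_{BC} = \varphi_0$, and the overall sign gives $-2\kappa_1\varphi_0$. The $2$-component is dual: only the piece proportional to $o_A\iota^A = +1$ survives, producing $+2\kappa_1\varphi_2$.

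For the middle component $(\sExt_{2,0}\varphi)_1 = (\sExt_{2,0}\varphi)_{AB}\iota^A o^B$ the two symmetrized terms no longer agree. A short computation shows that one contributes $+\kappa_1\varphi_1$ and the other $-\kappa_1\varphi_1$, so they cancel, which is the promised ``removal of the middle component.'' The statement for $\bar{\sExt}_{0,2}$ then follows by complex conjugation, since it is defined from $\bar\kappa_{A'B'}$ exactly as $\sExt_{2,0}$ is defined from $\kappa_{AB}$, and $\bar\kappa_{A'B'}$ is again of algebraic type $\{1,1\}$ with extreme component $\bar\kappa_{1'}$.

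The computation is purely mechanical and contains no conceptual obstacle. The only point requiring care is the sign convention under raising and lowering spinor indices, which produces the asymmetry $o_A\iota^A = 1$ versus $o^A\iota_A = -1$; once this is tracked carefully the three identities follow in a few lines, and one could equivalently read them off directly from standard tables of GHP projections of bilinear expressions in $o_A,\iota_A$.
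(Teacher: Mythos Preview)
Your proposal is correct and is precisely the direct dyad computation the paper alludes to with the phrase ``A direct calculation gives the following result''; the paper provides no further details, so your argument in fact supplies what the paper omits. The index and sign tracking you describe is accurate, and invoking complex conjugation for the $\bar{\sExt}_{0,2}$ case is the right shortcut.
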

\begin{remark} 
If $\kappa_{AB}$ is a Killing spinor in a Petrov type $\PetrovD$  spacetime, then $\kappa_{AB}$ is of algebraic type $\{1,1\}$. 
\end{remark}

\begin{definition} 
Define the first order 1-form linear concomitants $A_{AA'}, B_{AA'}$ by 
\begin{subequations}\label{eq:ABdef} 
\begin{align} 
A_{AA'}[\kappa_{AB}, \phi_{AB} ] ={}&- \tfrac{1}{3} (\sExt_{2,0} \phi)_{AB}  (\sCurl_{0,2} \bar{\kappa})^{B}{}_{A'}
 + \bar{\kappa}_{A'B'} (\sCurlDagger_{2,0} \sExt_{2,0} \phi)_{A}{}^{B'},\label{eq:Adef} \\
A_{AA'}[\GenVec_{AA'}, \phi_{AB} ] ={}&  \GenVec_{BA'} \phi_{A}{}^{B} \\
B_{AA'}[\kappa_{AB}, \phi_{AB} ] ={}&\kappa_{AB} (\sCurlDagger_{2,0} \sExt_{2,0} \phi)^{B}{}_{A'}
 + \tfrac{1}{3} (\sExt_{2,0} \phi)_{AB} (\sCurlDagger_{2,0} \kappa)^{B}{}_{A'}, \label{eq:Bdef} 
 \end{align}
\end{subequations} 
\end{definition} 
When there is no room for confusion, we suppress the arguments, and write simply $A_{AA'}, B_{AA'}$. 
The following result shows that $A_{AA'}, B_{AA'}$ solves the adjoint Maxwell equations, 
provided $\phi_{AB}$ solves the Maxwell equation. 
\begin{lemma}[\protect{\cite[\S 7]{ABB:symop:2014CQGra..31m5015A}}] 
\label{lem:ABpot}
Assume that $\kappa_{AB}$ is a Killing spinor of valence $(2,0)$, that $\GenVec_{AA'}$ is a conformal Killing field, and that $\phi_{AB}$ is a Maxwell field. Then, with $A_{AA'}, B_{AA'}$ given by \eqref{eq:ABdef} it holds that $A_{AA'}[\kappa_{AB}, \phi_{AB}]$ and $A_{AA'}[\GenVec_{AA'}, \phi_{AB}]$ satisfy $(\sCurlDagger_{1,1} A)_{A'B'} = 0$, and $B_{AA'}[\kappa_{AB}, \phi_{AB}]$ satisfies $(\sCurl_{1,1} B)_{AB} = 0$. 
\end{lemma}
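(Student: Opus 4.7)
The proof strategy is direct verification: apply $\sCurlDagger_{1,1}$ (respectively $\sCurl_{1,1}$) to the explicit expressions for $A_{AA'}$ and $B_{AA'}$, expand via the Leibniz rule, and reduce using the hypotheses $(\sCurlDagger_{2,0}\phi)_{AA'}=0$, $(\sTwist_{2,0}\kappa)_{ABCA'}=0$, $(\sTwist_{1,1}\GenVec)_{ABA'B'}=0$, together with the standard commutator identities for the fundamental operators.

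The conformal Killing vector case is the simplest. With $A_{AA'}[\GenVec,\phi]=\GenVec^B{}_{A'}\phi_{AB}$, one term in the Leibniz expansion of $\sCurlDagger_{1,1}A$ contains $\nabla^A{}_{A'}\phi_{AB}$, which equals $(\sCurlDagger_{2,0}\phi)_{BA'}$ and vanishes by the Maxwell equation. The remaining term $\phi_{AB}\nabla^A{}_{(A'}\GenVec^B{}_{B')}$ I would handle by decomposing $\nabla_a\GenVec_b$ into its four irreducible pieces: the totally symmetric $\sTwist_{1,1}\GenVec$ piece vanishes by the conformal Killing condition, while each of the three remaining pieces carries an $\epsilon$ factor that either dies under the outer symmetrization over $A'B'$ or is annihilated by contraction with the symmetric $\phi_{AB}$.

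For the Killing spinor cases I would first simplify the auxiliary ingredient $(\sCurlDagger_{2,0}\sExt_{2,0}\phi)_{AA'}$. Expanding $(\sExt_{2,0}\phi)_{AB} = -2\kappa_{(A}{}^C\phi_{B)C}$, distributing the derivative, and splitting $\nabla\kappa$ and $\nabla\phi$ into irreducible pieces lets me use $\sTwist_{2,0}\kappa=0$ and $\sCurlDagger_{2,0}\phi=0$ to reduce everything to terms in $\kappa$, $\phi$, and $\sCurlDagger_{2,0}\kappa$ (which, up to sign, is the Killing vector $\xi_{AA'}$ of \eqref{eq:xidef}). Applying $\sCurlDagger_{1,1}$ to $A_{AA'}[\kappa,\phi]$ and $\sCurl_{1,1}$ to $B_{AA'}[\kappa,\phi]$ then produces second-order derivatives which must be commuted, generating curvature terms involving $\Psi_{ABCD}$ contracted with $\kappa$ and $\phi$.

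The main obstacle is the bookkeeping of those curvature contributions. The cancellation hinges on the integrability condition \eqref{eq:curv-restrict-valence-2}, namely $\Psi_{(ABC}{}^E\kappa_{D)E}=0$, which is forced by the existence of the valence $(2,0)$ Killing spinor. Once the curvature terms are organized into contractions of this Buchdahl-type identity they vanish, and the surviving algebraic terms cancel thanks to the precise coefficients (in particular the $\tfrac{1}{3}$) chosen in the definitions \eqref{eq:Adef} and \eqref{eq:Bdef}. Because the number of contractions and the need to repeatedly reduce by irreducible decomposition is unpleasant to carry out by hand, I would perform the final cancellation mechanically with the SymManipulator package \cite{Bae11a}.
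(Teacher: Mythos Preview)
Your approach is correct and matches how the result is established in \cite[\S 7]{ABB:symop:2014CQGra..31m5015A}; the present paper does not supply its own proof but simply cites that companion work, where the identities are verified by exactly the direct computation you outline---Leibniz expansion, irreducible decomposition of the covariant derivative, and the commutator identities for the fundamental operators---carried out mechanically with \texttt{SymManipulator}.

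One refinement to your narrative: the lemma is stated in an arbitrary spacetime, with no vacuum or aligned-matter hypothesis. The commutators therefore produce $\Phi_{ABA'B'}$ and $\Lambda$ contributions in addition to the Weyl terms, and these must (and do) cancel on their own. The integrability condition \eqref{eq:curv-restrict-valence-2} governs only the $\Psi_{ABCD}$ part, so you should not present it as the sole mechanism behind the curvature cancellation.
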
 
\begin{remark} Proposition \ref{prop:SymOpPot} together with Lemma \ref{lem:ABpot} show that the existence of a valence $(2,0)$ Killing spinor implies that there are non-trivial second order symmetry operators of the first and second kind for the Maxwell equation. 
\end{remark}

\subsection{Conserved currents} \label{sec:conscurr} 
Recall that the symmetric stress energy tensor for the Maxwell field is $T_{ab} = \phi_{AB} \bar \phi_{A'B'}$. Since the Maxwell equation is conformally invariant, we have $T^a{}_a = 0$.  
If $\phi_{AB}$ solves the Maxwell equation, then $T_{ab}$ is conserved, $\nabla^a T_{ab} = 0$, and hence if $\GenVec^a$ is a conformal Killing field, then the current
$J_a = T_{ab} \GenVec^b$ is conserved, $\nabla^a J_a = 0$. 
Lie differentiating with respect to conformal Killing fields and using a polarized form of the stress energy tensor yields conserved currents which are higher-order in derivatives of the field. 
However, as discovered by Lipkin \cite{lipkin:1964:MR0162484} and Fushchich and Nikitin, see  \cite{fushchich:nikitin:1992:JPhysA:MR1154854} and references therein, there are nontrivial currents for the Maxwell field on Minkowski space which are not given by this construction. 
The conserved currents $J_a$ for the Maxwell and more generally spin-$s$ fields, $s \in \half \NatNum$, on Minkowski space, have been classified by Anco and Pohjanpelto, see \cite{anco:pohjanpelto:2003:ProcRSoc:MR1997098} and references therein\footnote{\label{foot:ancokerr} The problem of classifying conserved currents for the Maxwell field on the Kerr spacetime has been mentioned but not addressed by Anco et al, cf. \cite[p. 55]{anco:the:2005:AAM:MR2220197} and \cite[\S VII]{anco:pohjanpelto:2001:MR1885280}.}.  The conserved currents $J_a$ considered in the just cited works are bilinear or quadratic concomitants of the Maxwell field, of any finite order. The order of such a current is defined to be the sum of the order of derivatives on each factor. Thus, for example, the order of the current $\phi_{AB} \bar \chi_{A'B'} \xi^{BB'}$, where $\chi_{AB}$ is given by \eqref{eq:SymFirstPot}, is two. 

\begin{definition} \label{def:trivcurr} 
A current $\tilde J^{AA'}$ is called \emph{trivial} if it is of the form 
$$
\tilde J^{AA'}= (\sCurlDagger_{2,0} S)_{AA'} + (\sCurl_{0,2} T)_{AA'}
$$
for some symmetric spinor fields $S_{AB}$ and $T_{A'B'}$. 
\end{definition} 
In this case, $(*J)_{abc}$ is an exact 3-form, so the flux through a hypersurface of a trivial current $\tilde J^{AA'}$ is given by a pure boundary term.
We shall consider \emph{equivalence classes} of currents up to trivial currents. Two currents $J^{AA'}, K^{AA'}$ are said to be equivalent if $J^{AA'} - K^{AA'}$ is a trivial current. In this case we write $J^{AA'} \sim K^{AA'}$. 

A current which is invariant under $\phi_{AB} \mapsto i \phi_{AB}$ is said to be of even parity, while a current which changes sign under this substitution is said to be of odd parity, often termed chiral. A current which is a concomitant of
$\phi_{AB}$ can be written as a sum of terms with even or odd parity. 
            
The structure of conserved currents of up to second order 
for the Maxwell field on a general spacetime has recently 
been determined by the authors, see \cite{ABB:currents}. 
As shown in \cite{anco:pohjanpelto:2001:MR1885280},
see also \cite{anco:pohjanpelto:2003:ProcRSoc:MR1997098},
the conserved currents for the Maxwell field on Minkowski space 
are all generated from solutions of
the adjoint equation.  The same statement holds for currents up to second order in a general spacetime \cite{ABB:currents}, and it seems reasonable to conjecture that this holds for currents of arbitrary order.
\begin{lemma}[\protect{\cite{ABB:currents}}]
\label{lemma:BilinearCurrent}
Let $\phi_{AB}$ be a Maxwell field, i.e. $\phi_{AB} \in \ker \sCurlDagger_{2,0}$ and 
assume that $J^{AA'}\in \ker \sDiv_{1,1}$ is a conserved concomitant of $\phi_{AB}$ of quadratic type. 
\begin{enumerate} 
\item If $J_{AA'}$ has even parity, then
\begin{equation*}
J^{AA'} \sim A^{A}{}_{B'}\bar\phi^{A'B'},
\end{equation*}
where $A_{AA'}$ is a linear concomitant of $\phi_{AB}$ satisfying the left adjoint Maxwell equation $(\sCurlDagger_{1,1}A)_{A'B'}=0$.
\item If $J_{AA'}$ has odd parity (chiral), then
\begin{equation*}
J^{AA'} \sim B_{B}{}^{A'}\phi^{AB}+ \overline{\widetilde B}_{B'}{}^{A}\bar\phi^{A'B'}  , 
\end{equation*}
where $B_{AA'}$, ${\widetilde B}_{AA'}$ are linear concomitants of $\phi_{AB}$ satisfying the right adjoint Maxwell equation $(\sCurl_{1,1}B)_{A'B'}=0$, $(\sCurl_{1,1}\widetilde B)_{A'B'}=0$.
\end{enumerate} 

\end{lemma}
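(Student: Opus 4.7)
The plan is to classify conserved quadratic currents up to equivalence by systematically reducing every such current to a canonical ``potential'' form, and then reading off the adjoint Maxwell equation directly from the conservation condition. This mirrors the strategy of Anco--Pohjanpelto in Minkowski space, but must be carried out in a way that accommodates the curvature terms arising when covariant derivatives are commuted. I would split $J^{AA'}$ into its even- and odd-parity components under $\phi_{AB}\mapsto i\phi_{AB}$ and treat them separately: the even part is necessarily bilinear in $\phi_{AB}$ and $\bar\phi_{A'B'}$, while the odd part is quadratic in $\phi_{AB}$ alone plus its conjugate quadratic in $\bar\phi_{A'B'}$.

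The first substantive step is to list, in each parity sector, the finitely many irreducible bilinear spinor monomials of the prescribed order (at most second) using the standard decomposition of tensor products of symmetric spinors into irreducibles. Because $\phi_{AB}\in\ker\sCurlDagger_{2,0}$, every factor $\nabla_{AA'}\phi_{BC}$ may be replaced, modulo the field equation, by its totally symmetric part $(\sTwist_{2,0}\phi)_{ABCA'}$, which considerably shortens the list. Curvature spinors $\Psi_{ABCD}$ and $\Phi_{ABA'B'}$ appear in undifferentiated coefficients.

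The second, and key, step is to ``integrate by parts'' modulo trivial currents: by adding a term of the form $(\sCurlDagger_{2,0}S)^{AA'}+(\sCurl_{0,2}T)^{AA'}$ one can move all derivatives onto the $\bar\phi$ factor in the even case, respectively onto a single $\phi$ factor in the odd case, thereby producing
\begin{equation*}
J^{AA'}\sim A^{A}{}_{B'}[\phi]\,\bar\phi^{A'B'}
\qquad\text{or}\qquad
J^{AA'}\sim B_{B}{}^{A'}[\phi]\,\phi^{AB}+\overline{\widetilde B}_{B'}{}^{A}[\phi]\,\bar\phi^{A'B'},
\end{equation*}
for first-order linear concomitants $A$, $B$, $\widetilde B$ of $\phi_{AB}$. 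The conservation condition then reduces by Leibniz to an algebraic identity: in the even case one gets $\nabla_{AA'}J^{AA'}=(\sCurlDagger_{1,1}A)_{A'B'}\bar\phi^{A'B'}$, since the Maxwell equation for $\bar\phi$ kills the other Leibniz term and only the $(A'B')$-symmetric part of $\nabla_{AA'}A^{A}{}_{B'}$ survives the contraction with the symmetric $\bar\phi^{A'B'}$. Requiring $\sDiv_{1,1}J=0$ for all Maxwell $\phi$ (and hence all $\bar\phi$) then forces the linear concomitant $(\sCurlDagger_{1,1}A)_{A'B'}$ to vanish on the Maxwell solution space, which is exactly the left adjoint Maxwell equation. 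The odd case proceeds identically and produces the right adjoint equation for $B$ and for $\widetilde B$.

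The principal obstacle is the canonical-form reduction in the curved setting. In Minkowski space the integration-by-parts manipulations are essentially algebraic (a finite-dimensional linear problem in the jet variables), but on a general spacetime each commutation of covariant derivatives produces curvature corrections through spinor commutator identities of the sort catalogued in \cite[Lemma 18]{ABB:symop:2014CQGra..31m5015A}. One must check that every such correction can be absorbed either into a redefinition of $A$, $B$, $\widetilde B$ at strictly lower order or into an additional trivial current $(\sCurlDagger_{2,0}S)^{AA'}+(\sCurl_{0,2}T)^{AA'}$, rather than producing a genuinely new obstruction. For currents of order at most two this is a finite but intricate piece of bookkeeping; pushing the result to all orders would require a more structural argument (for instance a spectral-sequence or variational-bicomplex formulation), and is where I would expect the proof in \cite{ABB:currents} to be both most technical and most likely to generalize the Minkowski classification.
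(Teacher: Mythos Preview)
The paper does not actually prove this lemma: it is stated with attribution to \cite{ABB:currents}, a companion paper listed as ``In preparation,'' and no argument is given in the present text. So there is no proof here to compare your proposal against.

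That said, your outline is consistent with everything the paper does say. The surrounding discussion explicitly invokes the Anco--Pohjanpelto classification on Minkowski space and remarks that ``the same statement holds for currents up to second order in a general spacetime \cite{ABB:currents},'' which is exactly the strategy you describe: list the irreducible bilinear monomials (using the field equation to replace $\nabla\phi$ by $\sTwist_{2,0}\phi$), integrate by parts modulo trivial currents to put everything into potential form, and then read off the adjoint equation from $\sDiv_{1,1}J=0$. Your computation of the divergence in the even case is correct: the $\bar\phi$-derivative term vanishes by $(\sCurl_{0,2}\bar\phi)_{AA'}=0$, and symmetry of $\bar\phi^{A'B'}$ isolates $(\sCurlDagger_{1,1}A)_{A'B'}$.

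You have also correctly located the only real difficulty: showing that every curvature correction produced by commuting covariant derivatives during the integration-by-parts step can be absorbed into lower-order redefinitions of $A,B,\widetilde B$ or into additional trivial currents. This is a finite linear-algebra problem at order two, and your reference to the commutator identities in \cite[Lemma~18]{ABB:symop:2014CQGra..31m5015A} is the right toolbox. Since the actual bookkeeping is deferred to \cite{ABB:currents}, there is nothing further in the present paper to check your proposal against.
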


\begin{definition}
The stress, zilch and
chiral 
currents are defined in terms of the spinors $A_{AA'}$, $B_{AA'}$ by 
\begin{subequations}\label{eq:quadraticcurrent}
\begin{align} 
\text{ stress: } && \Psi_{S AA'} &= \half (\bar{A}_{A'B} \phi_A{}^B + A_{AB'} 
\bar\phi_{A'}{}^{B'})    \\ 
\text{ zilch: } && \Psi_{Z AA'} &= \half i (\bar{A}_{A'B} \phi_A{}^B - A_{AB'} 
\bar\phi_{A'}{}^{B'} )  \\ 
\text{ chiral: } && \Psi_{C AA'} &= \half(B_{BA'} \phi_A{}^B + \bar B_{B'A} \bar\phi_{A'}{}^{B'}
) 
\end{align}
\end{subequations} 
\end{definition}
Of these, the currents $\Psi_{S AA'}, \Psi_{Z AA'}$ have even parity, while $\Psi_{C AA'}$ has odd parity.

\begin{example} \label{ex:Adj}
\begin{enumerate} 
\item 
Let $A_{AA'}[\GenVec_{AA'}, \phi_{AB}] =  \GenVec_{BA'} \phi_{A}{}^{B}$ 
where $\GenVec_{AA'}$ is a real Killing vector. The current $\Psi_{S AA'} = T_{AA'BB'}\GenVec^{BB'}$ is the standard stress-energy current associated with $\GenVec^a$. 
\item \label{point:Adjchi}
If we have a symmetry operator $\phi_{AB}\rightarrow \chi_{AB}$, the concomitant $A_{AA'}[\GenVec_{AA'}, \chi_{AB}]$ is again a solution of the adjoint equation \eqref{eq::LeftAdjointMaxwell}, and hence the current 
\begin{equation*}
\Psi_{S AA'} = 
\tfrac{1}{2}\GenVec^{BB'} \chi_{AB}\bar{\phi}_{A'B'}+\tfrac{1}{2}\GenVec^{BB'} \phi_{AB}\bar\chi_{A'B'},
\end{equation*} 
is also conserved. The current $\Psi_{S AA'}$ is in this case derived from the polarized form of the standard Maxwell stress energy tensor.  
\end{enumerate} 
\end{example}

\begin{lemma}[\protect{\cite{ABB:currents,2014arXiv1412.2960A}}] Let $\kappa_{AB} \in \KillSpin_{2,0}$, and assume that the aligned matter condition holds with respect to $\kappa_{AB}$. Define  
\begin{align}
\xi_{AA'}\equiv{}&(\sCurlDagger_{2,0} \kappa)_{AA'}, \label{eq:xikappadef} \\
\eta_{AA'}\equiv{}&(\sCurlDagger_{2,0} \sExt_{2,0}\phi)_{AA'}. \label{eq:etadef}
\end{align}
Let 
\begin{align} 
J^1_{AA'} ={}& 
\tfrac{1}{2} \xi^{BB'} \bar{\phi}_{A'B'} \chi_{AB} +  \tfrac{1}{2} \xi^{BB'} \phi_{AB} \bar{\chi}_{A'B'},\label{eq:J1def}\\
J^2_{AA'} ={}&V_{ABA'B'} \xi^{BB'},\label{eq:J2def}
\end{align} 
where $\chi_{AB}$ is given by \eqref{eq:SymFirstPot} with $A_{AA'}$ given by \eqref{eq:Adef} and $Q=0$, and 
\begin{align}
V_{ABA'B'}\equiv{}&\tfrac{1}{2} \eta_{AB'} \bar{\eta}_{A'B}
 + \tfrac{1}{2} \eta_{BA'} \bar{\eta}_{B'A}
 + \tfrac{1}{3} (\sExt_{2,0}\phi)_{AB} (\hat{\mathcal{L}}_{\bar\xi}\bar{\phi})_{A'B'}
 + \tfrac{1}{3} (\bar{\sExt}_{2,0} \bar{\phi})_{A'B'} (\hat{\mathcal{L}}_{\xi}\phi)_{AB}. \label{eq:Vdef}
\end{align}
Then both currents $J^1_{AA'}$ and $J^2_{AA'}$ are conserved.
If we furthermore assume that $\xi_{AA'}$ is real, one can show that the currents are equivalent, up to sign. In detail we get
\begin{align*}
- J^1_{AA'} 
={}&J^2_{AA'}
 + (\sCurl_{0,2} \bar{S})_{AA'}
 + (\sCurlDagger_{2,0} S)_{AA'},
\end{align*}
where
\begin{align}
S_{AB}={}& \tfrac{1}{2} \bar{\eta}^{A'C} \xi^{D}{}_{A'} \kappa_{(AB}\phi_{CD)}
 -  \tfrac{1}{6} \xi^{CA'} (\bar{\sExt}_{0,2}\bar{\phi})_{A'}{}^{B'} \xi_{(A|B'|}\phi_{B)C}
 -  \tfrac{1}{6} \xi^{CA'} \bar{\phi}_{A'}{}^{B'} (\sExt_{2,0}\phi)_{(A|C|}\xi_{B)B'}\nonumber\\
&- \tfrac{1}{4} \bar{\kappa}^{B'C'} \bar{\phi}_{B'C'} \eta_{(A}{}^{A'}\xi_{B)A'}
 -  \tfrac{1}{12} \kappa^{CD} \phi_{CD} \bar{\eta}^{A'}{}_{(A}\xi_{B)A'}
 -  \tfrac{3}{8} \xi^{CA'} \bar{\eta}_{A'(A}(\sExt_{2,0}\phi)_{BC)}.
\end{align}
\end{lemma}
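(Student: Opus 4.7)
The plan is to establish conservation of $J^1_{AA'}$ directly from earlier results, then either prove the equivalence identity under the reality assumption and deduce conservation of $J^2_{AA'}$ from that of $J^1_{AA'}$, or prove conservation of both currents independently and verify the explicit equivalence. I would take the first route, since it avoids duplicating a lengthy calculation.

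For $J^1_{AA'}$ the argument is short: by Lemma~\ref{lem:ABpot} the concomitant $A_{AA'}$ defined in \eqref{eq:Adef} lies in $\ker \sCurlDagger_{1,1}$, so by Proposition~\ref{prop:SymOpPot}(i) the spinor $\chi_{AB} = (\sCurl_{1,1}A)_{AB}$ is a Maxwell field. The aligned matter condition, via the discussion following \eqref{eq:xidef}, guarantees that $\xi_{AA'} = (\sCurlDagger_{2,0}\kappa)_{AA'}$ is a Killing field. Therefore $J^1_{AA'}$ is exactly the polarized Maxwell stress-energy current of Example~\ref{ex:Adj}(\ref{point:Adjchi}) built from the Maxwell pair $(\phi_{AB}, \chi_{AB})$ and the Killing field $\xi^a$, and $\nabla^{AA'}J^1_{AA'} = 0$ is immediate from the standard divergence computation.

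The main work is to prove $-J^1_{AA'} = J^2_{AA'} + (\sCurl_{0,2}\bar S)_{AA'} + (\sCurlDagger_{2,0}S)_{AA'}$ with $S_{AB}$ as displayed. My plan is to expand $J^1_{AA'}$ by substituting \eqref{eq:Adef} into $\chi_{AB} = (\sCurl_{1,1}A)_{AB}$, so that both currents become two-derivative bilinear expressions in $\phi, \bar\phi, \kappa, \bar\kappa, \xi, \bar\xi$. One then decomposes $-J^1 - J^2$ into irreducible spinor pieces and reduces all second-derivative terms using the Killing spinor equation $(\sTwist_{2,0}\kappa)_{ABCA'} = 0$, the Maxwell equation $(\sCurlDagger_{2,0}\phi)_{AA'} = 0$, the conformal-Killing identity $(\sCurlDagger_{2,0}\hat{\mathcal{L}}_\xi\phi)_{AA'} = \mathcal{L}_\xi(\sCurlDagger_{2,0}\phi)_{AA'} = 0$, and the aligned matter condition $\Phi_{(A}{}^{C}{}_{A'B'}\kappa_{B)C}=0$. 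The reality assumption $\xi_{AA'} = \bar\xi_{AA'}$ is used to pair complex-conjugate terms, forcing the residual to take the self-conjugate form $\sCurlDagger_{2,0}S + \sCurl_{0,2}\bar S$; the potential $S_{AB}$ is then identified with the collection of bulk terms that an integration by parts through $(\sCurlDagger_{2,0})^\dagger = -\sTwist_{1,1}$ would have produced. Conservation of $J^2_{AA'}$ follows because the right-hand side is a sum of $J^1$ and trivial currents.

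The main obstacle is the algebraic bulk of the verification. The difference $-J^1 - J^2$ has many terms and the cancellations rely on repeated application of the fundamental-operator commutators from \cite[Lemma~18]{ABB:symop:2014CQGra..31m5015A} combined with the curvature-algebraic consequences of the Killing spinor equation. Performed by hand the computation is feasible but daunting; in keeping with the authors' methodology in \cite{ABB:symop:2014CQGra..31m5015A, ABB:currents} the natural tool is the SymManipulator package \cite{Bae11a}. A useful consistency check is that the displayed $S_{AB}$ is manifestly symmetric in $(AB)$ and of the correct GHP weight, and that every summand in \eqref{eq:Vdef} is individually real-symmetric, so that $V_{ABA'B'}\xi^{BB'}$ transforms as a real current when $\xi$ is real.
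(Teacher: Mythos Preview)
Your treatment of $J^1_{AA'}$ is fine and matches the paper's Example~\ref{ex:Adj}(\ref{point:Adjchi}). The gap is in the logical ordering you choose for $J^2_{AA'}$. The lemma asserts that $J^2_{AA'}$ is conserved under only the Killing-spinor and aligned-matter hypotheses; the equivalence $-J^1\sim J^2$ is stated only under the \emph{additional} assumption that $\xi_{AA'}$ is real. So deducing conservation of $J^2$ from the equivalence would establish it only when $\xi$ is real, leaving the general case open. Your ``first route'' therefore does not cover the full claim.

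The paper handles this the other way round: conservation of $J^2_{AA'}$ is obtained directly, with no reality assumption, from the fact that $V_{ab}$ itself is divergence-free. This is the content of Lemma~\ref{lem:Tconserved} (applied with $\varphi_{AB}=(\sExt_{2,0}\phi)_{AB}$, $\varpi_{AB}=\tfrac{2}{3}(\hat{\mathcal L}_\xi\phi)_{AB}$, using Lemma~\ref{lem:etaeqs} to verify the hypotheses) together with the fact that $\xi^a$ is Killing under aligned matter; then $\nabla^a(V_{ab}\xi^b)=V_{ab}\nabla^{(a}\xi^{b)}=0$. The equivalence identity, when $\xi$ is real, is treated as a separate statement; the paper does not supply a proof here but cites \cite{ABB:currents,2014arXiv1412.2960A}, and your description of the verification (irreducible decomposition, commutator identities, computer algebra) is an accurate account of how that calculation is carried out.
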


\section{Conservation laws for the Teukolsky system} \label{sec:teuk} 

In this section we will analyze the tensor $V_{ab}$ defined by \eqref{eq:Vdef} and show that in a Petrov type $\PetrovD$ spacetime with aligned matter condition it is conserved, and depends only on the extreme components $\phi_0, \phi_2$ of the Maxwell field. 

Recall that the operators $\sCurl$ and $\sCurlDagger$ are adjoints, and hence their composition yields a wave operator.  We have the identities (valid in a general spacetime)
\begin{subequations}\label{eq:wavegen}
\begin{align}\label{eq:MaxWaveGen}
\square \varphi_{AB} + 8 \Lambda \varphi_{AB} - 2 \Psi_{ABCD} \varphi^{CD}={}&-2 (\sCurl_{1,1} \sCurlDagger_{2,0} \varphi)_{AB}, \\
\label{eq:PsiWaveGen} 
\square \varphi_{ABCD} - 6 \Psi_{(AB}{}^{FH}\varphi_{CD)FH}={}&-2 (\sCurl_{3,1} \sCurlDagger_{4,0} \varphi)_{ABCD}.
\end{align} 
\end{subequations}
Here $\varphi_{AB}$ and $\varphi_{ABCD}$ are elements of $\SymSpinSec_{2,0}$ and $\SymSpinSec_{4,0}$, respectively. This means that the the Maxwell equation 
$
(\sCurlDagger_{2,0} \phi)_{AA'} = 0
$
in a vacuum spacetime implies the wave equation 
\begin{align}\label{eq:MaxWave}
\square \phi_{AB} - 2 \Psi_{ABCD} \phi^{CD}= 0 .
\end{align} 
Similarly, in a vacuum spacetime, the Bianchi system $(\sCurlDagger_{4,0} \Psi)_{A'ABC} = 0$ holds for the Weyl spinor, and we arrive at the Penrose wave equation 
\begin{align}\label{eq:PsiWave} 
\square \Psi_{ABCD} - 6 \Psi_{(AB}{}^{FH}\Psi_{CD)FH}= 0
\end{align} 
Restricting to a vacuum type $\PetrovD$ spacetime, and projecting the Maxwell wave equation \eqref{eq:MaxWave} and the linearized Penrose wave equation \eqref{eq:PsiWave} on the principal spin dyad, one obtains wave equations for the extreme Maxwell scalars $\phi_0, \phi_2$ and the extreme linearized Weyl scalars $\dot \Psi_0, \dot \Psi_4$. 

\newcommand{\sfrak}{\mathfrak{s}}
Letting $\psi^{(\sfrak)}$ denote $\phi_0, \Psi_2^{-2/3} \phi_2$ for $\sfrak = 1, -1$, respectively, and $\dot \Psi_0, \Psi_2^{-4/3} \dot \Psi_4$ for $\sfrak = 2, -2$, respectively, one finds that these fields satisfy the system  
\begin{equation}\label{eq:squareTME} 
[ \squareTME_{2\sfrak} - 4 \sfrak^2 \Psi_2 ] \psi^{(\sfrak)} = 0 ,
\end{equation} 
see \cite[\S 3]{aksteiner:andersson:2011CQGra..28f5001A}, 
where, in GHP notation 
\begin{equation}\label{eq:squaretme-restrict}
\squareTME_{p} =
2(\tho -p\rho -\bar{\rho})({\tho}'-\rho')- 2(\edt-p\tau
-\bar{\tau}')({\edt}'-\tau')  + (3p-2)\Psi_2 .
\end{equation}
The equation \eqref{eq:squareTME} was first derived by Teukolsky \cite{teukolsky:1972PhRvL..29.1114T,teukolsky:1973} for massless spin-$s$ fields and linearized gravity on Kerr, and is referred to as the Teukolsky Master Equation (TME). It was shown by Ryan \cite{ryan:1974PhRvD..10.1736R} that the tetrad projection of the linearized Penrose wave equation yields the TME, see also Bini et al \cite{bini:etal:2002PThPh.107..967B,bini:etal:2003IJMPD..12.1363B}. In the Kerr case, the TME admits a commuting symmetry operator, and hence allows separation of variables. The TME applies to fields of all half-integer spins between $0$ and $2$. 

As discussed above, the TME is a wave equation for the weighted field $\psi^{(\sfrak)}$.  
It is derived from the spin-$s$ field equation by applying a first order operator and hence is valid for the extreme scalar components of the field, rescaled as explained above. It is important to emphasize that there is a loss of information in deriving the TME from the spin-$s$ equation. For example, if we consider two independent solutions of the TME with spin weights $\sfrak=\pm 1$, these will not in general be components of a single Maxwell field. If indeed this is the case, the Teukolsky-Starobinsky identities (TSI) (also referred to as Teukolsky-Press relations), see \cite{kalnins:etal:1989JMP....30.2925K} and references therein, hold. 

Although the TSI are usually discussed in terms of separated forms of $\psi^{(\sfrak)}$, we are here interested in the TSI as differential relations between the scalars extreme spin weights. From this point of view, the TSI expresses the fact that the Debye potential construction starting from the different Maxwell scalars for a given Maxwell field $\phi_{AB}$ yields scalars of the \emph{the same} Maxwell field. The equations for the Maxwell scalars in terms of Debye potentials can be found in Newman-Penrose notation in \cite{cohen:kegeles:1974PhRvD..10.1070C}. These expressions correspond to the components of a symmetry operator of the second kind. See \cite[\S 5.4.2]{aksteiner:thesis} for further discussion, where also the GHP version of the formulas can be found. 
An analogous situation obtains for the case of linearized gravity, see \cite{lousto:whiting:2002PhRvD..66b4026L}. In this case, the TSI are of fourth order.  
Thus, for a Maxwell field, or a solution of the linearized Einstein equations on a Kerr, or more generally a vacuum type $\PetrovD$ background, the pair of Newman-Penrose scalars of extreme spin weights for the field satisfy a system of differential equations consisting of both the TME and the TSI. 

Although the TME is derived from an equation governed by a variational principle, it has been argued by Anco, see the discussion in \cite{perjes:lukacs_2005AIPC..767..306P}, that the Teukolsky system admits no \emph{real} variational principle, due to the fact that the operator $\squareTME_p $ defined by the above fails to be formally self-adjoint. Hence, the issue of real conserved currents for the Teukolsky system, which appear to be necessary for estimates of the solutions, appears to be open. 
However, as we shall demonstrate here, if we consider the \emph{combined} TME and TSI in the spin-$1$ or Maxwell case, as a system of equations for both of the extreme Maxwell scalars $\phi_0, \phi_2$, this system does admit both a conserved current and a conserved stress-energy like tensor.

\subsection{A new conserved tensor for Maxwell} \label{sec:Vab} 
We have seen in the last section that polarized stress current $-\Psi_{S AA'}[\xi_{AA'}, \chi_{AB}]$ with $\xi^a$ given by \eqref{eq:xikappadef} and $\phi_{AB} \to \chi_{AB}$ the second order symmetry operator of the first kind given by \eqref{eq:SymFirstPot} with $Q = 0$ and $A_{AA'}$ given by \eqref{eq:Adef},  is equivalent to a current $V_{ab} \xi^b$ defined in terms of the symmetric tensor $V_{ab}$.
In fact, as we shall now show, $V_{ab}$ is itself conserved,  
$$
\nabla^a V_{ab} = 0, 
$$
and hence may be viewed as a higher-order stress-energy tensor for the Maxwell field. 
The tensor $V_{ab}$ has several important properties. First of all, it depends only on the extreme Maxwell scalars $\phi_0, \phi_2$, and hence cancels the static Coulomb Maxwell field \eqref{eq:coulomb} on Kerr which has only the middle scalar non-vanishing. 
In order to analyze $V_{ab}$, we first collect some properties of the one-form $\eta_{AA'}$ as defined in \eqref{eq:etadef}. 

\begin{lemma}[\protect{\cite[Lemma 2.4]{2014arXiv1412.2960A}}] 
\label{lem:etaeqs} 
Let $\kappa_{AB} \in \KillSpin_{2,0}$, and assume the aligned matter condition holds with respect to $\kappa_{AB}$. Let $\xi_{AA'}$ be given by \eqref{eq:xikappadef}. Further, let $\phi_{AB}$ be a Maxwell field, and let $\eta_{AA'}$ be given by \eqref{eq:etadef}. Then we have 

\begin{subequations} \label{eq:etafacts} 
\begin{align}
(\sDiv_{1,1} \eta)={}&0,\label{diveta1}\\
(\sCurl_{1,1} \eta)_{AB}={}&\tfrac{2}{3} (\hat{\mathcal{L}}_{\xi}\phi)_{AB},\label{curleta1b2}\\
(\sCurlDagger_{1,1} \eta)_{A'B'}={}&0,  \label{curleta2}\\
\eta_{AA'} \xi^{AA'}={}&\kappa^{AB} (\hat{\mathcal{L}}_{\xi}\phi)_{AB}. \label{eq:etaLphi} 
\end{align}
\end{subequations}
\end{lemma}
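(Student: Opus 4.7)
The four identities follow from direct spinor calculation built on four ingredients: (i) the Killing spinor equation $(\sTwist_{2,0}\kappa) = 0$, which after decomposing $\nabla_{AA'}\kappa_{BC}$ into totally symmetric and trace parts gives
$$\nabla_{AA'}\kappa_{BC} = -\tfrac{2}{3}\epsilon_{A(B}\xi_{C)A'};$$
(ii) the Maxwell equation $(\sCurlDagger_{2,0}\phi)_{AA'}=0$; (iii) the commutator identities collected in \cite[Lemma 18]{ABB:symop:2014CQGra..31m5015A} expressing the compositions $\sDiv_{1,1}\sCurlDagger_{2,0}$, $\sCurl_{1,1}\sCurlDagger_{2,0}$, and $\sCurlDagger_{1,1}\sCurlDagger_{2,0}$ in terms of wave-type operators and curvature; and (iv) the integrability condition $\Psi_{(ABC}{}^E\kappa_{D)E}=0$ from \eqref{eq:curv-restrict-valence-2} together with the aligned matter condition \eqref{eq:alignedmatter}.

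For the algebraic identity \eqref{eq:etaLphi}, the plan is to contract $\xi^{AA'}$ into $\eta_{AA'} = -2\nabla^B{}_{A'}(\kappa_{(A}{}^C\phi_{B)C})$ and expand by Leibniz. The Maxwell equation kills the $\kappa\cdot(\sCurlDagger_{2,0}\phi)$ term; substituting the Killing spinor equation into the $\nabla\kappa$ pieces converts them into $\xi$-contractions which, combined with the surviving $\xi\cdot\nabla\phi$ terms, reassemble into $\kappa^{AB}(\hat{\mathcal{L}}_\xi\phi)_{AB}$ upon recognising the spinor form of the conformally weighted Lie derivative along the Killing field $\xi^a$ (and using $\nabla\cdot\xi=0$).

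For the three differential identities \eqref{diveta1}, \eqref{curleta1b2}, \eqref{curleta2}, I would write $\eta = \sCurlDagger_{2,0}\sExt_{2,0}\phi$ and apply the respective compositions $\sDiv_{1,1}\sCurlDagger_{2,0}$, $\sCurl_{1,1}\sCurlDagger_{2,0}$, $\sCurlDagger_{1,1}\sCurlDagger_{2,0}$ to the symmetric spinor $\sExt_{2,0}\phi$. The commutator identities rewrite each as a first-order operator acting on $\sExt_{2,0}\phi$ plus curvature corrections. Using Leibniz, the Killing spinor equation, and Maxwell reduces the first-order part to an expression in $\phi$ and $\xi$; the residual curvature contributions have the schematic forms $\kappa\cdot\Psi\cdot\phi$ and $\kappa\cdot\Phi\cdot\phi$, and are killed by \eqref{eq:curv-restrict-valence-2} and \eqref{eq:alignedmatter} respectively. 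The divergence collapses, the curl reassembles into $\tfrac{2}{3}\hat{\mathcal{L}}_\xi\phi$, and the $\sCurlDagger$-part vanishes.

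The main obstacle is combinatorial: matching the numerical constants precisely in \eqref{curleta1b2} and bookkeeping the numerous contractions generated by Leibniz and commutator expansions — a task well-suited to symbolic computation via the SymManipulator package. The conceptual crux is \eqref{curleta2}, since it is the one identity requiring \emph{both} curvature constraints on $\kappa_{AB}$ simultaneously, which accounts for the aligned matter hypothesis in the statement.
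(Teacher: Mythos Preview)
The paper does not give its own proof of this lemma; it cites the result from \cite[Lemma~2.4]{2014arXiv1412.2960A} and proceeds directly to the corollary. There is therefore no in-text argument to compare against.

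Your approach is the natural one and matches what one would expect the cited proof to contain: rewrite the Killing spinor equation as $\nabla_{AA'}\kappa_{BC} = -\tfrac{2}{3}\epsilon_{A(B}\xi_{C)A'}$, expand $\eta = \sCurlDagger_{2,0}\sExt_{2,0}\phi$ by Leibniz, feed in the Maxwell equation, and handle the second-order pieces via the commutator identities of \cite[Lemma~18]{ABB:symop:2014CQGra..31m5015A}, with the curvature terms killed by the integrability condition \eqref{eq:curv-restrict-valence-2} and the aligned matter condition \eqref{eq:alignedmatter}. Your remark that the bookkeeping is best handled symbolically is exactly in line with the authors' own methodology (they rely on \texttt{SymManipulator} throughout \cite{ABB:symop:2014CQGra..31m5015A,2014arXiv1412.2960A}). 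One small correction to your commentary: it is not only \eqref{curleta2} that needs the aligned matter hypothesis --- the identity \eqref{curleta1b2} also picks up a $\Phi_{ABA'B'}\kappa^{CD}$-type term from the commutator $\sCurl_{1,1}\sCurlDagger_{2,0}$ acting on $\sExt_{2,0}\phi$ (cf.\ the general form \eqref{eq:MaxWaveGen} with $\Lambda$ and $\Phi$ present), so both differential identities with nonzero right-hand side or nontrivial curvature content rely on it. This does not affect the validity of your strategy, only the attribution of where the hypothesis is used.
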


\begin{corollary} Assume $\Mcal$ is of Petrov type $\PetrovD$. Then $V_{ab}$ depends only on the extreme components of $\phi_{AB}$. 
\end{corollary}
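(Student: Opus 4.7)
The plan is to show that each of the four terms in the definition \eqref{eq:Vdef} of $V_{ab}$ depends on $\phi_{AB}$ only through $\phi_0$ and $\phi_2$ (and analogously through $\bar\phi_{0'}, \bar\phi_{2'}$). The starting point is the observation that in a Petrov type $\PetrovD$ spacetime, every valence $(2,0)$ Killing spinor is algebraically of type $\{1,1\}$, so the remark following Lemma \ref{lem:Ephi-extreme} applies. Hence, by that lemma, $(\sExt_{2,0}\phi)_{AB}$ has vanishing middle component and its extreme components are $-2\kappa_1\phi_0$ and $2\kappa_1\phi_2$. Thus $(\sExt_{2,0}\phi)_{AB}$ depends on $\phi_{AB}$ only via $\phi_0,\phi_2$.

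The next step is to propagate this through the definition $\eta_{AA'} = (\sCurlDagger_{2,0}\sExt_{2,0}\phi)_{AA'}$. Since $\sCurlDagger_{2,0}$ is a first order differential operator, $\eta_{AA'}$ (and its conjugate $\bar\eta_{AA'}$) depend on $\phi_{AB}$ only through $\phi_0,\phi_2$ and their derivatives. This immediately handles the first two terms $\tfrac12 \eta_{AB'}\bar\eta_{A'B} + \tfrac12 \eta_{BA'}\bar\eta_{B'A}$ of $V_{ABA'B'}$.

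The main obstacle is the apparent presence of $\phi_1$ in the Lie-derivative factors $(\hat{\mathcal{L}}_\xi\phi)_{AB}$ and $(\hat{\mathcal{L}}_{\bar\xi}\bar\phi)_{A'B'}$ appearing in the last two terms: the conformal Lie derivative of $\phi_{AB}$ generically mixes all three dyad components. The way out is the identity \eqref{curleta1b2} of Lemma~\ref{lem:etaeqs}, which, for a Maxwell field, gives
\begin{equation*}
(\hat{\mathcal{L}}_\xi \phi)_{AB} = \tfrac{3}{2} (\sCurl_{1,1}\eta)_{AB},
\end{equation*}
together with its complex conjugate. Since $\eta$ depends on $\phi_{AB}$ only through $\phi_0,\phi_2$, so does $\hat{\mathcal{L}}_\xi\phi$, and similarly $\hat{\mathcal{L}}_{\bar\xi}\bar\phi$ depends only on $\bar\phi_{0'}, \bar\phi_{2'}$. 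Combining this with the fact that the factor $(\sExt_{2,0}\phi)_{AB}$ (resp.\ its conjugate) already depends only on the extreme components, the last two terms of \eqref{eq:Vdef} are also expressible purely in terms of $\phi_0,\phi_2,\bar\phi_{0'},\bar\phi_{2'}$ and their derivatives, completing the argument.

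The substantive content is thus the use of the Maxwell equation, encoded in the identities of Lemma~\ref{lem:etaeqs}: without it, the Lie-derivative terms would carry dependence on $\phi_1$. The Petrov type $\PetrovD$ hypothesis enters only to kill the middle component of $\sExt_{2,0}\phi$ via Lemma~\ref{lem:Ephi-extreme}; everything else is a chain of substitutions.
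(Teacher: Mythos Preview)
Your argument is correct, and in fact more direct than the paper's. Both proofs begin identically, invoking Lemma~\ref{lem:Ephi-extreme} to see that $(\sExt_{2,0}\phi)_{AB}$ and hence $\eta_{AA'}$ depend only on $\phi_0,\phi_2$. The divergence is in how the $\hat{\mathcal{L}}_\xi\phi$ terms are handled. You use identity~\eqref{curleta1b2} from Lemma~\ref{lem:etaeqs} to write $(\hat{\mathcal{L}}_\xi\phi)_{AB}=\tfrac{3}{2}(\sCurl_{1,1}\eta)_{AB}$ and conclude immediately. The paper instead uses identity~\eqref{eq:etaLphi} together with the commutation of $\hat{\mathcal{L}}_\xi$ and $\sExt_{2,0}$, and the nondegeneracy $\kappa_{AB}\kappa^{AB}\neq 0$ in type~$\PetrovD$, to derive the explicit decomposition~\eqref{eq:Lxiphieta}. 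Your route is shorter; the paper's has the side benefit of producing a formula splitting $\hat{\mathcal{L}}_\xi\phi$ into a piece proportional to $\kappa_{AB}$ and a piece built from $\sExt_{2,0}\hat{\mathcal{L}}_\xi\sExt_{2,0}\phi$, a decomposition that is reused later (see Section~\ref{sec:morawetz}) when separating the middle and extreme components of $\mathcal{L}_\xi\phi$ for the Morawetz estimate.
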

\begin{proof} 
We first note that by Lemma \ref{lem:Ephi-extreme}, if $\Mcal$ is of type $\PetrovD$, then $(\sExt_{2,0}\phi)_{AB}$ depends only on the extreme components of $\phi_{AB}$, and hence the same property holds for 
$\eta_{AA'}$. Next, recall that if $\Mcal$ is of Petrov type $\PetrovD$, $\kappa_{AB}$ is of algebraic type $\{1,1\}$ and hence $\kappa_{AB}\kappa^{AB} \ne 0$ provided $\kappa_{AB}$ is nonzero. 
A calculation using \eqref{eq:etaLphi} and commutation of $\hat{\mathcal{L}}_{\xi}$ and $\sExt_{2,0}$ now gives 
\begin{align}
(\hat{\mathcal{L}}_{\xi}\phi)_{AB}={}&\frac{\eta^{FF'} \kappa_{AB} \xi_{FF'}}{(\kappa_{AB} \kappa^{AB})}
 -  \frac{(\sExt_{2,0}\hat{\mathcal{L}}_{\xi}\sExt_{2,0}\phi)_{AB}}{2 (\kappa_{AB} \kappa^{AB})}.
 \label{eq:Lxiphieta}
\end{align}
This completes the proof. 
\end{proof} 

\begin{lemma} \label{lem:Tconserved}
Assume that $\varphi_{AB}\in \SymSpinSec_{2,0}$ satisfies the system
\begin{subequations} \label{eq:varphieqs} 
\begin{align}
(\sCurlDagger_{1,1} \sCurlDagger_{2,0} \varphi)_{A'B'}={}&0,\label{eq:CurlDgCurlDgvarphi1}\\
(\sCurl_{1,1} \sCurlDagger_{2,0} \varphi)_{AB}={}&\varpi_{AB},\label{eq:CurlCurlDgvarphi1}
\end{align}
\end{subequations}
for some $\varpi_{AB} \in \SymSpinSec_{2,0}$. 
Let   
 \begin{align}\label{eq:etavarphidef} 
\etavarphi_{AA'}={}&(\sCurlDagger_{2,0} \varphi)_{AA'},
\end{align}
and define the symmetric tensor 
$\EMTensorT_{ABA'B'}$ by 
\begin{align}
\EMTensorT_{ABA'B'}={}&\tfrac{1}{2} \etavarphi_{AB'} \bar{\etavarphi}_{A'B}
 + \tfrac{1}{2} \etavarphi_{BA'} \bar{\etavarphi}_{B'A}
 + \tfrac{1}{2} \bar{\varpi}_{A'B'} \varphi_{AB}
 + \tfrac{1}{2} \varpi_{AB} \bar{\varphi}_{A'B'}.
\end{align}
Then
\begin{align} \label{eq:EMTensorcons} 
\nabla^{BB'}\EMTensorT_{ABA'B'}={}&0.
\end{align}
\end{lemma}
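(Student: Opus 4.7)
The plan is to differentiate $\EMTensorT_{ABA'B'}$ term by term via the Leibniz rule and to reduce every contracted derivative of $\etavarphi,\bar\etavarphi,\varphi,\bar\varphi$ to known quantities ($\varpi,\bar\varpi,\etavarphi,\bar\etavarphi$) using the irreducible decomposition of a valence-$(1,1)$ spinor derivative. The key identities for $\etavarphi$ are: (i) $\sDiv_{1,1}\etavarphi=0$, which follows from the universally valid operator identity $\sDiv_{1,1}\sCurlDagger_{2,0}=0$ (cf.\ \cite[Lemma~18]{ABB:symop:2014CQGra..31m5015A}); (ii) $\sCurlDagger_{1,1}\etavarphi=0$, which is hypothesis \eqref{eq:CurlDgCurlDgvarphi1}; and (iii) $\sCurl_{1,1}\etavarphi=\varpi$, which is hypothesis \eqref{eq:CurlCurlDgvarphi1}. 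Together (i)--(iii) fix three of the four irreducible components of $\nabla_{CC'}\etavarphi_{DD'}$, the fourth (twistor) part dropping out of every contraction that arises. Conjugating yields the analogues for $\bar\etavarphi$. Finally, the defining relation $\etavarphi=(\sCurlDagger_{2,0}\varphi)$ gives the ``zeroth-order'' identities $\nabla^{BB'}\varphi_{AB}=\etavarphi_A{}^{B'}$ and $\nabla^{BB'}\bar\varphi_{A'B'}=\bar\etavarphi^B{}_{A'}$.

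Executing the expansion, the cross-term $\nabla^{BB'}(\etavarphi_{BA'}\bar\etavarphi_{B'A})$ vanishes identically because the ``wrong-handed'' contractions $\nabla^{BB'}\etavarphi_{BA'}$ and $\nabla^{BB'}\bar\etavarphi_{B'A}$ reduce, respectively, to $\sCurlDagger_{1,1}\etavarphi$ and $\sCurl_{1,1}\bar\etavarphi$ plus a $\sDiv$-trace, all zero by (i)--(iii). The piece $\etavarphi_{AB'}\bar\etavarphi_{A'B}$ produces $\varpi_A{}^B\bar\etavarphi_{A'B}+\etavarphi_{AB'}\bar\varpi_{A'}{}^{B'}$ on differentiation, while the terms $\bar\varpi_{A'B'}\varphi_{AB}$ and $\varpi_{AB}\bar\varphi_{A'B'}$ produce the ``other halves'' $\bar\varpi_{A'B'}\etavarphi_A{}^{B'}$ and $\varpi_{AB}\bar\etavarphi^B{}_{A'}$, together with two genuinely second-order contributions proportional to $(\sCurlDagger_{2,0}\varpi)_A{}^{B'}\bar\varphi_{A'B'}$ and its complex conjugate. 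The four first-order contributions cancel pairwise by the antisymmetry of $\eps_{AB}$ and $\bar\eps_{A'B'}$ under raising and lowering, which supplies $\varpi_A{}^B\bar\etavarphi_{A'B}+\varpi_{AB}\bar\etavarphi^B{}_{A'}=0$ and $\etavarphi_{AB'}\bar\varpi_{A'}{}^{B'}+\bar\varpi_{A'B'}\etavarphi_A{}^{B'}=0$.

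What remains is to dispose of the two second-order terms. Writing $\sCurlDagger_{2,0}\varpi=\sCurlDagger_{2,0}\sCurl_{1,1}\etavarphi$, I apply the second-order commutator formula decomposing $\sCurlDagger_{2,0}\sCurl_{1,1}$ in terms of $\sCurl_{0,2}\sCurlDagger_{1,1}$, $\sTwist_{0,0}\sDiv_{1,1}$, the Penrose wave operator, and curvature-coupled algebraic operators (again from the commutator lemma). Invoking $\sCurlDagger_{1,1}\etavarphi=0$ and $\sDiv_{1,1}\etavarphi=0$ kills the non-curvature pieces and leaves only curvature-algebra bilinears; the same reduction applies to $\sCurl_{0,2}\bar\varpi$, and the two curvature contributions cancel after contraction with $\bar\varphi_{A'B'}$ and $\varphi_{AB}$ thanks to the symmetries of $\Psi_{ABCD}$, $\Phi_{ABA'B'}$ and the Hermitian structure of the two conjugate contributions.

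The main obstacle is precisely this last commutator step: it is conceptually standard but algebraically heavy, involving many curvature-coupled terms whose total cancellation is not transparent without careful bookkeeping in the Penrose--Rindler conventions. In practice this verification is best carried out using the \texttt{SymManipulator} package for xAct discussed in Section~\ref{sec:specialgeom}.
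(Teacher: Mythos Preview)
Your overall strategy---Leibniz expansion plus irreducible decomposition of $\nabla\etavarphi$, using (i) $\sDiv_{1,1}\etavarphi=0$, (ii) $\sCurlDagger_{1,1}\etavarphi=0$, (iii) $\sCurl_{1,1}\etavarphi=\varpi$---is exactly the paper's approach, and your treatment of the first-order cancellations is fine.

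The gap is in your final step. You reduce the residual to terms proportional to $(\sCurlDagger_{2,0}\varpi)_{A}{}^{B'}\bar\varphi_{A'B'}$ and its conjugate, then propose to handle these by writing $\sCurlDagger_{2,0}\varpi=\sCurlDagger_{2,0}\sCurl_{1,1}\etavarphi$, commuting, and arguing that the leftover curvature bilinears cancel against their conjugates \emph{after} contraction with $\varphi,\bar\varphi$. That last cancellation is not justified: the $\Psi_{ABCD}$-contributions act purely on unprimed indices and have no Hermitian partner, so ``Hermitian structure'' does not pair them off. As stated, this step would fail.

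What you are missing is that $(\sCurlDagger_{2,0}\varpi)_{AA'}=0$ holds outright, as an integrability condition of the system \eqref{eq:varphieqs}. This is how the paper proceeds: apply $\sCurlDagger_{2,0}$ to \eqref{eq:CurlCurlDgvarphi1}, commute derivatives, and use \eqref{eq:CurlDgCurlDgvarphi1} together with $\sDiv_{1,1}\etavarphi=0$. The commutator identity for $\sCurlDagger_{2,0}\sCurl_{1,1}$ acting on a $(1,1)$ spinor (cf.\ \cite[Lemma~18]{ABB:symop:2014CQGra..31m5015A}) leaves nothing once those two hypotheses are fed in---there is no curvature residue to chase. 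With $(\sCurlDagger_{2,0}\varpi)_{AA'}=0$ (and its conjugate) in hand, your second-order terms vanish immediately and the proof closes. Note Remark~\ref{rem:T}: no assumption on the spacetime geometry is needed, which is consistent with this cleaner route and would be hard to reconcile with a genuine curvature cancellation of the kind you sketch.
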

\begin{proof}
By applying the operator $\sCurlDagger_{2,0}$ to \eqref{eq:CurlCurlDgvarphi1}, commuting derivatives and using \eqref{eq:CurlDgCurlDgvarphi1}, we get the integrability condition $(\sCurlDagger_{2,0} \varpi)_{AA'} = 0$. With $\etavarphi_{AA'}$ given by \eqref{eq:etavarphidef}, 
we directly get
\begin{align}
(\sDiv_{1,1} \etavarphi)={}&0,&
(\sCurlDagger_{1,1} \etavarphi)_{A'B'}={}&0,&
(\sCurl_{1,1} \etavarphi)_{AB}={}&\varpi_{AB}.
\end{align}
The equations above give \eqref{eq:EMTensorcons}. 
\end{proof} 
\begin{remark} \label{rem:T}
\begin{enumerate}
\item No assumptions were made on the spacetime geometry in Lemma \ref{lem:Tconserved}.
\item \label{point:DEC} The tensor 
$$
U_{AA'BB'} = \tfrac{1}{2} \etavarphi_{AB'} \bar{\etavarphi}_{A'B}
 + \tfrac{1}{2} \etavarphi_{BA'} \bar{\etavarphi}_{B'A}
 $$
is a super-energy tensor for the 1-form field $\etavarphi_{AA'}$, and hence satisfies the dominant energy condition, cf. \cite{bergqvist:1999CMaPh.207..467B,senovilla:2000CQGra..17.2799S}. In particular, with $\etavarphi_{AA'} = \nabla_{AA'} \psi$ for some scalar $\psi$, then $U_{AA'BB'}$ is just the standard stress-energy tensor for the scalar wave equation, 
$$
U_{ab} =  \nabla_{(a} \psi \nabla_{b)} \bar \psi - \half \nabla^c \psi \nabla_c \bar \psi \met_{ab} 
$$
\item Similarly to the wave equation stress energy, $V_{ab}$ has non-vanishing trace, $V^a{}_a = U^a{}_a = - \bar \etavarphi^a \etavarphi_a$.  
\end{enumerate} 
\end{remark} 

We now have the following result. 
\begin{theorem}[\protect{\cite[Theorem 1.1]{2014arXiv1412.2960A}}] Assume that $(\Mcal, \met_{ab})$ admits a valence $(2,0)$ Killing spinor $\kappa_{AB}$ and assume that the aligned matter condition holds with respect to $\kappa_{AB}$.  
Let $\phi_{AB}$ be a solution of the Maxwell equation. Then the tensor 
$V_{ABA'B'}$ given by \eqref{eq:Vdef} is conserved, i.e. 
$$
\nabla^{AA'} V_{ABA'B'} = 0
$$
If in addition $(\Mcal, \met_{ab})$ is of Petrov type $\PetrovD$, then $V_{ab}$ depends only on the extreme components of $\phi_{AB}$.  
\end{theorem}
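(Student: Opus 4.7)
The plan is to reduce the conservation statement to the general algebraic result of Lemma~\ref{lem:Tconserved} by making the right choice of input spinor. Specifically, I would set $\varphi_{AB} := (\sExt_{2,0}\phi)_{AB}$, so that the associated one-form $\etavarphi_{AA'} = (\sCurlDagger_{2,0}\varphi)_{AA'}$ coincides precisely with the spinor $\eta_{AA'}$ defined in \eqref{eq:etadef}. To apply Lemma~\ref{lem:Tconserved}, I then need to verify equations \eqref{eq:CurlDgCurlDgvarphi1}--\eqref{eq:CurlCurlDgvarphi1} for this $\varphi_{AB}$, with an appropriate source spinor $\varpi_{AB}$.

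Both required identities are already packaged in Lemma~\ref{lem:etaeqs}. The equation $(\sCurlDagger_{1,1}\eta)_{A'B'} = 0$ is precisely \eqref{curleta2}, which yields \eqref{eq:CurlDgCurlDgvarphi1}. For \eqref{eq:CurlCurlDgvarphi1}, equation \eqref{curleta1b2} gives $(\sCurl_{1,1}\eta)_{AB} = \tfrac{2}{3}(\hat{\mathcal{L}}_{\xi}\phi)_{AB}$, so the natural choice is $\varpi_{AB} := \tfrac{2}{3}(\hat{\mathcal{L}}_{\xi}\phi)_{AB}$. This is the step where the aligned matter hypothesis enters, since Lemma~\ref{lem:etaeqs} is stated under exactly that hypothesis; without it, the curvature commutators used to derive \eqref{curleta1b2} and \eqref{curleta2} would produce non-vanishing Ricci spinor contributions.

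With these identifications, using $\overline{\varphi_{AB}} = (\bar{\sExt}_{0,2}\bar\phi)_{A'B'}$ and $\overline{\varpi_{AB}} = \tfrac{2}{3}(\hat{\mathcal{L}}_{\bar\xi}\bar\phi)_{A'B'}$, the tensor $\EMTensorT_{ABA'B'}$ produced by Lemma~\ref{lem:Tconserved} becomes
\begin{align*}
\EMTensorT_{ABA'B'} ={}& \tfrac{1}{2}\eta_{AB'}\bar\eta_{A'B} + \tfrac{1}{2}\eta_{BA'}\bar\eta_{B'A} \\
& + \tfrac{1}{3}(\sExt_{2,0}\phi)_{AB}(\hat{\mathcal{L}}_{\bar\xi}\bar\phi)_{A'B'} + \tfrac{1}{3}(\bar{\sExt}_{0,2}\bar\phi)_{A'B'}(\hat{\mathcal{L}}_{\xi}\phi)_{AB},
\end{align*}
which is exactly $V_{ABA'B'}$ as defined in \eqref{eq:Vdef}. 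Conservation then follows directly from Lemma~\ref{lem:Tconserved}. The second assertion, concerning Petrov type $\PetrovD$ dependence only on extreme components, is already established by the Corollary preceding Lemma~\ref{lem:Tconserved}, via Lemma~\ref{lem:Ephi-extreme} applied to $\sExt_{2,0}\phi$ and the rewriting of $\hat{\mathcal{L}}_{\xi}\phi$ furnished by \eqref{eq:Lxiphieta}.

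I expect the main obstacle to lie not in the present theorem but upstream, in Lemma~\ref{lem:etaeqs}: verifying the four identities for $\eta_{AA'}$ requires combining the valence $(2,0)$ Killing spinor equation, the Maxwell equation, the aligned matter condition, and curvature commutators through the fundamental operators $\sDiv,\sCurl,\sCurlDagger,\sTwist$. Once those are available, the present proof is essentially a pattern-matching exercise: identify $\varphi$, read off $\varpi$ from \eqref{curleta1b2}, and observe that the resulting tensor \eqref{eq:EMTensorcons} agrees term-by-term with \eqref{eq:Vdef}.
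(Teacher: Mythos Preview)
Your proposal is correct and follows essentially the same route as the paper: the theorem is set up to be an immediate consequence of Lemma~\ref{lem:Tconserved} applied with $\varphi_{AB}=(\sExt_{2,0}\phi)_{AB}$ and $\varpi_{AB}=\tfrac{2}{3}(\hat{\mathcal{L}}_{\xi}\phi)_{AB}$, with the hypotheses of that lemma supplied by Lemma~\ref{lem:etaeqs}, and the type $\PetrovD$ statement handled by the preceding Corollary. Your identification of where the aligned matter condition enters (upstream in Lemma~\ref{lem:etaeqs}) is also accurate.
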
 

\begin{remark} 
\begin{enumerate} 
\item A first order conserved tensor for the Maxwell field has previously been found by  Bergquist et al. \cite{bergqvist:etal:2003CQGra..20.2663B}. As they showed,  the tensor $B_{ab} = \nabla^c \phi_{AB} \nabla_c \bar \phi_{A'B'}$ is conserved in a Ricci flat spacetime. However, one may demonstrate that in the Kerr spacetime, the current $B_{ab} \xi^b$ is trivial in the sense of Definition \ref{def:trivcurr}.  
On the other hand the current $V_{ab} \xi^b$ is non-trivial in the Kerr spacetime. See \cite{ABB:currents}. 
\item The correction terms in $V_{ab}$ involving $(\sExt_{2,0}\phi)_{AB} (\hat{\mathcal{L}}_{\xi}\bar{\phi})_{A'B'}$ and its complex conjugate are of first order in derivatives of the Maxwell field, which opens the possibility of dominating these using a Cauchy-Schwarz argument involving $V_{ab}$ and the zeroth order Maxwell stress energy $T_{ab} = \phi_{AB} \bar \phi_{A'B'}$. 
\end{enumerate} 
\end{remark}  
The properties of $V_{ab}$ discussed above indicate that $V_{ab}$, rather than the Maxwell stress-energy $T_{ab}$ may be used in proving dispersive estimates for the Maxwell field. In fact, it is immediately clear that the Maxwell stress energy cannot be used directly to prove dispersive estimates since it does not vanish for the Coulomb field on the Kerr spacetime. 

\subsection{Teukolsky equation and conservation laws} 
\begin{lemma} 
Assume that $(\Mcal, \met_{ab})$ is a type D spacetime which admits a valence $(2,0)$ Killing spinor $\kappa_{AB}$ and assume that the aligned matter condition holds with respect to $\kappa_{AB}$. Let $\phi_{AB}$ be a solution of the Maxwell equation. Then $\phi_{AB}$ is a solution of the system of equations 
\begin{subequations} \label{eq:TME-TSI-cov}
\begin{align} 
(\sCurlDagger_{1,1} \sCurlDagger_{2,0} \sExt_{2,0} \phi)_{A'B'} &= 0 
\label{eq:bCdCdEphi}
\\
(\sExt_{2,0}\sCurl_{1,1} \sCurlDagger_{2,0} \sExt_{2,0} \phi)_{AB}={}&\tfrac{2}{3} (\hat{\mathcal{L}}_{\xi}\sExt_{2,0} \phi)_{AB}.
\label{eq:ECCdEphi}
\end{align} 
\end{subequations}
Furthermore, this system is equivalent to \eqref{eq:CurlDgCurlDgvarphi1}, \eqref{eq:CurlCurlDgvarphi1}, with $\varphi_{AB}=\sExt_{2,0} \phi$ and $\varpi_{AB}=\tfrac{2}{3} (\hat{\mathcal{L}}_{\xi}\phi)_{AB}$.
\end{lemma}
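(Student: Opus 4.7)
The strategy is to invoke Lemma~\ref{lem:etaeqs} applied to the one-form
\[
\eta_{AA'} = (\sCurlDagger_{2,0}\sExt_{2,0}\phi)_{AA'}
\]
defined in \eqref{eq:etadef}. Equation \eqref{eq:bCdCdEphi} is then nothing but \eqref{curleta2} with the definition of $\eta_{AA'}$ substituted, i.e.\ $(\sCurlDagger_{1,1}\eta)_{A'B'} = 0$ read as a statement about $\phi$. For \eqref{eq:ECCdEphi}, I would start from \eqref{curleta1b2},
\[
(\sCurl_{1,1}\sCurlDagger_{2,0}\sExt_{2,0}\phi)_{AB} = \tfrac{2}{3}(\hat{\mathcal{L}}_\xi\phi)_{AB},
\]
and apply the operator $\sExt_{2,0}$ to both sides. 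The left-hand side becomes the left-hand side of \eqref{eq:ECCdEphi}, while the right-hand side becomes $\tfrac{2}{3}(\sExt_{2,0}\hat{\mathcal{L}}_\xi\phi)_{AB}$, so what remains is to interchange $\sExt_{2,0}$ and $\hat{\mathcal{L}}_\xi$.

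The commutation $[\sExt_{2,0},\hat{\mathcal{L}}_\xi]\phi = 0$ is the one non-trivial algebraic step, and I expect it to be the main technical obstacle. Since $\xi^a$ is Killing (as derived from $\kappa_{AB}$ via \eqref{eq:xikappadef} under the aligned matter condition, cf.\ the discussion after Definition~\ref{def:alignedmatter}), the conformal weight contribution in $\hat{\mathcal{L}}_\xi$ drops and the commutator reduces to a term proportional to $\hat{\mathcal{L}}_\xi\kappa_{AB}$. The task is thus to prove
\[
\hat{\mathcal{L}}_\xi\kappa_{AB} = \xi^{CC'}\nabla_{CC'}\kappa_{AB} + \kappa_{C(A}\nabla_{B)C'}\xi^{CC'} = 0,
\]
which I would establish by decomposing $\nabla_{CC'}\kappa_{AB}$ into irreducible parts: the totally symmetric part vanishes by the Killing spinor equation $\nabla_{(A|A'|}\kappa_{BC)}=0$, and the remaining $\epsilon$-contractions can be expressed purely in terms of $\xi_{AA'}$, after which the two terms in the expansion of $\hat{\mathcal{L}}_\xi \kappa_{AB}$ cancel. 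The calculation is routine index gymnastics of the type systematised in \cite[\S 2]{ABB:symop:2014CQGra..31m5015A}.

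For the equivalence claim, set $\varphi_{AB} = (\sExt_{2,0}\phi)_{AB}$ and $\varpi_{AB} = \tfrac{2}{3}(\hat{\mathcal{L}}_\xi\phi)_{AB}$ in the system \eqref{eq:CurlDgCurlDgvarphi1}--\eqref{eq:CurlCurlDgvarphi1}. With these identifications, \eqref{eq:CurlDgCurlDgvarphi1} is identical to \eqref{eq:bCdCdEphi}, while \eqref{eq:CurlCurlDgvarphi1} is precisely \eqref{curleta1b2}; applying $\sExt_{2,0}$ and using the commutation above transforms this pair into \eqref{eq:TME-TSI-cov}. Conversely \eqref{curleta1b2} is a standing consequence of the Maxwell equation via Lemma~\ref{lem:etaeqs}, so both systems are satisfied simultaneously whenever $\phi$ solves the Maxwell equation. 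This exhibits \eqref{eq:TME-TSI-cov} as an instance of the hypotheses of Lemma~\ref{lem:Tconserved}, which is the structural fact that will deliver the conserved stress-energy tensor $V_{ab}$ in Section~\ref{sec:Vab}.
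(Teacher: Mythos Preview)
Your derivation of \eqref{eq:bCdCdEphi} and \eqref{eq:ECCdEphi} from Lemma~\ref{lem:etaeqs} together with the commutation $[\sExt_{2,0},\hat{\mathcal{L}}_\xi]=0$ is exactly the paper's argument, and your sketch of why $\hat{\mathcal{L}}_\xi\kappa_{AB}=0$ is adequate.

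Where your proposal falls short is the \emph{equivalence} claim. With the substitution $\varphi_{AB}=(\sExt_{2,0}\phi)_{AB}$, equation \eqref{eq:CurlDgCurlDgvarphi1} is literally \eqref{eq:bCdCdEphi}, and applying $\sExt_{2,0}$ to \eqref{eq:CurlCurlDgvarphi1} yields \eqref{eq:ECCdEphi}; that much you have. The problem is the converse: $\sExt_{2,0}$ annihilates the middle component (Lemma~\ref{lem:Ephi-extreme}), so \eqref{eq:ECCdEphi} only records the extreme components of \eqref{eq:CurlCurlDgvarphi1}. To recover the full equation \eqref{eq:CurlCurlDgvarphi1} from \eqref{eq:ECCdEphi} you must show that the \emph{middle} component,
\[
\kappa^{AB}(\sCurl_{1,1}\sCurlDagger_{2,0}\sExt_{2,0}\phi)_{AB}
= \tfrac{2}{3}\,\kappa^{AB}(\hat{\mathcal{L}}_\xi\phi)_{AB},
\]
holds independently. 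Your sentence ``conversely \eqref{curleta1b2} is a standing consequence of the Maxwell equation'' does not do this: it presupposes Maxwell, whereas the equivalence of the two systems is a statement about $\phi_{AB}$ that should not invoke Maxwell again.

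The paper supplies the missing step: in type $\PetrovD$ one has the algebraic identity $\kappa^{AB}(\sExt_{2,0}\phi)_{AB}=0$ (both sides have only the middle dyad component, and that one vanishes by Lemma~\ref{lem:Ephi-extreme}). Differentiating this identity twice and using the Killing spinor equation produces exactly the middle component of \eqref{eq:CurlCurlDgvarphi1}, so that component is an identity rather than an extra condition. Hence \eqref{eq:CurlCurlDgvarphi1} decomposes into its $\sExt_{2,0}$-part, which is \eqref{eq:ECCdEphi}, and its $\kappa^{AB}$-contraction, which is automatic. That is the content of the equivalence, and it is what your argument is missing.
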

\begin{proof} The equations \eqref{eq:bCdCdEphi} and \eqref{eq:ECCdEphi} follows from equations \eqref{curleta1b2}, \eqref{curleta2} and the fact that $\sExt_{2,0}$ and $\hat{\mathcal{L}}_{\xi}$ commutes. The equation \eqref{curleta1b2}, can be split into two parts, one where $\sExt_{2,0}$ is applied, and the other when the equation is contracted with $\kappa^{AB}$. The latter part can be seen by differentiating $\kappa^{AB} (\mathcal{E}_{2,0}\phi)_{AB} = 0$ twice. 
\end{proof} 
In order to compare equations \eqref{eq:TME-TSI-cov} with the scalar forms of the TME and TSI, we now project these equations on the dyad. A calculation gives the following result. 
\begin{lemma} \label{lem:TSIGHP} 
\begin{enumerate}
\item 
The GHP form of equation \eqref{eq:bCdCdEphi} is 
\begin{subequations}
\begin{align}
0={}&-2 \rho \tho \varphi_2
 + \tho \tho \varphi_2
 - 2 \tau ' \edt' \varphi_0
 + \edt' \edt' \varphi_0,\\
0={}&- \tau \tho \varphi_2
 + \tfrac{1}{2} \overline{\tau '} \tho \varphi_2
 + \tfrac{1}{2} \tho \edt \varphi_2
 + \tfrac{1}{2} \bar{\tau} \tho' \varphi_0
 -  \tau ' \tho' \varphi_0
 + \tfrac{1}{2} \tho' \edt' \varphi_0
 -  \rho \edt \varphi_2\nonumber\\
& + \tfrac{1}{2} \bar{\rho} \edt \varphi_2
 + \tfrac{1}{2} \edt \tho \varphi_2
 -  \rho ' \edt' \varphi_0
 + \tfrac{1}{2} \overline{\rho '} \edt' \varphi_0
 + \tfrac{1}{2} \edt' \tho' \varphi_0,\\
0={}&-2 \rho ' \tho' \varphi_0
 + \tho' \tho' \varphi_0
 - 2 \tau \edt \varphi_2
 + \edt \edt \varphi_2,
\end{align}
\end{subequations}
where $\varphi_{0}=-2 \kappa_1 \phi_0$ and $\varphi_{2}=2 \kappa_1 \phi_2$. 
\item 
The GHP form of equation \eqref{eq:ECCdEphi} is
\begin{subequations}
\begin{align}
0={}&- \tho \tho' \varphi_0
 + \rho \tho' \varphi_0
 + \bar{\rho} \tho' \varphi_0
 + \edt \edt' \varphi_0
 -  \tau \edt' \varphi_0
 -  \overline{\tau '} \edt' \varphi_0,\\
0={}&- \rho ' \tho \varphi_2
 -  \overline{\rho '} \tho \varphi_2
 + \tho' \tho \varphi_2
 + \bar{\tau} \edt \varphi_2
 + \tau ' \edt \varphi_2
 -  \edt' \edt \varphi_2.
\end{align}
\end{subequations}
\end{enumerate}
\end{lemma}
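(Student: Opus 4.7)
The plan is to carry out an explicit dyad projection of the two covariant equations \eqref{eq:bCdCdEphi} and \eqref{eq:ECCdEphi} using the standard GHP dictionary that converts the four fundamental operators $\sDiv,\sCurl,\sCurlDagger,\sTwist$ into the weighted derivatives $\tho,\tho',\edt,\edt'$ and associated spin-coefficient corrections. Both parts are structurally analogous and reduce to the same preliminary calculation, so I would begin by computing once and for all the GHP components of the 1-form $\eta_{AA'}=(\sCurlDagger_{2,0}\sExt_{2,0}\phi)_{AA'}$.

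First I would exploit the Petrov type $\PetrovD$ assumption. Since $\kappa_{AB}$ is then of algebraic type $\{1,1\}$, Lemma~\ref{lem:Ephi-extreme} says the symmetric spinor $\varphi_{AB}\equiv(\sExt_{2,0}\phi)_{AB}$ has vanishing middle NP scalar and extreme components $\varphi_0=-2\kappa_1\phi_0$, $\varphi_2=2\kappa_1\phi_2$. Using the GHP expansion of $(\sCurlDagger_{2,0}\varphi)_{AA'}=\nabla^{B}{}_{A'}\varphi_{AB}$ (which in the aligned dyad where $\kappa=\sigma=\kappa'=\sigma'=0$ simplifies drastically), the four dyad components of $\eta_{AA'}$ reduce to linear combinations of $\tho'\varphi_0,\edt'\varphi_0,\tho\varphi_2,\edt\varphi_2$ with prescribed weights.

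For part~1, I would then apply $\sCurlDagger_{1,1}$ to $\eta_{AA'}$ and read off the three NP components of the resulting symmetric $(0,2)$-spinor. Each of the three displayed equations corresponds to one of the choices of primed dyad indices $\{0'0',0'1',1'1'\}$, and the right-hand sides collapse to combinations $\tho\tho,\tho'\tho',\edt\edt,\edt'\edt'$ (plus cross terms in the middle equation) after the spin coefficients conspire via the GHP Ricci identities on type $\PetrovD$ backgrounds. For part~2, I would instead apply $\sCurl_{1,1}$ to $\eta_{AA'}$, obtaining an element of $\SymSpinSec_{2,0}$, and then use Lemma~\ref{lem:Ephi-extreme} once more: the outer $\sExt_{2,0}$ annihilates the middle component, leaving only the two extremal scalar equations. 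The right-hand side $\tfrac{2}{3}\hat{\mathcal{L}}_{\xi}\sExt_{2,0}\phi$ must be expanded in GHP form using the definition of $\hat{\mathcal{L}}_\xi$ together with \eqref{curleta1b2}, and a short manipulation shows the $\hat{\mathcal{L}}_\xi$ terms cancel against the connection terms generated by $\sCurl_{1,1}\sCurlDagger_{2,0}$ acting on the rescaled $\varphi_0,\varphi_2$, producing exactly the two displayed equations.

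The hard part will not be any single conceptual step but rather the faithful bookkeeping of spin coefficients: each application of a GHP operator to a weighted scalar generates correction terms involving $\rho,\tau,\rho',\tau'$ and their complex conjugates, and these must cancel against the corresponding terms produced by the Ricci identities and by commuting derivatives of $\kappa_1$ (which satisfies $\tho\kappa_1=-\rho\kappa_1$, $\edt\kappa_1=-\tau\kappa_1$ by the $\sTwist_{2,0}\kappa=0$ equation projected to the dyad). In practice I would perform the algebra using the SymManipulator package of \cite{Bae11a}, which is designed precisely for systematic irreducible decompositions in the GHP formalism; this reduces the proof to inspection of the resulting reduced expressions and verification that they match the right-hand sides claimed in the lemma.
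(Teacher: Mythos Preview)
Your proposal is correct and is essentially the same approach as the paper, which simply states that the lemma follows by projecting the covariant equations \eqref{eq:TME-TSI-cov} onto the dyad (``A calculation gives the following result''). You have supplied the natural structure of that calculation --- compute the dyad components of $\eta_{AA'}$ using Lemma~\ref{lem:Ephi-extreme} and the type~$\PetrovD$ spin-coefficient simplifications, then apply the outer $\sCurlDagger_{1,1}$ or $\sExt_{2,0}\sCurl_{1,1}$ and track the weighted derivatives --- and your suggestion to delegate the bookkeeping to SymManipulator is exactly in keeping with the paper's methodology.
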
 
\begin{remark} We see from \ref{lem:TSIGHP} that 
equation \eqref{eq:bCdCdEphi} is equivalent to the TSI for Maxwell given in scalar form in \cite[\S 5.4.2]{aksteiner:thesis}, while equation \eqref{eq:ECCdEphi} is equivalent to the scalar form of TME for Maxwell given in \eqref{eq:squareTME} above. 
\end{remark}

\section{A Morawetz estimate for the Maxwell field on Schwarzschild} \label{sec:morawetz} 

As discussed in section \ref{sec:geodesics}, one may construct a suitable function of the conserved quantities for null geodesics in the Kerr spacetime which is monotone along the geodesic flow. This function may be viewed as arising from a generalized vector field on phase space. The monotonicity property implies, as discussed there, that non-trapped null geodesics disperse, in the sense that they leave any stationary region in the Kerr space time.  As mentioned in section \ref{sec:geodesics}, in view of the geometric optics approximation for the wave equation, such a monotonicity property for null geodesics reflects the tendency for waves in the Kerr spacetime to disperse. 

At the level of the wave equation, the analogue of the just mentioned monotonicity estimate is called the Morawetz estimate. For the wave equation $\nabla^a \nabla_a \psi = 0$, a Morawetz estimate provides a current $J_a$ defined in terms of $\psi$ and some of its derivatives, with the property that $\nabla^a J_a$ has suitable positivity properties, and that the flux of $J_a$ can be controlled by a suitable energy defined in terms of the field. 

Let $\psi$ be a solution of the wave equation $\nabla^a \nabla_a \psi = 0$. 
The stress-energy tensor $T_{ab}$ for $\psi$ is 
$$
T_{ab} = \nabla_{(a} \psi \nabla_{b)} \bar \psi - \half \nabla^c \psi \nabla_c \bar \psi g_{ab}
$$

Define the current $J_a$ by 
$$
J_a = T_{ab} \vecMprimary^b +  \half \scalMprimary(\bar \psi \nabla_a \psi  + \psi \nabla_a \bar\psi)  - \half \nabla_a \scalMprimary \psi \bar \psi.
$$
We have 
\begin{equation}\label{eq:JJJJ}
\nabla^a J_a = T_{ab} \nabla^{(a} \vecMprimary^{b)} + \scalMprimary \nabla^c \psi \nabla_c \bar \psi - \half (\nabla^c \nabla_c \scalMprimary) \psi \bar \psi .
\end{equation}
We now specialize to Minkowski space, with the line element 
$g_{ab}dx^a dx^b = dt^2 - dr^2 - d\theta^2 - r^2 \sin^2\theta d\phi^2$. 
Let 
$$
E(\tau) = \int_{\{ t= \tau\}} T_{tt} d^3 x
$$
be the energy of the field at time $\tau$, where $T_{tt}$ is the energy density. The energy is conserved, so that $E(t)$ is independent of $t$. 

Setting $\vecMprimary^a = r(\partial_r)^a$, we have 
\begin{equation} \label{eq:deform} 
\nabla^{(a} \vecMprimary^{b)} = g^{ab} - (\partial_t)^a (\partial_t)^b . 
\end{equation} 
With $q = 1$,  we get
$$
\nabla^a J_a = - T_{tt} .
$$
With the above choices, the bulk term $\nabla^a  J_a$ has a sign. This method can be used to prove dispersion for solutions of the wave equation. In particular, by introducing suitable cutoffs, one finds that for any $R_0 > 0$, there is a constant $C$, so that  
\begin{align}
\int_{t_0}^{t_1} \int_{|r| \leq R_0}  T_{tt} d^3 x dt 
\leq C(E(t_0) +E(t_1)) 
\leq 2C E(t_0) , \label{eq:basicmor} 
\end{align}
see \cite{MR0234136}. 
The local energy, $\int_{|r| \leq R_0} T_{tt} d^3 x$, is a function of
time. By \eqref{eq:basicmor} it is integrable in $t$, and hence it must decay to zero
as $t\rightarrow\infty$, at least sequentially. This shows that the field disperses. 
Estimates of this type are called
Morawetz or integrated local energy decay estimates. 

For the Maxwell field on Minkowski space, we have 
$$
T_{ab} = \phi_{AB} \bar \phi_{A'B'}
$$
 with $T^a{}_a = 0$. Setting $J_a = T_{ab} \vecMprimary^b$, with $\vecMprimary^a = r(\partial_r)^a$ we have
$$
\nabla^a J_a = - T_{tt}
$$
which again gives local energy decay for the Maxwell field on Minkowski space.

For the wave equation on Schwarzschild we can choose  
\begin{subequations}
\begin{align}
\vecMprimary^{a}={}&\frac{(r - 3 M) (r - 2 M)}{3 r^2} (\partial_r)^{a} ,\\
\scalMprimary={}&\frac{6 M^2 - 7 M r + 2 r^2}{6 r^3}.
\end{align}
\end{subequations}
This gives

\begin{subequations}
\begin{align}\label{eq:Afirst}
-\nabla^{(a}\vecMprimary^{b)}={}&- \frac{M g^{ab} (r - 3 M)}{3 r^3}
 +  \frac{M (r - 2 M)^2 (\partial_r)^{a} (\partial_r)^{b}}{r^4}\nonumber\\
& +  \frac{(r - 3 M)^2 ((\partial_\theta)^{a} (\partial_\theta)^{b} + \csc^2\theta (\partial_\phi)^{a} (\partial_\phi)^{b})}{3 r^5},\\
- \nabla_{a}J^{a}={}&\frac{M |\partial_r \psi|^2 (r - 2 M)^2}{r^4}
 + \frac{\bigl(|\partial_\theta \psi|^2 + |\partial_\phi\psi|^2 \csc^2\theta\bigr) (r - 3 M)^2}{3 r^5}\nonumber\\
& + \frac{M |\psi|^2 (54 M^2 - 46 M r + 9 r^2)}{6 r^6}. \label{eq:divJ}
\end{align}
\end{subequations} 
Here, $\vecMprimary^a$ was chosen so that the last two terms \eqref{eq:Afirst} have good signs. The form of $\scalMprimary$ given here was chosen to eliminate the $|\partial_t \psi|^2$ term in \eqref{eq:divJ}. 
The first terms in \eqref{eq:divJ} are clearly non-negative, while the last is of lower-order and can be estimated using a Hardy estimate \cite{andersson:blue:0908.2265}.
The effect of trapping in Schwarzschild at $r=3M$ is manifested in the fact that the angular derivative term 
vanishes at $r=3M$. 

In the case of the wave equation on Kerr, the above argument using a classical vector field cannot work due to the complicated structure of the trapping. However, making use of higher-order currents constructed using second order symmetry operators for the wave equation, and a generalized Morawetz vector field analogous to the vector field $\vecMprimary^a$ as discussed in section \ref{sec:geodesics}. This approach has been carried out in detail in \cite{andersson:blue:0908.2265}. 

If we apply the same idea for the Maxwell field on Schwarzschild, there is no reason to expect that local energy decay should hold, in view of the fact that the Coulomb solution is a time-independent solution of the Maxwell equation which does not disperse. In fact, with 
\begin{align}
\vecMprimary^{a}={}& \fnMrMorawetz(r) \Bigl(1
 -  \frac{2 M}{r} \Bigr) (\partial_r)^{a},\\
- \phi^{AB} \bar{\phi}^{A'B'} (\sTwist_{1,1} P)_{ABA'B'}={}&\bigl(|\phi_{00}|^2 + |\phi_{11}|^2\bigr) \frac{(r - 2 M)}{2 r} \fnMrMorawetz'(r) \nonumber\\
& - \frac{|\phi_{01}|^2 \bigl(r (r-2M) \fnMrMorawetz'(r) -  2 \fnMrMorawetz(r) (r - 3 M) \bigr)}{r^2}.
\label{eq:ssss} 
\end{align}
If $\fnMrMorawetz'$ is chosen to be positive, then the coefficient of the extreme components in \eqref{eq:ssss} is positive. However, at $r=3M$, the coefficient of the middle component is necessarily of the opposite sign. It is possible to show that no choice of $\fnMrMorawetz$ will give positive coefficients for all components in \eqref{eq:ssss}. 

The dominant energy condition, that  $T_{ab} V^a W^b \geq 0$ for all causal vectors  $V^a, W^a$
is a common and important condition on stress energy tensors. 
In Riemannian geometry, a natural condition on a
symmetric $2$-tensor $T_{ab}$ would be non-negativity, i.e. the condition that
for all $X^a$, one has $T_{ab}X^a X^b\geq
0$. 

However, in order to prove dispersive estimates for null geodesics and the wave equation,
the dominant energy condition on its own is not sufficient and non-negativity cannot be expected for stress energy tensors. Instead, a useful condition to consider is non-negativity modulo trace
terms, i.e. the condition that for every $X^a$ there is a $q$ such that
$T_{ab} X^a X^b +q T^a{}_a \geq 0$. For null
geodesics and the wave equation, the tensors $\dot\gamma_a\dot\gamma_b$
and $\nabla_a u\nabla_b u=T_{ab} +T^\gamma{}_\gamma
g_{ab}$ are both non-negative, so
$\dot\gamma_a\dot\gamma_b$ and $T_{ab}$ are
non-negative modulo trace terms. 

From equation \eqref{eq:Afirst}, we see
that $-\nabla^{(a}A^{b)}$ is of the form $f_1 g^{ab}
+f_2\partial_r^a\partial_r^b
+f_3\partial_\theta^a\partial_\theta^b+f_4\partial_\phi^a\partial_\phi^b$ where $f_2$, $f_3$ and $f_4$ are
non-negative functions. That is $-\nabla^{(a}A^{b)}$ is a sum
of a multiple of the metric plus a sum of terms of the form of a
non-negative coefficient times a vector tensored with itself. Thus,
from the non-negativity modulo trace terms, for null geodesics and the
wave equation respectively, there are functions $q$ such that
$\dot\gamma_a\dot\gamma_b \nabla^a
A^b=\dot\gamma_a\dot\gamma_b \nabla^a A^b +q
g^{ab}\dot\gamma_a\dot\gamma_b \leq 0$ and
$T_{ab}\nabla^a A^b +qT^a{}_a\leq 0$. For
null geodesics, since
$g^{ab}\dot\gamma_a\dot\gamma_b=0$, the $q$ term can
be ignored. For the wave equation, one can use the terms involving $\scalMprimary$
in equations \eqref{eq:JJJJ}, to cancel the
$T^a{}_a$
term in $\nabla^a J_a$. For the wave equation, this gives
non-negativity for the first-order terms in $-\nabla^a J_a$,
and one can then hope to use a Hardy estimate to control the zeroth
order terms.

If we now consider the Maxwell equation, we have the fact that the Maxwell stress energy tensor is traceless, $T^a{}_a = 0$ and does not satisfy
the non-negativity condition. Therefore it also does not satisfy the condition of non-negativity modulo trace. This appears to be the
fundamental underlying obstruction to proving a Morawetz estimate
using $T_{ab}$. This can be seen as a manifestation of the fact that the Coloumb solution does not disperse. 

Attempts to overcome this problem were a
major motivating factor in our efforts to understand  conserved
currents and tensors for the Maxwell equation, other than the ones derived from $T_{ab}$. As we shall see, an important observation is that the tensor  
$$
U_{AA'BB'} = \tfrac{1}{2} \eta_{AB'} \bar{\eta}_{A'B}
 + \tfrac{1}{2} \eta_{BA'} \bar{\eta}_{B'A}
 $$
which is the leading order part of the higher-order conserved tensor $V_{ab}$ satisfies the non-negativity modulo trace terms condition.

\subsection{Morawetz for Maxwell using $V_{ab}$} 

In this section we shall apply the first order stress energy tensor $V_{ab}$ given in \eqref{eq:Vdef} to construct a Morawetz estimate for the Maxwell field on the Schwarzschild spacetime. In order to do this, we shall use a radial Morawetz vector field $\vecMprimary^{AA'}$ as explained above for the wave equation, together with lower-order correction terms with a scalar field $\scalMprimary$ analogous to the one used there. However, due to the cross terms in $V_{ab}$ involving $(\sExt_{2,0} \phi)_{AB} \Lie_\xi \overline{\phi}_{A'B'}$ and its complex conjugate, additional correction terms are needed. These are given below and involve the vector field $\vecMsecondary^{AA'}$ and the scalar function $\scalMsecondary$. 

Define  
\begin{subequations}
\begin{align}
J^{AA'}={}&\vecMprimary^{BB'} V^{A}{}_{B}{}^{A'}{}_{B'} \nonumber \\ 
& + \tfrac{1}{2} \scalMprimary \bar{\eta}^{A'}{}_{B} (\sExt_{2,0}\phi)^{AB}
 + \tfrac{1}{2} \scalMprimary \eta^{A}{}_{B'} (\bar\sExt_{0,2}\bar\phi)^{A'B'} 
 + \tfrac{1}{2} (\sExt_{2,0}\phi)^{AB} (\bar\sExt_{0,2}\bar\phi)^{A'B'} (\sTwist_{0,0} \scalMprimary)_{BB'}
 \nonumber \\
&+ \vecMsecondary_{BB'} (\sExt_{2,0}\phi)^{AB} (\bar\sExt_{0,2}\bar\phi)^{A'B'} 
+ \scalMsecondary W \xi^{AA'} ,\\
\intertext{where} 
W={}&((\sExt_{2,0}\phi)_{AC} (\bar\sExt_{0,2}\bar\phi)_{A'B'} \kappa_{B}{}^{C}
 + (\sExt_{2,0}\phi)_{AB} (\bar\sExt_{0,2}\bar\phi)_{A'C'} \bar{\kappa}_{B'}{}^{C'}) \xi^{AA'} \xi^{BB'}.
 \label{eq:Wdef} 
\end{align}
\end{subequations}
From this, we get 
\begin{align}
\xi^{AA'} (\sTwist_{0,0} W)_{AA'}={}&O_{ABA'B'} (\bar\sExt_{0,2}\bar\phi)^{A'B'} \Xi^{AB}
 + \overline{O}_{A'B'AB} (\sExt_{2,0}\phi)^{AB} \overline{\Xi}^{A'B'},
\end{align}
where
\begin{subequations}
\begin{align}
\Xi_{AB}={}&\tfrac{1}{2} \mathcal{L}_{\xi}\phi_{AB}
 -  \frac{\mathcal{L}_{\xi}\phi^{CD} \kappa_{AC} \kappa_{BD}}{(\kappa_{AB} \kappa^{AB})},\\
O_{AB}{}^{A'B'}={}&(\kappa_{AB} \kappa^{AB}) \xi_{(A}{}^{(A'}\xi_{B)}{}^{B')}
 + 2 \kappa_{A}{}^{C} \bar{\kappa}^{A'}{}_{C'} \xi_{(B}{}^{(B'}\xi_{C)}{}^{C')}.
\end{align}
\end{subequations}
We remark that $\Xi_{AB}$ contains only the extreme components of $\Lie_\xi \phi_{AB}$ and has vanishing middle component.   
The divergence of the current, $\nabla^a J_a = \sDiv_{1,1} J$, is given by
\begin{align}
- (\sDiv_{1,1} J)={}& -  V^{ABA'B'} (\sTwist_{1,1} \vecMprimary)_{ABA'B'} + \tfrac{1}{4} \eta^{BB'} \bar{\eta}_{B'B} (\sDiv_{1,1} \vecMprimary) \nonumber - \scalMprimary \eta^{BB'} \bar{\eta}_{B'B} \nonumber \\
&
 -  \vecMsecondary^{BB'} \bar{\eta}_{B'}{}^{A} (\sExt_{2,0}\phi)_{BA}
 -  \vecMsecondary^{BB'} \eta_{B}{}^{A'} (\bar\sExt_{0,2}\bar\phi)_{B'A'}
 -  \scalMsecondary \xi^{BB'} (\sTwist_{0,0} W)_{BB'}
 \nonumber\\
& 
 -  (\sExt_{2,0}\phi)^{BA} (\bar\sExt_{0,2}\bar\phi)^{B'A'} (\sTwist_{1,1} \vecMsecondary)_{BAB'A'}
 -  \tfrac{1}{2} (\sExt_{2,0}\phi)^{BA} (\bar\sExt_{0,2}\bar\phi)^{B'A'} (\sTwist_{1,1} \sTwist_{0,0} \scalMprimary)_{BAB'A'}
 \nonumber \\ 
 & -  W \xi^{BB'} (\sTwist_{0,0} \scalMsecondary)_{BB'} .
\end{align}
The fields $\vecMprimary^{AA'}$, $\vecMsecondary^{AA'}$, $\scalMprimary$ and $\scalMsecondary$ can then be engineered to obtain a sign for the integrated divergence of $J^{AA'}$. 
The fields $\vecMprimary^{AA'}$ and $\scalMprimary$ are chosen so that the terms involving $\vecMprimary^{AA'}$, $\scalMprimary$ give a non-negative quadratic form in $\eta_{AA'}$. 
Terms analogous to those generated by the cross terms in $V_{ab}$ of the form $(\sExt_{2,0}\phi)_{AB} \Lie_\xi \bar{\phi}_{A'B'}$ are not present in the case of the wave equation. These are controlled using the scalar $\scalMsecondary$. Here it is important to note that the terms involve $\Xi_{AB}$ can be related to those involving $\Lie_\xi \phi_{AB}$ using the identity
$$
 \mathcal{L}_{\xi}\phi_{AB} = \frac{\eta^{CA'} \kappa_{AB} \xi_{CA'}}{(\kappa_{AB} \kappa^{AB})} + \Xi_{AB}.
$$
This is related to the fact that the middle component of $\Lie_\xi \phi_{AB}$ can be written in terms of $\eta_{AA'}$, cf. \eqref{eq:etaLphi}.
The vector $\vecMsecondary^{AA'}$ together with a subsequent modification of $\scalMprimary$ allows us to write a quadratic form not in $\eta_{AA'}$, but in $\eta_{AA'}$ plus lower-order terms. This allows us to modify the remaining term quadratic in $(\sExt_{2,0}\phi)_{AB}$ so it can be estimated with a Hardy estimate on the sphere.

Recall that the Schwarzschild metric is the Kerr metric with $a=0$. Choosing the principal tetrad in Schwarzschild given by specializing \eqref{eq:KerrTetrad} to $a = 0$, gives in a standard manner an orthonormal frame, 
\begin{align*}
T^{AA'}\equiv{}&\tfrac{1}{\sqrt{2}}(o^{A} \bar o^{A'}
 + \iota^{A} \bar\iota^{A'}),&
X^{AA'}\equiv{}&\tfrac{1}{\sqrt{2}}(\bar o^{A'} \iota^{A}
 + o^{A} \bar\iota^{A'}),\\
Y^{AA'}\equiv{}&\tfrac{i}{\sqrt{2}}(- \bar o^{A'} \iota^{A}
 + o^{A} \bar\iota^{A'}),&
Z^{AA'}\equiv{}&\tfrac{1}{\sqrt{2}}(o^{A} \bar o^{A'}
 -  \iota^{A} \bar\iota^{A'}).
\end{align*}
Choosing the vector fields $\vecMprimary^{AA'}, \vecMsecondary^{AA'}$ to be radial vector fields, with radial coefficients, and the scalars $\scalMprimary, \scalMsecondary$ as radial functions, where  $\scalMsecondary$ is in addition chosen to eliminate cross terms involving $\Lie_\xi \phi_{AB} \overline{\Xi}_{A'B'}$, the divergence $\sDiv_{1,1} J$ can be written in the form 
\begin{align}
- (\sDiv_{1,1} J)={}&\zeta |T^{AA'} \eta_{AA'}|^{2}
 + \zeta |Z^{AA'} \eta_{AA'}|^{2}
 + E[\eta_{AA'} + \beta Z^{C}{}_{A'} (\sExt_{2,0}\phi)_{AC}]\nonumber\\
& + \varsigma T^{AA'} T^{BB'} (\sExt_{2,0}\phi)_{AB} (\bar{\sExt}_{0,2} \bar{\phi})_{A'B'} ,
\label{eq:DivJMorawetz2}
\end{align}
where $\zeta$, $\beta$ and $\varsigma$ are radial functions completely determined $\vecMprimary^{AA'}, \vecMsecondary^{AA'}, \scalMprimary$, and where $E$ is a quadratic expression in its argument of the form 
\begin{align}
E[\nu_{AA'}]={}&\mathscr{A} \nu^{AA'} \bar{\nu}^{B'B} T_{(A|A'|}T_{B)B'} + \mathscr{B} \nu^{AA'} \bar{\nu}^{B'B} Z_{(A|A'|}Z_{B)B'}  
 + \mathscr{C} \nu^{AA'} \bar{\nu}_{A'A} , \label{eq:Eform}
\end{align}
where the coefficients depend on the choice of $\vecMprimary^{AA'}, \vecMsecondary^{AA'}, \scalMprimary$. We note that the first two terms in \eqref{eq:DivJMorawetz2} are non-negative provided $\zeta$ is non-negative. Further, it is possible to arrange that $E$ given by \eqref{eq:Eform} is non-negative, at the same time as $\zeta$ is non-negative. 

In Schwarzschild, the vector $T^{AA'}$ is timelike outside the horizon, and hence due to the fact that the tensor $(\sExt_{2,0}\phi)_{AB} (\bar{\sExt}_{0,2} \bar{\phi})_{A'B'}$ has the dominant energy condition, the expression $T^{AA'} T^{BB'} (\sExt_{2,0}\phi)_{AB} (\bar{\sExt}_{0,2} \bar{\phi})_{A'B'}$ is positive outside the horizon. However, it is not possible to choose the coefficient function $\varsigma$ to be non-negative everywhere. 
In order to overcome this obstacle, we use a Hardy estimate on the sphere of radius $r$, 
\begin{align}
\int_{S_r}|T^{AA'}\eta_{AA'}|^2 + |Z^{AA'}\eta_{AA'}|^2 \mbox{d}\mu_{S_r}
\geq{}& \frac{2}{r^2}\int_{S_r}T^{AA'}T^{BB'}(\sExt_{2,0}\phi)_{AB}(\bar\sExt_{0,2}\bar\phi)_{A'B'} \mbox{d}\mu_{S_r}.
\end{align}
This estimate, together with the positivity of the first two terms in \eqref{eq:DivJMorawetz2} allows us to to compensate for the negative part in the last term, after integration. 

Putting these ideas together, it is possible to prove a Morawetz estimate for the Maxwell equation on the Schwarzschild spacetime. In the paper \cite{2015arXiv150104641A} we give a complete proof of a Morawetz estimate using a slight modification of $V_{ab}$ which is not conserved, but which simpler and still gives a conserved energy current in the Schwarzschild case.  

If we consider the Maxwell field on the Kerr spacetime, the approach based on $V_{ab}$ outlined above generalizes. However, as in the case of the wave equation on Kerr, one is faced with difficulties caused by the complicated trapping. As for the wave equation, one expects that higher-order currents constructed using the second order symmetry operators for the Maxwell field discussed in this paper, cf. \cite{ABB:symop:2014CQGra..31m5015A}, can be applied along the lines of 
\cite{andersson:blue:0908.2265} to prove a Morawetz estimate for the Maxwell field also in this case, see 
\cite{ABB:futurekerr}.

\subsection*{Acknowledgements} We are grateful to Steffen Aksteiner and Walter Simon for discussions and helpful remarks.  L.A. thanks the Mathematical Sciences Center, Tsinghua University, Beijing, for hospitality during part of this work. T.B. and P.B. acknowledges support from grant EP/J011142/1 from the Engineering and Physical Sciences Research Council. L.A. acknowledges support from the Knut and Alice Wallenberg foundation.

\newcommand{\arxivref}[1]{\href{http://www.arxiv.org/abs/#1}{{arXiv.org:#1}}}
\newcommand{\mnras}{Monthly Notices of the Royal Astronomical Society}
\newcommand{\prd}{Phys. Rev. D} 
%\bibliographystyle{abbrv} %{habbrv}
%\bibliography{jub}

\begin{thebibliography}{10}

\bibitem{aksteiner:thesis}
S.~Aksteiner.
\newblock {\em Geometry and analysis in black hole spacetimes}.
\newblock PhD thesis, Gottfried Wilhelm Leibniz Universit\"at Hannover, 2014.

\bibitem{aksteiner:andersson:2011CQGra..28f5001A}
S.~{Aksteiner} and L.~{Andersson}.
\newblock {Linearized gravity and gauge conditions}.
\newblock {\em Classical and Quantum Gravity}, 28(6):065001, Mar. 2011.
\newblock \arxivref{1009.5647}.

\bibitem{aksteiner:andersson:2013CQGra..30o5016A}
S.~{Aksteiner} and L.~{Andersson}.
\newblock {Charges for linearized gravity}.
\newblock {\em Classical and Quantum Gravity}, 30(15):155016, Aug. 2013.
\newblock \arxivref{1301.2674}.

\bibitem{2013arXiv1304.0487A}
S.~Alexakis, A.~D. Ionescu, and S.~Klainerman.
\newblock Rigidity of stationary black holes with small angular momentum on the
  horizon.
\newblock {\em Duke Math. J.}, 163(14):2603--2615, 2014.
\newblock \arxivref{1304.0487}.

\bibitem{anco:pohjanpelto:2001:MR1885280}
S.~C. Anco and J.~Pohjanpelto.
\newblock Classification of local conservation laws of {M}axwell's equations.
\newblock {\em Acta Appl. Math.}, 69(3):285--327, 2001.

\bibitem{anco:pohjanpelto:2003:ProcRSoc:MR1997098}
S.~C. Anco and J.~Pohjanpelto.
\newblock Conserved currents of massless fields of spin {$s\geq\frac 12$}.
\newblock {\em R. Soc. Lond. Proc. Ser. A Math. Phys. Eng. Sci.},
  459(2033):1215--1239, 2003.

\bibitem{anco:pohjanpelto:2003:CRM:MR2056970}
S.~C. Anco and J.~Pohjanpelto.
\newblock Symmetries and currents of massless neutrino fields, electromagnetic
  and graviton fields.
\newblock In {\em Symmetry in physics}, volume~34 of {\em CRM Proc. Lecture
  Notes}, pages 1--12. Amer. Math. Soc., Providence, RI, 2004.
\newblock math-ph/0306072.

\bibitem{anco:the:2005:AAM:MR2220197}
S.~C. Anco and D.~The.
\newblock Symmetries, conservation laws, and cohomology of {M}axwell's
  equations using potentials.
\newblock {\em Acta Appl. Math.}, 89(1-3):1--52 (2006), 2005.

\bibitem{ABB:currents}
L.~{Andersson}, T.~{B{\"a}ckdahl}, and P.~{Blue}.
\newblock {Conserved currents}.
\newblock In preparation.

\bibitem{ABB:futurekerr}
L.~{Andersson}, T.~{B{\"a}ckdahl}, and P.~{Blue}.
\newblock {Decay estimates for the Maxwell field on the Kerr spacetime}.
\newblock In preparation.

\bibitem{2014arXiv1412.2960A}
L.~{Andersson}, T.~{B{\"a}ckdahl}, and P.~{Blue}.
\newblock {A new tensorial conservation law for Maxwell fields on the Kerr
  background}.
\newblock 2014.
\newblock \arxivref{1412.2960}.

\bibitem{ABB:symop:2014CQGra..31m5015A}
L.~{Andersson}, T.~{B{\"a}ckdahl}, and P.~{Blue}.
\newblock {Second order symmetry operators}.
\newblock {\em Classical and Quantum Gravity}, 31(13):135015, July 2014.
\newblock \arxivref{1402.6252}.

\bibitem{2015arXiv150104641A}
L.~{Andersson}, T.~{B{\"a}ckdahl}, and P.~{Blue}.
\newblock {Decay of solutions to the Maxwell equation on the Schwarzschild
  background}.
\newblock Jan. 2015.
\newblock \arxivref{1501.04641}.

\bibitem{andersson:blue:0908.2265}
L.~{Andersson} and P.~{Blue}.
\newblock {Hidden symmetries and decay for the wave equation on the Kerr
  spacetime}.
\newblock Aug. 2009.
\newblock \arxivref{0908.2265}.

\bibitem{andersson:blue:2013arXiv1310.2664A}
L.~{Andersson} and P.~{Blue}.
\newblock {Uniform energy bound and asymptotics for the Maxwell field on a
  slowly rotating Kerr black hole exterior}.
\newblock Oct. 2013.
\newblock \arxivref{1310.2664}.

\bibitem{Bae11a}
T.~{B\"{a}ckdahl}.
\newblock Sym{M}anipulator, 2011-2014.
\newblock
  \href{http://www.xact.es/SymManipulator}{http://www.xact.es/SymManipulator}.

\bibitem{backdahl:valiente-kroon:2012JMP....53d2503B}
T.~{B{\"a}ckdahl} and J.~A.~V. {Kroon}.
\newblock {Constructing ``non-Kerrness'' on compact domains}.
\newblock {\em Journal of Mathematical Physics}, 53(4):042503, Apr. 2012.

\bibitem{backdahl:valiente-kroon:2010PhRvL.104w1102B}
T.~{B{\"a}ckdahl} and J.~A. {Valiente Kroon}.
\newblock {Geometric Invariant Measuring the Deviation from Kerr Data}.
\newblock {\em Physical Review Letters}, 104(23):231102, June 2010.

\bibitem{backdahl:valiente-kroon:2010:MR2753388}
T.~B{\"a}ckdahl and J.~A. Valiente~Kroon.
\newblock On the construction of a geometric invariant measuring the deviation
  from {K}err data.
\newblock {\em Ann. Henri Poincar\'e}, 11(7):1225--1271, 2010.

\bibitem{backdahl:valente-kroon:2011RSPSA.467.1701B}
T.~{B{\"a}ckdahl} and J.~A. {Valiente Kroon}.
\newblock {The 'non-Kerrness' of domains of outer communication of black holes
  and exteriors of stars}.
\newblock {\em Royal Society of London Proceedings Series A}, 467:1701--1718,
  June 2011.
\newblock \arxivref{1010.2421}.

\bibitem{Baer:MR1224089}
C.~B{\"a}r.
\newblock Real {K}illing spinors and holonomy.
\newblock {\em Comm. Math. Phys.}, 154(3):509--521, 1993.

\bibitem{chrusciel:bartnik:2003math......7278C}
R.~A. Bartnik and P.~T. Chru{\'s}ciel.
\newblock Boundary value problems for {D}irac-type equations.
\newblock {\em J. Reine Angew. Math.}, 579:13--73, 2005.
\newblock \arxivref{math/0307278}.

\bibitem{Batista:2012}
C.~{Batista}.
\newblock {Weyl tensor classification in four-dimensional manifolds of all
  signatures}.
\newblock {\em General Relativity and Gravitation}, 45:785--798, Apr. 2013.
\newblock \arxivref{1204.5133}.

\bibitem{beig:chrusciel:1996JMP....37.1939B}
R.~{Beig} and P.~T. {Chru{\'s}ciel}.
\newblock {Killing vectors in asymptotically flat space-times. I.
  Asymptotically translational Killing vectors and the rigid positive energy
  theorem}.
\newblock {\em Journal of Mathematical Physics}, 37:1939--1961, Apr. 1996.

\bibitem{beig:omurchadha:1987AnPhy.174..463B}
R.~Beig and N.~{\'O}~Murchadha.
\newblock The {P}oincar\'e group as the symmetry group of canonical general
  relativity.
\newblock {\em Ann. Physics}, 174(2):463--498, 1987.

\bibitem{bergqvist:1999CMaPh.207..467B}
G.~{Bergqvist}.
\newblock {Positivity of General Superenergy Tensors}.
\newblock {\em Communications in Mathematical Physics}, 207:467--479, 1999.

\bibitem{bergqvist:etal:2003CQGra..20.2663B}
G.~{Bergqvist}, I.~{Eriksson}, and J.~M.~M. {Senovilla}.
\newblock {New electromagnetic conservation laws}.
\newblock {\em Classical and Quantum Gravity}, 20:2663--2668, July 2003.
\newblock \arxivref{gr-qc/0303036}.

\bibitem{bini:etal:2002PThPh.107..967B}
D.~{Bini}, C.~{Cherubini}, R.~T. {Jantzen}, and R.~{Ruffini}.
\newblock {Teukolsky Master Equation ---de Rham Wave Equation for the
  Gravitational and Electromagnetic Fields in Vacuum---}.
\newblock {\em Progress of Theoretical Physics}, 107:967--992, May 2002.
\newblock \arxivref{gr-qc/0203069}.

\bibitem{bini:etal:2003IJMPD..12.1363B}
D.~{Bini}, C.~{Cherubini}, R.~T. {Jantzen}, and R.~{Ruffini}.
\newblock {De {R}ham Wave Equation for Tensor Valued $p$-forms}.
\newblock {\em International Journal of Modern Physics D}, 12:1363--1384, 2003.

\bibitem{bray:MR1908823}
H.~L. Bray.
\newblock Proof of the {R}iemannian {P}enrose inequality using the positive
  mass theorem.
\newblock {\em J. Differential Geom.}, 59(2):177--267, 2001.

\bibitem{Buchdahl58}
H.~Buchdahl.
\newblock On the compatibility of relativistic wave equations for particles of
  higher spin in the presence of a gravitational field.
\newblock {\em Il Nuovo Cimento}, 10(1):96--103, 1958.

\bibitem{carter:1968PhRv..174.1559C}
B.~{Carter}.
\newblock {Global Structure of the Kerr Family of Gravitational Fields}.
\newblock {\em Physical Review}, 174:1559--1571, Oct. 1968.

\bibitem{chandrasekhar:MR1210321}
S.~Chandrasekhar.
\newblock {\em The mathematical theory of black holes}, volume~69 of {\em
  International Series of Monographs on Physics}.
\newblock The Clarendon Press, Oxford University Press, New York, 1992.
\newblock Revised reprint of the 1983 original, Oxford Science Publications.

\bibitem{christodoulou:MR885564}
D.~Christodoulou.
\newblock A mathematical theory of gravitational collapse.
\newblock {\em Comm. Math. Phys.}, 109(4):613--647, 1987.

\bibitem{1993gnsm.book.....C}
D.~{Christodoulou} and S.~{Klainerman}.
\newblock {\em {The global nonlinear stability of the Minkowski space}}.
\newblock 1993.

\bibitem{cohen:kegeles:1974PhRvD..10.1070C}
J.~M. {Cohen} and L.~S. {Kegeles}.
\newblock {Electromagnetic fields in curved spaces: A constructive procedure}.
\newblock {\em \prd}, 10:1070--1084, Aug. 1974.

\bibitem{1976IJTP...15..311C}
C.~D. {Collinson}.
\newblock {On the Relationship between Killing Tensors and Killing-Yano
  Tensors}.
\newblock {\em International Journal of Theoretical Physics}, 15:311--314, May
  1976.

\bibitem{dafermos:rodnianski:etal:2014arXiv1402.7034D}
M.~{Dafermos}, I.~{Rodnianski}, and Y.~{Shlapentokh-Rothman}.
\newblock {Decay for solutions of the wave equation on Kerr exterior spacetimes
  III: The full subextremal case $|a| < {M}$}.
\newblock Feb. 2014.
\newblock \arxivref{1402.7034}.

\bibitem{dai:wang:wei:MR2178660}
X.~Dai, X.~Wang, and G.~Wei.
\newblock On the stability of {R}iemannian manifold with parallel spinors.
\newblock {\em Invent. Math.}, 161(1):151--176, 2005.

\bibitem{edgar:etal:2009CQGra..26j5022E}
S.~B. {Edgar}, A.~G.-P. {G{\'o}mez-Lobo}, and J.~M.
  {Mart{\'{\i}}n-Garc{\'{\i}}a}.
\newblock {Petrov D vacuum spaces revisited: identities and invariant
  classification}.
\newblock {\em Classical and Quantum Gravity}, 26(10):105022, May 2009.
\newblock \arxivref{0812.1232}.

\bibitem{ferrando:saez:2007JMP....48j2504F}
J.~J. {Ferrando} and J.~A. {S{\'a}ez}.
\newblock {On the invariant symmetries of the $\mathcal{D}$-metrics}.
\newblock {\em Journal of Mathematical Physics}, 48(10):102504, Oct. 2007.
\newblock \arxivref{0706.3301}.

\bibitem{ferrando:saez:2009CQGra..26g5013F}
J.~J. {Ferrando} and J.~A. {S{\'a}ez}.
\newblock {An intrinsic characterization of the Kerr metric}.
\newblock {\em Classical and Quantum Gravity}, 26(7):075013, Apr. 2009.
\newblock \arxivref{0812.3310}.

\bibitem{finster:etal:MR2525736}
F.~Finster, N.~Kamran, J.~Smoller, and S.-T. Yau.
\newblock Linear waves in the {K}err geometry: a mathematical voyage to black
  hole physics.
\newblock {\em Bull. Amer. Math. Soc. (N.S.)}, 46(4):635--659, 2009.

\bibitem{finster:etal:energyextraction:MR2486663}
F.~Finster, N.~Kamran, J.~Smoller, and S.-T. Yau.
\newblock A rigorous treatment of energy extraction from a rotating black hole.
\newblock {\em Comm. Math. Phys.}, 287(3):829--847, 2009.

\bibitem{fushchich:nikitin:1992:JPhysA:MR1154854}
W.~I. Fushchich and A.~G. Nikitin.
\newblock The complete sets of conservation laws for the electromagnetic field.
\newblock {\em J. Phys. A}, 25(5):L231--L233, 1992.

\bibitem{2008PhRvD..77b4035G}
J.~R. {Gair}, C.~{Li}, and I.~{Mandel}.
\newblock {Observable properties of orbits in exact bumpy spacetimes}.
\newblock {\em \prd}, 77(2):024035, Jan. 2008.
\newblock \arxivref{0708.0628}.

\bibitem{Ger70spinstructII}
R.~Geroch.
\newblock Spinor structure of space-times in general relativity. {II}.
\newblock {\em Journal of Mathematical Physics}, 11(1):343--348, 1970.

\bibitem{GHP}
R.~{Geroch}, A.~{Held}, and R.~{Penrose}.
\newblock {A space-time calculus based on pairs of null directions}.
\newblock {\em Journal of Mathematical Physics}, 14:874--881, July 1973.

\bibitem{hacyan:1979PhLA...75...23H}
S.~{Hacyan}.
\newblock {Gravitational instantons in H-spaces}.
\newblock {\em Physics Letters A}, 75:23--24, Dec. 1979.

\bibitem{huisken:ilmanen:MR1916951}
G.~Huisken and T.~Ilmanen.
\newblock The inverse mean curvature flow and the {R}iemannian {P}enrose
  inequality.
\newblock {\em J. Differential Geom.}, 59(3):353--437, 2001.

\bibitem{kalnins:etal:1989JMP....30.2925K}
E.~G. {Kalnins}, W.~{Miller}, Jr., and G.~C. {Williams}.
\newblock {Teukolsky-Starobinsky identities for arbitrary spin}.
\newblock {\em Journal of Mathematical Physics}, 30:2925--2929, Dec. 1989.

\bibitem{Karlhede:1986}
A.~{Karlhede}.
\newblock {LETTER TO THE EDITOR: Classification of Euclidean metrics}.
\newblock {\em Classical and Quantum Gravity}, 3:L1--L4, Jan. 1986.

\bibitem{kerr:1963PhRvL..11..237K}
R.~P. {Kerr}.
\newblock {Gravitational Field of a Spinning Mass as an Example of
  Algebraically Special Metrics}.
\newblock {\em Physical Review Letters}, 11:237--238, Sept. 1963.

\bibitem{killing:1892}
W.~Killing.
\newblock {\"Uber} die grundlagen der geometrie.
\newblock {\em J. f\"ur den Reine und Angew. Mathematik}, 109:121--186, 1892.

\bibitem{kinnersley:1969JMP....10.1195K}
W.~{Kinnersley}.
\newblock {Type D Vacuum Metrics}.
\newblock {\em Journal of Mathematical Physics}, 10:1195--1203, July 1969.

\bibitem{lewandowski:1991CQGra...8L..11L}
J.~{Lewandowski}.
\newblock {Twistor equation in a curved spacetime}.
\newblock {\em Classical and Quantum Gravity}, 8:L11--L17, Jan. 1991.

\bibitem{2005CMaPh.256...43L}
H.~{Lindblad} and I.~{Rodnianski}.
\newblock {Global Existence for the Einstein Vacuum Equations in Wave
  Coordinates}.
\newblock {\em Communications in Mathematical Physics}, 256:43--110, May 2005.
\newblock \arxivref{math/0312479}.

\bibitem{lipkin:1964:MR0162484}
D.~M. Lipkin.
\newblock Existence of a new conservation law in electromagnetic theory.
\newblock {\em J. Mathematical Phys.}, 5:696--700, 1964.

\bibitem{lousto:whiting:2002PhRvD..66b4026L}
C.~O. {Lousto} and B.~F. {Whiting}.
\newblock {Reconstruction of black hole metric perturbations from the Weyl
  curvature}.
\newblock {\em \prd}, 66(2):024026, July 2002.

\bibitem{2010PhRvD..81l4005L}
G.~{Lukes-Gerakopoulos}, T.~A. {Apostolatos}, and G.~{Contopoulos}.
\newblock {Observable signature of a background deviating from the Kerr
  metric}.
\newblock {\em \prd}, 81(12):124005, June 2010.
\newblock \arxivref{1003.3120}.

\bibitem{mars:2000CQGra..17.3353M}
M.~{Mars}.
\newblock {Uniqueness properties of the Kerr metric}.
\newblock {\em Classical and Quantum Gravity}, 17:3353--3373, Aug. 2000.

\bibitem{xAct}
J.~M. Mart\'{\i}n-Garc\'{\i}a.
\newblock x{A}ct: {E}fficient tensor computer algebra for {M}athematica,
  2002-2014.
\newblock \href{http://www.xact.es}{http://www.xact.es}.

\bibitem{michel:radoux:silhan:2013arXiv1308.1046M}
J.-P. {Michel}, F.~{Radoux}, and J.~{{\v S}ilhan}.
\newblock {Second Order Symmetries of the Conformal Laplacian}.
\newblock {\em SIGMA}, 10:16, Feb. 2014.
\newblock \arxivref{1308.1046}.

\bibitem{MR0234136}
C.~S. Morawetz.
\newblock Time decay for the nonlinear {K}lein-{G}ordon equations.
\newblock {\em Proc. Roy. Soc. Ser. A}, 306:291--296, 1968.

\bibitem{FrolovNovikov}
I.~D. {Novikov} and V.~P. {Frolov}.
\newblock {\em {Physics of black holes}}.
\newblock (Fizika chernykh dyr, Moscow, Izdatel'stvo Nauka, 1986, 328 p)
  Dordrecht, Netherlands, Kluwer Academic Publishers, 1989, 351
  p.~Translation.~Previously cited in issue 19, p.~3128, Accession
  no.~A87-44677., 1989.

\bibitem{ONeill}
B.~O'Neill.
\newblock {\em The geometry of {K}err black holes}.
\newblock A K Peters Ltd., Wellesley, MA, 1995.

\bibitem{penrose:1973NYASA.224..125P}
R.~{Penrose}.
\newblock {Naked Singularities}.
\newblock In D.~J. {Hegyi}, editor, {\em Sixth Texas Symposium on Relativistic
  Astrophysics}, volume 224 of {\em Annals of the New York Academy of
  Sciences}, page 125, 1973.

\bibitem{Penrose:1986fk}
R.~Penrose and W.~Rindler.
\newblock {\em {Spinors and Space-time I {\&} II}}.
\newblock Cambridge Monographs on Mathematical Physics. Cambridge University
  Press, Cambridge, 1986.

\bibitem{perjes:lukacs_2005AIPC..767..306P}
Z.~{Perj{\'e}s} and {\'A}.~{Luk{\'a}cs}.
\newblock {Canonical Quantization and Black Hole Perturbations}.
\newblock In J.~{Lukierski} and D.~{Sorokin}, editors, {\em Fundamental
  Interactions and Twistor-Like Methods}, volume 767 of {\em American Institute
  of Physics Conference Series}, pages 306--315, Apr. 2005.

\bibitem{poisson:toolkit}
E.~Poisson.
\newblock {\em A relativist's toolkit}.
\newblock Cambridge University Press, Cambridge, 2004.
\newblock The mathematics of black-hole mechanics.

\bibitem{ryan:1974PhRvD..10.1736R}
M.~P. {Ryan}.
\newblock {Teukolsky equation and Penrose wave equation}.
\newblock {\em \prd}, 10:1736--1740, Sept. 1974.

\bibitem{senovilla:2000CQGra..17.2799S}
J.~M.~M. {Senovilla}.
\newblock {Super-energy tensors}.
\newblock {\em Classical and Quantum Gravity}, 17:2799--2841, July 2000.

\bibitem{stephani:etal:2009esef.book.....S}
H.~Stephani, D.~Kramer, M.~MacCallum, C.~Hoenselaers, and E.~Herlt.
\newblock {\em Exact solutions of {E}instein's field equations}.
\newblock Cambridge Monographs on Mathematical Physics. Cambridge University
  Press, Cambridge, second edition, 2003.

\bibitem{TataruTohaneanu}
D.~Tataru and M.~Tohaneanu.
\newblock A local energy estimate on {K}err black hole backgrounds.
\newblock {\em Int. Math. Res. Not.}, (2):248--292, 2011.

\bibitem{teukolsky:1972PhRvL..29.1114T}
S.~A. {Teukolsky}.
\newblock {Rotating Black Holes: Separable Wave Equations for Gravitational and
  Electromagnetic Perturbations}.
\newblock {\em Physical Review Letters}, 29:1114--1118, Oct. 1972.

\bibitem{teukolsky:1973}
S.~A. {Teukolsky}.
\newblock {Perturbations of a Rotating Black Hole. I. Fundamental Equations for
  Gravitational, Electromagnetic, and Neutrino-Field Perturbations}.
\newblock {\em Astrophysical J.}, 185:635--648, Oct. 1973.

\bibitem{walker:penrose:1970CMaPh..18..265W}
M.~{Walker} and R.~{Penrose}.
\newblock {On quadratic first integrals of the geodesic equations for type
  $\{$2,2$\}$ spacetimes}.
\newblock {\em Communications in Mathematical Physics}, 18:265--274, Dec. 1970.

\bibitem{M.Y.Wang:1989:MR1029845}
M.~Y. Wang.
\newblock Parallel spinors and parallel forms.
\newblock {\em Ann. Global Anal. Geom.}, 7(1):59--68, 1989.

\bibitem{M.Y.Wang:1991:MR1129331}
M.~Y. Wang.
\newblock Preserving parallel spinors under metric deformations.
\newblock {\em Indiana Univ. Math. J.}, 40(3):815--844, 1991.

\bibitem{znajek:1977MNRAS.179..457Z}
R.~L. {Znajek}.
\newblock {Black hole electrodynamics and the Carter tetrad}.
\newblock {\em \mnras}, 179:457--472, May 1977.

\end{thebibliography}

\end{document}